\newtheorem{teor}{Theorem}
\newtheorem{prop}{Proposition}
\newtheorem{corol}{Corollary}
\newtheorem{lem}{Lemma}
\newtheorem{definition}{Definition}
\def\qed{\ifvmode\Realemovelastskip\fi
{\unskip\nobreak\hfil\penalty50\hbox{}\nobreak\hfil \hbox{\vrule
height1.2ex width1.2ex}\parfillskip=0pt \finalhyphendemerits=0
\par\smallskip}}
\def\qedr{\ifvmode\Realemovelastskip\fi
{\unskip\nobreak\hfil\penalty50\hbox{}\nobreak\hfil \hbox{
$\diamond$}\parfillskip=0pt \finalhyphendemerits=0
\par\smallskip}}
\def\ds{\displaystyle}
\newenvironment{proof}{\noindent{\sl Proof:~~~}}{\quad \qed}
\def\beq{\begin{equation}}
\def\eeq{\end{equation}}
\def\bea{\begin{eqnarray}}
\def\eea{\end{eqnarray}}
\def\beann{\begin{eqnarray*}}
\def\eeann{\end{eqnarray*}}
\def\beasn{\begin{sneqnarray}}
\def\eeasn{\end{sneqnarray}}
\def\ben{\begin{enumerate}}
\def\een{\end{enumerate}}
\def\bit{\begin{itemize}}
\def\eit{\end{itemize}}
\def\derpar#1#2{\displaystyle\frac{\partial{#1}}{\partial{#2}}}
\def\derpars#1#2#3{\displaystyle\frac{\partial^2{#1}}{\partial{#2}\partial{#3}}}
\def\restric#1#2{\left.#1\right|_{#2}}
\def\W{{\cal W}}
\def\C{{\cal C}}
\def\P{{\cal P}}
\def\vf{{\mathfrak{X}}}
\def\df{{\mit\Omega}}
\def\Lag{{\cal L}}
\def\Leg{{\cal FL}}
\def\d{{\rm d}}
\def\Nat{\mathbb{N}}
\def\R{\mathbb{R}}
\def\pr{\operatorname{pr}}
\def\Tan{{\rm T}}
\def\Lie{\mathop{\rm L}\nolimits}
\def\inn{\mathop{i}\nolimits}
\def\Cinfty{{\rm C}^\infty}
\def\tabaddress#1{{\small\it\begin{tabular}[t]{c}#1
\\[1.2ex]\end{tabular}}}
\def\qed{\ifvmode\removelastskip\fi
{\unskip\nobreak\hfil\penalty50\hbox{}\nobreak\hfil \hbox{\vrule
height1.2ex width1.2ex}\parfillskip=0pt \finalhyphendemerits=0
\par\smallskip}}
\title{UNIFIED FORMALISM FOR HIGHER-ORDER NON-AUTONOMOUS DYNAMICAL SYSTEMS}
\author{
{\sc  Pedro Daniel Prieto-Mart\'\i nez\thanks{{\bf e}-{\it mail}:
   peredaniel@ma4.upc.edu} }\\
   {\sc Narciso Rom\'an-Roy\thanks{{\bf e}-{\it mail}:
   nrr@ma4.upc.edu}}  \\
   \tabaddress{Departamento de Matem\'atica Aplicada IV.
   Edificio C-3, Campus Norte UPC\\
   C/ Jordi Girona 1. 08034 Barcelona. Spain}}
   \date{November 29, 2011\\
   {\it Revised version\/}: October 22, 2012}
\begin{document}

\maketitle

\pagestyle{myheadings}

\thispagestyle{empty}

\begin{abstract}
This work is devoted to giving a geometric framework for describing
higher-order non-autonomous mechanical systems.
The starting point is to extend the Lagrangian-Hamiltonian unified formalism
of Skinner and Rusk for these kinds of systems,
generalizing previous developments for
higher-order autonomous mechanical systems and
first-order non-autonomous mechanical systems.
Then, we use this unified formulation to derive
the standard Lagrangian and Hamiltonian formalisms,
including the Legendre-Ostrogradsky map and
the Euler-Lagrange and the Hamilton equations,
both for regular and singular systems.
As applications of our model, two examples
of regular and singular physical systems are studied.
\end{abstract}

 \bigskip
\noindent {\bf Key words}:
{\sl Higher-order non-autonomous systems, Lagrangian and Hamiltonian formalisms, Symplectic and presymplectic manifolds.}

\vbox{\raggedleft AMS s.\,c.\,(2000): 70H50, 53C80, 37B55}\null
\markright{\rm P.D. Prieto-Mart\'\i nez, N. Rom\'an-Roy:   \sl Unified formalism for higher-order non-autonomous...}

\clearpage

\tableofcontents

\section{Introduction}
\label{section:intro}

Higher-order dynamical systems play a relevant role in certain
branches of theoretical physics, applied mathematics and numerical analysis.
In particular, they appear in theoretical physics,
in the mathematical description of
relativistic particles with spin, string theories,
Hilbert's Lagrangian for gravitation, Podolsky's generalization 
of electromagnetism and others
\cite{art:Banerjee_Mukherjee_Paul10,art:Barcelos_Natividade91_2,
art:Batlle_Gomis_Pons_Roman88,art:Belvedere_Amaral_Lemos_Carvalhaes00,
art:Govaerts92,art:Kuznetsov_Plyushchay94,art:Nesterenko89,art:Plyushchay91,
art:Popescu_Popescu07,art:Ramos_Roca95,art:Zloshchastiev00},
as well as in some problems of fluid mechanics and classical physics
(see, for instance, the example in Section
\ref{subsection:CylindricalBeam} taken from
\cite{book:Benson06,book:Elsgoltz83}), and 
in numerical models arising from the
discretization of first-order dynamical systems that preserve 
their inherent geometric structures
\cite{art:DeLeon_Martin_Santamaria04}.
In these kinds of systems, the dynamics have explicit
dependence on accelerations or higher-order derivatives 
of the generalized coordinates of position.

In recent years, much works has been devoted 
to the development of geometric formalisms for higher-order
mechanics and field theory (see, for instance, 
\cite{art:Aldaya_Azcarraga78_2,art:Aldaya_Azcarraga80,
proc:Cantrijn_Crampin_Sarlet86,art:Carinena_Lopez92,unpub:Crampin_Saunders11,
book:DeLeon_Rodrigues85,proc:Garcia_Munoz83,art:Gracia_Pons_Roman91,
art:Gracia_Pons_Roman92,art:Krupkova00,unpub:Mukherjee_Paul11,
art:Saunders_Crampin90}).
These formulations use higher-order tangent and jet bundles
as the main tool.
In particular, in a recent paper \cite{art:Prieto_Roman11}
a new geometric formulation has been proposed,
which is an extension to higher-order
autonomous mechanical systems of the formalism proposed by
R. Skinner and R. Rusk in his seminal paper \cite{art:Skinner_Rusk83}.
This formulation compresses the Lagrangian and Hamiltonian formalisms
into a single one, originally developed for
first-order autonomous mechanical systems
and later generalized to non-autonomous systems
\cite{art:Barbero_Echeverria_Martin_Munoz_Roman08,art:Cortes_Martinez_Cantrijn02},
control systems \cite{art:Barbero_Echeverria_Martin_Munoz_Roman07},
and first-order classical field theories
(see \cite{art:Roman09} and references therein).
Nevertheless, to our knowledge, there is neither
a complete geometrical description
of the Lagrangian and Hamiltonian formalisms
(partial studies on this subject can be found in
\cite{art:Crasmareanu00,art:deLeon_Martin94,proc:deLeon_Rodrigues87,
proc:DeLeon_Marrero92,art:Krupkova96}),
nor of the Skinner-Rusk unified formalism for
non-autonomous higher-order mechanical systems.

The aim of this work is to fill this gap.
In order to do this, we first develop the Lagrangian-Hamiltonian
unified formalism of Skinner-Rusk for
higher-order non-autonomous mechanical systems,
studying in particular how this formulation enables us to
obtain the generalized Legendre-Ostrogradsky map
connecting the Lagrangian and Hamiltonian formalisms,
as well as the Euler-Lagrange and the Hamilton equations
of motion.
Thus, starting from this unified framework,
we obtain a geometric description for
the Lagrangian and Hamiltonian formalisms for
higher-order non-autonomous mechanical systems.
This study is conducted both for regular
and singular dynamical systems.
Our analysis is performed by using higher-order jet bundles,
since we wish this work to serve as a model
to develop an unambiguous framework
for higher-order classical field theories that complete
previous approaches in this way
\cite{art:Campos_DeLeon_Martin_Vankerschaver09,art:Vitagliano10}.

The paper is organized as follows:
in Section \ref{section:structures}, we review the
geometric structures needed to develop the formalism,
such as the higher-order jet bundles,
the total derivatives and higher-order semisprays.
Section \ref{section:SkinnerRuskform} is devoted to
the geometric formulation of the Skinner-Rusk unified formalism
for higher-order non-autonomous mechanical systems,
including the description of the dynamical equations
using sections and vector fields.
In Sections \ref{section:lagform} and \ref{section:hamform},
we recover the standard Lagrangian and Hamiltonian formalisms,
presenting a complete description of both
for regular and singular systems.
Finally, in Section \ref{section:Examples}, two examples are analyzed;
the first is a regular system which models
the shape of a deformed elastic cylindrical
beam with fixed ends and has applications in Statics 
and other branches of classical physics \cite{book:Benson06,book:Elsgoltz83};
the second is a modification of a singular system
describing a relativistic particle
\cite{art:Plyushchay88,art:Pisarski86,art:Batlle_Gomis_Pons_Roman88,
art:Nesterenko89,art:Prieto_Roman11},
which in our case is subjected to a generic time-dependent potential.
The paper concludes in Section \ref{section:outlook}
with a summary of results and future research, and an appendix
in Section \ref{section:particular_situation}
where the particular situation of higher-order trivial bundles
is briefly analyzed.

All the manifolds are real, second countable and $\Cinfty$.
The maps and the structures are assumed to be $\Cinfty$. 
Sum over repeated indices is understood.


\section{Geometric structures of higher-order jet bundles over $\R$}
\label{section:structures}

\subsection{Higher-order jet bundles over $\R$}

(See \cite{book:DeLeon_Rodrigues85,book:Saunders89} for details).

Let $E \stackrel{\pi}{\longrightarrow} \R$ be a bundle ($\dim E = n+1$),
and let $\eta \in \df^1(\R)$ be the canonical volume form in $\R$.
If $k\in\Nat$, the \textsl{$k$th order jet bundle} of the projection
$\pi$, $J^{k}\pi$, is the $((k+1)n+1)$-dimensional
manifold of the $k$-jets of sections $\phi \in \Gamma(\pi)$.
A point in $J^{k}\pi$ is denoted by $j^{k}\phi$, where $\phi \in \Gamma(\pi)$
is any representative of the equivalence class.
We have the following natural projections: if $r \leqslant k$,
$$
\begin{array}{rcclcrccl}
\pi^k_r \colon & J^{k}\pi & \longrightarrow & J^r\pi & , &
\pi^k \colon  & J^{k}\pi & \longrightarrow & E \\
\ & j^k\phi & \longmapsto & j^r\phi & , &
\ & j^k\phi & \longmapsto & \phi \ ,
\end{array}
$$
Notice that $\pi^{k}_{0} = \pi^{k}$, where $J^0\pi$ is canonically identified with
$E$, and $\pi^{k}_{k} = {\rm Id}_{J^{k}\pi}$.
Furthermore, we denote $\bar{\pi}^k = \pi \circ \pi^k \colon J^{k}\pi \to \R$.

Local coordinates in $J^{k}\pi$ are constructed as follows:
let $t$ be the global coordinate in $\R$ such that $\eta = \d t$, and $(t,q^A)$,
($1 \leqslant A \leqslant n$),
local coordinates in $E$ adapted to the bundle structure.
Let $\phi \in \Gamma(\pi)$ such that $\phi = (t,\phi^A)$. 
Then, local coordinates in $J^{k}\pi$ are $(t,q^A,q_1^A,\ldots,q_k^A)$, with
$$
q^A = \phi^A \quad , \quad q_i^A = \frac{\d^i\phi^A}{\d t^i} \, .
$$
Usually we write $q_0^A$ instead of $q^A$, and so the local coordinates in $J^{k}\pi$
are written $(t,q_0^A,q_1^A,\ldots,q_k^A)$.

Using these coordinates, the local expression of the natural projections are
$$
\pi^{k}_{r}(t,q_0^A,q_1^A,\ldots,q_k^A) = (t,q_0^A,q_1^A,\ldots,q_r^A) \quad , \quad
\pi^k(t,q_0^A,q_1^A,\ldots,q_k^A) = (t,q_0^A) \ .
$$

If $\phi \in \Gamma(\pi)$ is a section of $\pi$, we denote by $j^{k}\phi$ the canonical
lifting of $\phi$ to $J^{k}\pi$, that is, the map $j^{k}\phi \colon \R \to J^{k}\pi$,
which is a section of the projection $\bar{\pi}^k$.

\textbf{Remark}: We use the same notation for points of $J^{k}\pi$ and
liftings of sections to $J^{k}\pi$, since giving a point
in $J^{k}\pi$ is equivalent to giving the lifting to $J^{k}\pi$ of a section of $\pi$ 
(see \cite{book:Saunders89} for details).


\subsection{Total time derivative}

(See \cite{book:Saunders89} for details).

\begin{definition}
Let $E \stackrel{\pi}{\longrightarrow} \R$ be a bundle, $t_o \in \R$, $\phi \in \Gamma(\pi)$
and $u \in \Tan_{t_o}\R$. The {\rm $k$th holonomic lift} of $u$ by $\phi$ is defined as
$((j^k\phi)_*(u),j_{t_o}^{k+1}\phi) \in (\pi^{k+1}_{k})^*\Tan J^k\pi \, ,$
where $j^{k+1}_{t_o}\phi \equiv (j^{k+1}\phi)(t_o)$.
\end{definition}

In local coordinates, if $u$ is given by
$u = \restric{u_o\derpar{}{t}}{{t_o}}$,
the $k$th holonomic lift of $u$ is given by
\begin{equation}
\label{eqn:LocalCoordHolonomicLiftingTangentVectors}
(j^k\phi)_*u = u_o\left( \restric{\derpar{}{t}}{j^{k}_{t_o}\phi}
+ \sum_{i=0}^{k} \restric{q_{i+1}^A(j^{k+1}_{t_o}\phi)\derpar{}{q_i^A}}{j^{k}_{t_o}\phi} \right) \, .
\end{equation}

The vector space $(\pi^{k+1}_k)^*(\Tan J^{k}\pi)_{j_{t_o}^{k+1}\pi}$
has a canonical splitting as a direct sum, as follows:
$$
(\pi^{k+1}_k)^*(\Tan J^{k}\pi)_{j_{t_o}^{k+1}\pi} = (\pi^{k+1}_{k})^*(V(\bar{\pi}^k))_{j^{k+1}_{t_o}\phi} \oplus
(j^{k}\phi)_*\Tan_{t_o}\R \, ,
$$
where $(j^{k}\phi)_*\Tan_{t_o}\R$ denotes the set of $k$th holonomic lifts
of tangent vectors in $\Tan_{t_o}\R$ by $\phi$.
As a consequence, the vector bundle
$\ds \xymatrix{
(\pi^{k+1}_{k})^*\Tan J^{k}\pi \ar[rr]^-{(\pi^{k+1}_{k})^*\tau_{J^{k}\pi}} & \ &
J^{k}\pi
}$
may be written as the direct sum of two subbundles:
$$
\xymatrix{
(\pi^{k+1}_{k})^*V(\bar{\pi}^k) \oplus H(\pi^{k+1}_{k}) \ar[rr]^-{(\pi^{k+1}_{k})^*\tau_{J^{k}\pi}} & \ &
J^k\pi \, ,
}
$$
where $H(\pi^{k+1}_{k})$ is the union of the fibres $(j^{k}\phi)_*\Tan_t\R$, for $t \in \R$.

Now, if $\vf(\pi^{k+1}_{k})$ denotes the module of vector fields along the projection
$\pi^{k+1}_{k}$, the submodule corresponding to sections of
$\restric{(\pi^{k+1}_{k})^*\tau_{J^{k}\pi}}{(\pi^{k+1}_{k})^*V(\bar{\pi}^k)}$
is denoted by $\vf^v(\pi^{k+1}_{k})$,
and the submodule corresponding to sections of
$\restric{(\pi^{k+1}_{k})^*\tau_{J^{k}\pi}}{H(\pi^{k+1}_{k})}$
is denoted by $\vf^h(\pi^{k+1}_{k})$.
The splitting for the bundles given above induces
the following canonical splitting for the module $\vf(\pi^{k+1}_{k})$:
$$
\vf(\pi^{k+1}_{k}) = \vf^v(\pi^{k+1}_{k}) \oplus \vf^h(\pi^{k+1}_{k}) \, .
$$
An element of the submodule $\vf^h(\pi^{k+1}_{k})$ is called a
\textsl{total derivative}.

\begin{definition}
Given a vector field $X \in \vf(\R)$, a section $\phi \in \Gamma(\pi)$ and a point $t_o \in \R$,
the {\rm $k$th holonomic lift} of $X$ by $\phi$,
$X^k \equiv j^{k}X \in \vf^h(\pi^{k+1}_{k})$, is defined as
$$
X^{k}_{j_{t_o}^{k+1}\phi} = (j^{k}\phi)_*X_{t_o} \, .
$$
\end{definition}

Hence, every vector field $X \in \vf(\R)$ defines a total derivative
given by its holonomic lift.

Alternatively, we have the following characterization of $X^k$
as a derivation: for every $f \in \Cinfty(J^k\pi)$ we have
$$
(d_{X^k}f)(j^{k+1}_{t_o}\phi) = d_X(f \circ j^{k}\phi)({t_o}) \, ,
$$
where $d_{X^k}$ is the derivation associated to $X^k$ and
$d_{X}$ is the derivation corresponding to $X$.

In local coordinates, if $X \in \vf(\R)$ is given by
$\ds X = X_o \derpar{}{t}$,
then, bearing in mind the local expression of the $k$th
holonomic lift for tangent vectors \eqref{eqn:LocalCoordHolonomicLiftingTangentVectors},
the $k$th holonomic lift of $X$ is
$$
X^k = X_o\left( \derpar{}{t} + \sum_{i=0}^{k} q_{i+1}^A\derpar{}{q_i^A}\right) \, .
$$

Finally, the \textsl{total time derivative} is the $k$th holonomic lift of
the coordinate vector field $\partial/\partial t \in \vf(\R)$,
which is denoted by $d_T \in \vf(\pi^{k+1}_{k})$, and
whose local expression is
\begin{equation}
\label{eqn:LocalCoordTotalTimeDerivative}
d_T = \derpar{}{t} + \sum_{i=0}^{k}q_{i+1}^A\derpar{}{q^A_{i}} \, .
\end{equation}

\textbf{Remark}: The usual notation for the total time
derivative is $\d/\d t$, as seen in \cite{book:Saunders89},
while the notation $d_T$ is usually reserved for the same operator
in the autonomous case. Nevertheless, in this paper we use
the same notation for both operators, and the one that is considered
will be understood from the context.


\subsection{Higher-order semisprays. Holonomic sections}

Now we generalize the concept of semispray introduced in
\cite{book:DeLeon_Rodrigues85} to the time-dependent case.

\begin{definition}
A section $\psi \in \Gamma(\bar{\pi}^k)$ is {\rm holonomic of type $r$}, $1 \leqslant r \leqslant k$,
if $j^{k-r+1}\phi = \pi^{k}_{k-r+1} \circ \psi$, 
where $\phi = \pi^{k} \circ \psi \in \Gamma(\pi)$;
that is, the section $\psi$ is the lifting of a section of $\pi$ up to $J^{k-r+1}\pi$.
$$
\xymatrix{
\ & \ & J^k\pi \ar[d]_{\pi^k_{k-r+1}} \ar@/^2.5pc/[ddd]^{\pi^k} \\
\R \ar@/^1.5pc/[urr]^{\psi} \ar@/_1.5pc/[ddrr]_{\phi = \pi^k\circ\psi}
\ar[rr]^-{\pi^k_{k-r+1}\circ\psi} \ar[drr]_{j^{k-r+1}\phi} 
& \ & J^{k-r+1}\pi \ar[d]_{{\rm Id}} \\
\ & \ & J^{k-r+1}\pi \ar[d]_{\pi^{k-r+1}} \\
\ & \ & E
}
$$
In particular, a section $\psi$ is {\rm holonomic of type 1}
if, with $\phi = \pi^{k} \circ \psi$, then $j^{k}\phi = \psi$; that is,
$\psi$ is the canonical $k$-jet lifting of a section $\phi \in \Gamma(\pi)$.
Throughout this paper, sections that are
holonomic of type $1$ are simply called {\rm holonomic}.
\end{definition}

\begin{definition}
A vector field $X \in \vf(J^k\pi)$ is a {\rm semispray of type $r$}, $1 \leqslant r \leqslant k$,
if every integral section $\psi$ of $X$ is holonomic of type $r$.
\end{definition}

The local expression of a holonomic section
of type $r$, $\psi \in \Gamma(J^{k}\pi)$, is
$$
\psi(t) = (t,q_0^A,q_1^A,\ldots,q_{k-r+1}^A,\psi_{k-r+2}^A,\ldots,\psi_{k}^A) \, .
$$
Thus, the local expression of a semispray of type $r$ is
$$
X = f\derpar{}{t} + q_1^A\derpar{}{q_0^A} + q_2^A\derpar{}{q_1^A} + \ldots + q_{k-r+1}^A\derpar{}{q_{k-r}^A} + 
X_{k-r+1}^A\derpar{}{q_{k-r+1}^A} + \ldots + X_k^A\derpar{}{q_k^A} \, .
$$

From the local expression, it is clear that
every holonomic section of type $r$ is also holonomic of
type $s$, for $s \geqslant r$. The same remark is true for
semisprays.

We observe that, from the definition, semisprays of type $1$ in
$J^k\pi$ are the analogue to the holonomic vector fields in first-order mechanics; that is,
they are the vector fields whose integral sections (curves) are the canonical liftings
to $J^k\pi$ of sections (curves) on the basis.
Their local expressions are
$$
X = f\derpar{}{t} + q_1^A\derpar{}{q_0^A} + q_2^A\derpar{}{q_1^A} + \ldots + q_k^A\derpar{}{q_{k-1}^A} + X_k^A\derpar{}{q_k^A}\ .
$$

If $X\in\vf(J^k\pi)$ is a semispray of type $r$, a section $\phi \in \Gamma(\pi)$
is said to be a {\sl path} or {\sl solution} of $X$ if
$j^k\phi$ is an integral curve of $X$; that is,
$\widetilde{j^k\phi} = X \circ j^k\phi$,
where $\widetilde{j^k\phi}$ denotes the canonical lifting of $j^k\phi$
from $J^k\pi$ to $\Tan(J^k\pi)$.
Then, in coordinates, $\phi$ verifies the following system of differential equations of order $k+1$:
$$
\frac{\d^{k-r+2}\phi^A}{\d t^{k-r+2}} = X_{k-r+1}^A\left(\phi,\frac{\d\phi}{\d t},\ldots,\frac{\d^k\phi}{\d t^k}\right)
, \ \ldots \ ,
\frac{\d^{k+1}\phi^A}{\d t^{k+1}} = X_k^A\left(\phi,\frac{\d\phi}{\d t},\ldots,\frac{\d^k\phi}{\d t^k}\right) \ .
$$


\section{Skinner-Rusk unified formalism}
\label{section:SkinnerRuskform}


\subsection{Unified phase space. Geometric and dynamical structures}

Consider the configuration bundle $\pi \colon E \to \R$, where
$E$ is an $(n+1)$-dimensional smooth manifold.
Let $\Lag \in \df^{1}(J^{k}\pi)$ be a
\textsl{$k$th order Lagrangian density}, that is, a $\bar{\pi}^k$-semibasic
$1$-form. Thus, we can write $\Lag$ as
$\Lag = L\cdot(\bar{\pi}^k)^*\eta = L\d t \, ,$
where $L \in \Cinfty(J^{k}\pi)$ is the \textsl{$k$th-order Lagrangian function}.

According to \cite{art:Barbero_Echeverria_Martin_Munoz_Roman08,
art:Echeverria_Lopez_Marin_Munoz_Roman04,art:Prieto_Roman11}, we consider the following bundles:
$$
\W = J^{2k-1}\pi \times_{J^{k-1}\pi} \Tan^*(J^{k-1}\pi) \quad ; \quad \W_r = J^{2k-1}\pi \times_{J^{k-1}\pi} \, J^{k-1}\pi^* \, ,
$$
(the fiber product of the above bundles),
where $J^{k-1}\pi^* = \Tan^*(J^{k-1}\pi)/(\bar{\pi}^{k-1})^*\Tan^*\R$.
The bundles $\W$ and $\W_r$ are called the
\textsl{higher-order extended jet-momentum bundle} and the
\textsl{higher-order restricted jet-momentum bundle}, respectively.

\textbf{Comment}:
The reason for taking these bundles is in order
to recover the Lagrangian and Hamiltonian formalisms from
this unified framework, and as we see in Sections
\ref{section:lagform} and \ref{section:hamform},
those formalisms take place in the bundles $J^{2k-1}\pi$
and $J^{k-1}\pi^*$.

These bundles are endowed with the canonical projections
$$
\rho_1 \colon \W \to J^{2k-1}\pi \quad ; \quad
\rho_2 \colon \W \to \Tan^*(J^{k-1}\pi) \quad ; \quad
\rho_{J^{k-1}\pi} \colon \W \to J^{k-1}\pi \quad ; \quad
\rho_\R \colon \W \to \R
$$
$$
\rho_1^r \colon \W_r \to J^{2k-1}\pi \quad ; \quad
\rho_2^r \colon \W_r \to J^{k-1}\pi^* \quad ; \quad
\rho_{J^{k-1}\pi}^r \colon \W_r \to J^{k-1}\pi \quad ; \quad
\rho_\R^r \colon \W_r \to \R
$$

In addition, the natural quotient map $\mu \colon \Tan^*(J^{k-1}\pi) \to J^{k-1}\pi^*$
induces a natural projection (that is, a surjective submersion) $\mu_\W \colon \W \to \W_r$.
Thus, we have the following diagram
$$
\xymatrix{
\ & \ & \W \ar@/_1.3pc/[llddd]_{\rho_1} \ar[d]^-{\mu_\W} \ar@/^1.3pc/[rrdd]^{\rho_2} & \ & \ \\
\ & \ & \W_r \ar[lldd]_{\rho_1^r} \ar[rrdd]^{\rho_2^r} & \ & \ \\
\ & \ & \ & \ & \Tan^*(J^{k-1}\pi) \ar[d]^-{\mu} \ar[lldd]_{\pi_{J^{k-1}\pi}}|(.25){\hole} \\
J^{2k-1}\pi \ar[rrd]^{\pi^{2k-1}_{k-1}} & \ & \ & \ & J^{k-1}\pi^* \ar[dll]^{\pi_{J^{k-1}\pi}^r} \\
\ & \ & J^{k-1}\pi \ar[d]^{\bar{\pi}^{k-1}} & \ & \ \\
\ & \ & \R & \ & \
}
$$
where $\pi_{J^{k-1}\pi} \colon \Tan^*(J^{k-1}\pi) \to J^{k-1}\pi$ is the canonical
submersion and $\pi_{J^{k-1}\pi}^r \colon J^{k-1}\pi^* \to J^{k-1}\pi$ is the map
satisfying $\pi_{J^{k-1}\pi} = \pi_{J^{k-1}\pi}^r \circ \mu$.

If $(U;t,q_0^A)$ is a local chart of coordinates in $E$, we denote
by $((\pi^{2k-1})^*(U);t,q_0^A,\ldots,q_{2k-1}^A)$ and
$((\pi_{J^{k-1}\pi} \circ \pi^{k-1})^*(U);t,q_0^A,\ldots,q_{k-1}^A,p,p_A^0,\ldots,p_A^{k-1})$
the induced local charts in $J^{2k-1}\pi$ and $\Tan^*(J^{k-1}\pi)$, respectively.
Thus $(t,q_0^A,\ldots,q_{k-1}^A,p_A^0,\ldots,p_A^{k-1})$ are
the natural coordinates in $J^{k-1}\pi^*$, and
the coordinates in $\W$ and $\W_r$ are
$(t,q_0^A,\ldots,q_{k-1}^A,q_k^A,\ldots,q_{2k-1}^A,p,p_A^0,\ldots,p_A^{k-1})$ 
and
$(t,q_0^A,\ldots,q_{k-1}^A,q_k^A,\ldots,q_{2k-1}^A,p_A^0,\ldots,p_A^{k-1})$,
respectively.
Note that $\dim\,\W = 3kn + 2$ and $\dim(\W_r) = 3kn + 1$.

The bundle $\W$ is endowed with some canonical geometric
structures. The first one is:

\begin{definition}
Let $\Theta_{k-1} \in \df^{1}(\Tan^*(J^{k-1}\pi))$ be the tautological
$1$-form, and $\Omega_{k-1} = -\d\Theta_{k-1} \in \df^{2}(\Tan^*(J^{k-1}\pi))$
the canonical symplectic $2$-form on $\Tan^*(J^{k-1}\pi)$.
We define the {\rm higher-order unified canonical forms} as
\begin{equation}\label{eqn:DefPresymplecticForm}
\Theta = \rho_2^*\Theta_{k-1} \in \df^{1}(\W) \quad ; \quad
\Omega = \rho_2^*\Omega_{k-1} \in \df^{2}(\W)\, .
\end{equation}
\end{definition}

Bearing in mind that the local expressions
for the canonical forms on $\Tan^*(J^{k-1}\pi)$ are
\begin{equation}
\label{eqn:LocalCoordCanonicalSymplecticForm}
\Theta_{k-1} = p_A^i \d q_i^A + p\,\d t \quad ; \quad
\Omega_{k-1} = \d q_i^A \wedge \d p_A^i - \d p \wedge \d t \, ,
\end{equation}
the above forms can be written locally as
\begin{equation}
\label{eqn:LocalCoordPresymplecticForm}
\Theta = \rho_2^*(p_A^i \d q_i^A + p\,\d t) = p_A^i \d q_i^A + p\,\d t
\quad ; \quad
\Omega = \rho_2^*(\d q_i^A \wedge \d p_A^i - \d p \wedge \d t)
= \d q_i^A \wedge \d p_A^i - \d p \wedge \d t
\end{equation}

Notice that from the local expressions \eqref{eqn:LocalCoordPresymplecticForm}
we have
\begin{equation}
\label{eqn:LocalBasisKerOmega}
\ker\,\Omega = \left\langle \derpar{}{q_k^A},\ldots,\derpar{}{q_{2k-1}^A} \right\rangle= \vf^{V(\rho_2)}(\W) \ .
\end{equation}
Thus, $\Omega$ is a presymplectic form in $\W$.

The second canonical structure in $\W$ is the following:

\begin{definition}
The {\rm higher-order coupling $1$-form} in $\W$ is the $\rho_\R$-semibasic
$1$-form $\hat{\C} \in \df^{1}(\W)$ defined as follows:
for every $w = (\bar{y},\alpha_q) \in \W$ (that is,
$\alpha_q \in \Tan^*_q (J^{k-1}\pi)$, where $q = \pi^{2k-1}_{k-1}(\bar{y})$ is the projection
of $\bar{y}$ to $J^{k-1}\pi$) and $u \in \Tan_w\W$, then
\begin{equation}
\label{eqn:DefCouplingForm}
\langle \hat{\C}(w) \mid u \rangle = \langle \alpha_q \mid (\Tan_w(j^{k-1}\phi \circ \rho_\R))(u) \rangle \, ,
\end{equation}
where $\phi \in \Gamma(\pi)$ is any representative of $\bar{y}$ (that is, $j^{2k-1}\phi = \bar{y}$).
\end{definition}

$\hat{\C}$ being a $\rho_\R$-semibasic form, there exists $\hat{C} \in \Cinfty(\W)$ such that
$\hat{\C} = \hat{C}\rho_\R^*\eta = \hat{C}\d t$. An easy computation in coordinates
gives the following local expression for the coupling $1$-form:
\begin{equation}
\label{eqn:LocalCoordCouplingForm}
\hat{\C} = (p + p_A^iq_{i+1}^A)\d t \, .
\end{equation}

We denote
$\hat{\Lag} = (\pi^{2k-1}_{k} \circ \rho_1)^*\Lag \in \df^{1}(\W)$.
As the Lagrangian density is a $\bar{\pi}^{k}$-semibasic form, we have
that $\hat{\Lag}$ is a $\rho_\R$-semibasic $1$-form, and thus we can write
$\hat{\Lag} = \hat{L}\rho_\R^*\eta = \hat{L}\d t$, where
$\hat{L} = (\pi^{2k-1}_{k} \circ \rho_1)^*L \in \Cinfty(\W)$ is the
pull-back of the Lagrangian function associated with $\Lag$. Then, we define
a \textsl{Hamiltonian submanifold}
\begin{equation*}
\label{eqn:DefHamiltonianSubmanifold}
\W_o = \left\{ w \in \W \colon \hat{\Lag}(w) = \hat{\C}(w) \right\} \stackrel{j_o}{\hookrightarrow} \W \, .
\end{equation*}
$\hat{\C}$ and $\hat{\Lag}$ being $\rho_\R$-semibasic $1$-forms,
the submanifold $\W_o$ is defined by the constraint $\hat{C} - \hat{L} = 0$.
In natural coordinates, bearing in mind the local expression
\eqref{eqn:LocalCoordCouplingForm} of $\hat{\C}$,
the constraint function is
\begin{equation*}
\hat{C} - \hat{L} = p + p_A^iq_{i+1}^A - \hat{L} = 0 \, .
\end{equation*}
We have the following natural projections in $\W_o$:
\begin{align*}
&\rho_\R^o \colon \W_o \to \R \quad ; \quad
\rho_1^o \colon \W_o \to J^{2k-1}\pi \quad ; \quad
\rho_2^o \colon \W_o \to \Tan^*(J^{k-1}\pi) \\
&\hat{\rho}_2^o = \mu \circ\rho_2^o \colon \W_o \to  J^{k-1}\pi^*
\quad ; \quad \rho_{J^{k-1}\pi}^o \colon \W_o \to J^{k-1}\pi \, .
\end{align*}
Local coordinates in $\W_o$ are
$(t,q_0^A,\ldots,q_{k-1}^A,q_{k}^A,\ldots,q_{2k-1}^A,p_A^0,\ldots,p_A^{k-1})$,
and the local expressions of the above maps are
\begin{align*}
&\rho_1^o(t,q_i^A,q_j^A,p_A^i) = (t,q_i^A,q_j^A) \quad ; \quad
\rho_2^o(t,q_i^A,q_j^A,p_A^i) = (t,q_i^A,\hat{L}-p_A^iq_{i+1}^A,p_A^i) \\
&\hat{\rho}_2^o(t,q_i^A,q_j^A,p_A^i) = (t,q_i^A,p_A^i) \quad ; \quad
j_o(t,q_i^A,q_j^A,p_A^i) = (t,q_i^A,q_j^A,\hat{L}-p_A^iq_{i+1}^A,p_A^i) \, .
\end{align*}

\begin{prop}\label{prop:WoDiffWr}
The submanifold $\W_o \hookrightarrow \W$ is $1$-codimensional, $\mu_\W$-transverse and
diffeomorphic to $\W_r$.
\end{prop}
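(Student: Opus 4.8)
The plan is to deduce all three assertions from the single algebraic observation that the defining constraint is affine in the fibre coordinate $p$ with constant leading coefficient. Recall that $\W_o$ is the zero set of $\xi := \hat{C} - \hat{L} \in \Cinfty(\W)$, whose local expression is $\xi = p + p_A^iq_{i+1}^A - \hat{L}$. Since $\hat{L} = (\pi^{2k-1}_{k} \circ \rho_1)^*L$ is a pull-back from $J^{2k-1}\pi$, it is independent of all the momentum coordinates, and the coupling term $p_A^iq_{i+1}^A$ involves only $q_1^A,\ldots,q_k^A$ and $p_A^0,\ldots,p_A^{k-1}$; hence neither summand depends on $p$, so that $\partial\xi/\partial p = 1$ identically.

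First I would establish that $\W_o$ is $1$-codimensional. From $\partial\xi/\partial p = 1 \neq 0$ it follows that $\d\xi$ is nowhere vanishing, so $0$ is a regular value of $\xi$ and $\W_o = \xi^{-1}(0)$ is a closed embedded submanifold of codimension $1$, as asserted.

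Next, for $\mu_\W$-transversality, I would identify the vertical subbundle of $\mu_\W$. Because $\mu_\W$ is induced by the quotient $\mu$ that divides out $(\bar{\pi}^{k-1})^*\Tan^*\R$, whose fibre is spanned by $\d t$ (the component carried by the coordinate $p$ in $\Theta_{k-1}$), the map $\mu_\W$ forgets precisely the $p$-coordinate and its vertical bundle is $V(\mu_\W) = \langle \partial/\partial p\rangle$. Since $\langle \d\xi \mid \partial/\partial p\rangle = \partial\xi/\partial p = 1 \neq 0$, the vector field $\partial/\partial p$ is nowhere tangent to $\W_o$; as $\W_o$ has codimension $1$, this yields $\Tan_w\W_o + V(\mu_\W)_w = \Tan_w\W$ at every $w \in \W_o$, which is exactly the required transversality.

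Finally, for the diffeomorphism, the same non-degeneracy lets me solve the constraint globally for $p$: on $\W_o$ one has $p = \hat{L} - p_A^iq_{i+1}^A$, and because the right-hand side is independent of $p$ this expression is globally well-defined, exhibiting $\W_o$ as the image of a smooth global section $s \colon \W_r \to \W$ of $\mu_\W$. Consequently, in the coordinates $(t,q_i^A,q_j^A,p_A^i)$ common to $\W_o$ and $\W_r$, the composite $\mu_\W \circ j_o$ is the identity, so $\mu_\W|_{\W_o} = \mu_\W \circ j_o \colon \W_o \to \W_r$ is a diffeomorphism with inverse $s$. I do not anticipate a genuine obstacle; the only point needing care is the \emph{global} (as opposed to merely local) solvability for $p$, which is guaranteed precisely because $\xi$ is affine in $p$ with unit coefficient, so that the implicit-function step produces one global section rather than local branches.
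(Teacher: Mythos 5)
Your proof is correct and follows essentially the same route as the paper's: everything hinges on the constraint $\hat C-\hat L=p+p_A^iq_{i+1}^A-\hat L$ being affine in $p$ with unit coefficient, which yields the codimension count, the transversality condition $\partial(\hat C-\hat L)/\partial p=1\neq 0$ against $\ker(\mu_\W)_*=\langle\partial/\partial p\rangle$, and the bijectivity of $\mu_\W\circ j_o$. The only (minor) differences are that you exhibit the inverse directly as the global section $p=\hat L-p_A^iq_{i+1}^A$ where the paper argues injectivity and surjectivity separately and then invokes equality of dimensions, and that you make the regular-value justification of $1$-codimensionality explicit where the paper calls it obvious.
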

\begin{proof}
$\W_o$ is obviously $1$-codimensional, since it is defined by one
constraint function.

To see that $\W_o$ is diffeomorphic to $\W_r$, we show that
the smooth map $\mu_\W \circ j_o \colon \W_o \to \W_r$ is one-to-one.
First, for every $(\bar{y},\alpha) \in \W_o$, we have
$L(\pi^{2k-1}_{k}(\bar{y})) = \hat{L}(\bar{y},\alpha) = \hat{C}(\bar{y},\alpha)$,
and
\begin{equation*}
(\mu_\W \circ j_o)(\bar{y},\alpha) = \mu_\W(\bar{y},\alpha) = (\bar{y},\mu(\alpha)) = (\bar{y},[\alpha]) \, .
\end{equation*}

First, $\mu_\W \circ j_o$ is injective; in fact,
let $(\bar{y}_1,\alpha_1), (\bar{y}_2,\alpha_2) \in \W_o$,
then we wish to prove that
$$
(\mu_\W \circ j_o)(\bar{y}_1,\alpha_1) = (\mu_\W \circ j_o)(\bar{y}_2,\alpha_2) \Leftrightarrow
(\bar{y}_1,\alpha_1) = (\bar{y}_2,\alpha_2) \Leftrightarrow
\bar{y}_1 = \bar{y}_2 \mbox{ and } \alpha_1 = \alpha_2 \, .
$$
Now, using the previous expression for $(\mu_\W \circ j_o)(\bar{y},\alpha)$, we have
$$
(\mu_\W \circ j_o)(\bar{y}_1,\alpha_1) = (\mu_\W \circ j_o)(\bar{y}_2,\alpha_2) \Leftrightarrow
(\bar{y}_1,[\alpha_1]) = (\bar{y}_2,[\alpha_2]) \Leftrightarrow
\bar{y}_1 = \bar{y}_2 \mbox{ and } [\alpha_1] = [\alpha_2] \, .
$$
Hence, by definition of $\W_o$, we have $L(\pi^{2k-1}_{k}(\bar{y}_1)) = L(\pi^{2k-1}_{k}(\bar{y}_2))
= \hat{C}(\bar{y}_1,\alpha_1) = \hat{C}(\bar{y}_2,\alpha_2)$. Locally, from
the third equality we obtain
$$
p(\alpha_1) + p_A^i(\alpha_1)q_{i+1}^A(\bar{y}_1) = p(\alpha_2) + p_A^i(\alpha_2)q_{i+1}^A(\bar{y}_2) \, ,
$$
but $[\alpha_1] = [\alpha_2]\, \Longrightarrow\,p^i_A(\alpha_1) = p_A^i([\alpha_1]) = p_A^i([\alpha_2]) = p^i_A(\alpha_2)$.
Then $p(\alpha_1) = p(\alpha_2)$, and $\alpha_1 = \alpha_2$.
Furthermore, $\mu_\W \circ j_o$ is surjective. In fact,
given $(\bar{y},[\alpha]) \in \W_r$, we wish to find
$(\bar{y},\beta) \in j_o(\W_o)$ such that $[\beta] = [\alpha]$. It suffices to take
$[\beta]$ such that, in local coordinates of $\W$,
$$
p_A^i(\beta) = p_A^i([\beta]) \quad , \quad
p(\beta) = L(\pi^{2k-1}_{k}(\bar{y})) - p_A^i([\alpha])q_{i+1}^A(\bar{y}) \, .
$$
This $\beta$ exists as a consequence of the definition of $\W_o$.
Now, since $\mu_\W \circ j_o$ is a one-to-one submersion, then, by equality
on the dimensions of $\W_o$ and $\W_r$, it is a one-to-one local diffeomorphism,
and thus a global diffeomorphism.

Finally, in order to prove that $\W_o$ is $\mu_\W$-transversal, it is necessary to check if
$\Lie(Y)(\xi) \equiv Y(\xi) \neq 0$,
for every $Y \in \ker{\mu_\W}_*$ and every constraint function $\xi$ defining $\W_o$.
Since $\W_o$ is defined by the constraint function $\hat{C} - \hat{L} = 0$ and
$\ker{\mu_\W}_* = \{\partial/\partial p \}$, we have
$$
\derpar{}{p}(\hat{C} - \hat{L}) = \derpar{}{p}(p + p_A^iq_{i+1}^A - \hat{L}) = 1 \, ,
$$
then $\W_o$ is $\mu_\W$-transversal.
\end{proof}

As a consequence of this last result, in the following we consider the diagram:
$$
\xymatrix{
\ & \ & \W \ar@/_1.3pc/[llddd]_{\rho_1} \ar@/^1.3pc/[rrdd]^{\rho_2} & \ & \ \\
\ & \ & \W_o \ar@{^{(}->}[u]^{j_o} \ar[lldd]_{\rho_1^o} \ar[rrd]^{\rho_2^o} \ar[rrdd]_{\hat{\rho}_2^o} \ar[ddd]^<(0.4){\rho_{J^{k-1}\pi}} \ar@/_3pc/[dddd]_-{\rho_\R^o}|(.675){\hole} & \ & \ \\
\ & \ & \ & \ & \Tan^*(J^{k-1}\pi) \ar[d]^-{\mu} \ar[lldd]_{\pi_{J^{k-1}\pi}}|(.25){\hole} \\
J^{2k-1}\pi \ar[rrd]_{\pi^{2k-1}_{k-1}} & \ & \ & \ & J^{k-1}\pi^* \ar[dll]^{\pi_{J^{k-1}\pi}^r} \\
\ & \ & J^{k-1}\pi \ar[d]^{\bar{\pi}^{k-1}} & \ & \ \\
\ & \ & \R & \ & \
}
$$

As a consequence of Proposition \ref{prop:WoDiffWr}, the submanifold $\W_o$ induces a section
$\hat{h} \in \Gamma(\mu_\W)$, that is, a map $\hat{h}\colon\W_r \to \W$. This section is
specified by giving the local \textsl{Hamiltonian function}
\begin{equation}
\label{eqn:LocalHamiltonianFunctionUnified}
\hat{H} = -\hat{L} + p_A^iq_{i+1}^A \, ,
\end{equation}
that is, $\hat{h}(t,q_i^A,q_j^A,p_A^i) = (t,q_i^A,q_j^A,-\hat{H},p_A^i)$. The section $\hat{h}$ is
called a \textsl{Hamiltonian section} of $\mu_\W$, or a \textsl{Hamiltonian $\mu_\W$-section}.

Next, we can define the forms
\begin{equation*}
\label{eqn:DefFormsWo}
\Theta_o = j_o^*\Theta = (\rho_2^o)^*\Theta_{k-1} \in \df^{1}(\W_o) \quad ; \quad
\Omega_o = j_o^*\Omega = (\rho_2^o)^*\Omega_{k-1} \in \df^{2}(\W_o) \quad ,
\end{equation*}
with local expressions
\begin{equation}
\label{eqn:LocalCoordFormsWo}
\Theta_o = p_A^i\d q_i^A + (\hat{L} - p_A^iq_{i+1}^A)\d t \quad ; \quad
\Omega_o = \d q_i^A \wedge \d p_A^i + \d(p_A^iq_{i+1}^A - \hat{L}) \wedge \d t \quad ,
\end{equation}
and we have the presymplectic Hamiltonian systems $(\W_o,\Omega_o)$ and $(\W_r,\Omega_r)$, with
$\Omega_r = \hat{h}^*(\Omega)$.

Finally, it is necessary to introduce the following concepts:

\begin{definition}
A section $\psi_o \in \Gamma(\rho_\R^o)$ is {\rm holonomic of type $r$} in $\W_o$, $1 \leqslant r \leqslant 2k-1$,
if the section $\rho_1^o \circ \psi_o \in \Gamma(\bar{\pi}^{2k-1})$ is holonomic of type $r$ in $J^{2k-1}\pi$.
\end{definition}

\begin{definition}
A vector field $X_o\in\vf(\W_o)$ is said to be a {\rm semispray of type $r$} in $\W_o$ if
every integral section $\psi_o$ of $X_o$ is holonomic of type $r$ in $\W_o$.
\end{definition}

The local expression of a semispray of type $r$ in $\W_o$ is
$$
X_o = f\derpar{}{t} + \sum_{i=0}^{2k-1-r}q_{i+1}^A\derpar{}{q_i^A} + \sum_{i=2k-r}^{2k-1}X_i^A\derpar{}{q_i^A}
+\sum_{i=0}^{k-1}G^i_A\derpar{}{p^i_A} \ ,
$$
and, in particular, for a semispray of type $1$ in $\W_o$ we have
$$
X_o = f\derpar{}{t} + \sum_{i=0}^{2k-2}q_{i+1}^A\derpar{}{q_i^A} + X_{2k-1}^A\derpar{}{q_{2k-1}^A}
+\sum_{i=0}^{k-1}G^i_A\derpar{}{p^i_A} \ .
$$


\subsection{Dynamical equations}
\label{subsection:DynamicalEquationsUnified}

The dynamical equations for non-autonomous dynamical systems in general
can be geometrically written in several equivalent ways, using sections
(curves) which are the dynamical trajectories, or vector fields whose
integral curves are the dynamical trajectories. In this section
we explore both of these ways, and prove their equivalence.


\subsubsection{Dynamical equations for sections}
\label{subsubsection:DynEqSections}

The \textsl{Lagrangian-Hamiltonian problem for sections} associated with the system
$(\W_o,\Omega_o)$ consists in finding sections $\psi_o \in \Gamma(\rho_\R^o)$
(that is, curves $\psi_o \colon \R \to \W_o$) characterized by the condition
\begin{equation}
\label{eqn:DynEquationSections}
\psi_o^*\inn(Y)\Omega_o = 0, \quad \mbox{for every } Y \in \vf(\W_o) \, .
\end{equation}
In natural coordinates, let $Y \in \vf(\W_o)$ be a generic
vector field given by
\begin{equation}
\label{eqn:LocalCoordGenericVectorFieldWo}
Y = f \derpar{}{t} + \sum_{i=0}^{k-1} f_i^A\derpar{}{q_i^A} + \sum_{j=k}^{2k-1} F_j^A\derpar{}{q_j^A} + \sum_{i=0}^{k-1} G_A^i\derpar{}{p_A^i}\, ,
\end{equation}
bearing in mind the coordinate expression
\eqref{eqn:LocalCoordFormsWo} of $\Omega_o$,
the contraction $\inn(X)\Omega_o$ is
\begin{align*}
\inn(Y)\Omega_o
&= f\left( \derpar{\hat{L}}{q_r^A} \d q_r^A - q_{i+1}^A \d p_A^i - p_A^i \d q_{i+1}^A \right)
+ f_0^A\left( \d p_A^0 - \derpar{\hat{L}}{q_0^A} \d t \right)  \\
&\qquad + f_i^A\left( \d p_A^i - \derpar{\hat{L}}{q_i^A}\d t + p_A^{i-1} \d t \right)
+ F_k^A\left(p_A^{k-1} - \derpar{\hat{L}}{q_k^A}\right)\d t + G_A^i\left( q_{i+1}^A \d t - \d q_i^A \right) \, .
\end{align*}
Thus, taking the pull-back by the section
$\psi_o = (t,q_i^A(t),q_j^A(t),p_A^i(t))$,
we obtain
\begin{align*}
\psi_o^*\inn(Y)\Omega_o
&= f\left( \derpar{\hat{L}}{q_r^A} \dot{q}_r^A - q_{i+1}^A \dot{p}_A^i - p_A^i \dot{q}_{i+1}^A \right)\d t
+ f_0^A\left( \dot{p}_A^0 - \derpar{\hat{L}}{q_0^A} \right)\d t \\
&\qquad + f_i^A\left( \dot{p}_A^i - \derpar{\hat{L}}{q_i^A} + p_A^{i-1} \right)\d t
+ F_k^A\left(p_A^{k-1} - \derpar{\hat{L}}{q_k^A}\right)\d t + G_A^i\left( q_{i+1}^A - \dot{q}_i^A \right)\d t \, .
\end{align*}
Finally, requiring this last expression to vanish and bearing in
mind that the equation must hold for every vector field $Y \in \vf(\W_o)$
(that is, it must hold for every function $f,f_i^A,F_j^A,G_A^i \in \Cinfty(\W_o)$)
we obtain the following system of equations
\begin{align}
\label{eqn:LocalCoordRedundantEqSections}
&\derpar{\hat{L}}{q_r^A} \dot{q}_r^A - q_{i+1}^A \dot{p}_A^i - p_A^i \dot{q}_{i+1}^A = 0 \\
\label{eqn:LocalCoordMomentumDiffEq1}
&\dot{p}_A^0 = \derpar{\hat{L}}{q_0^A} \\
\label{eqn:LocalCoordMomentumDiffEq2}
&\dot{p}_A^i = \derpar{\hat{L}}{q_i^A} - p_A^{i-1}\\
\label{eqn:LocalCoordLastMomentumCoord}
&p_A^{k-1} = \derpar{\hat{L}}{q_k^A} \\
\label{eqn:LocalCoordHolonomyK}
&\dot{q}_i^A = q_{i+1}^A \ .
\end{align}
It is easy to check that equation \eqref{eqn:LocalCoordRedundantEqSections}
is redundant, since it is a consequence of the others. Equations
\eqref{eqn:LocalCoordMomentumDiffEq1}, \eqref{eqn:LocalCoordMomentumDiffEq2}
and \eqref{eqn:LocalCoordHolonomyK} are differential equations whose solutions
are the functions defining the section $\psi_o$.
In fact, equations \eqref{eqn:LocalCoordMomentumDiffEq1}, \eqref{eqn:LocalCoordMomentumDiffEq2}
give the higher-order Euler-Lagrange equations, as we see at the end of this Section
and in Section \ref{section:lagform}.
In addition, observe that equations
\eqref{eqn:LocalCoordLastMomentumCoord} do not involve any derivative
of $\psi_o$: they are pointwise algebraic conditions.
These equations arise from the
$\hat{\rho}_2^o$-vertical part of the vector fields $Y$.
Moreover, we have the following result:

\begin{lem}
If $Y \in \vf^{V(\hat{\rho}_2^o)}(\W_o)$, then $\inn(Y)\Omega_o$ is $\rho_\R^o$-semibasic.
\end{lem}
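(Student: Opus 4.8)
The plan is to work in the natural coordinates of $\W_o$ and to exploit the fact that $\Omega_o$ only involves the jet coordinates up to order $k$. First I would pin down the module $\vf^{V(\hat{\rho}_2^o)}(\W_o)$. Since $\hat{\rho}_2^o(t,q_i^A,q_j^A,p_A^i) = (t,q_i^A,p_A^i)$, a vector field $Y$ is $\hat{\rho}_2^o$-vertical exactly when it annihilates the pullbacks of the coordinates $t,q_0^A,\ldots,q_{k-1}^A,p_A^0,\ldots,p_A^{k-1}$. Hence every $Y \in \vf^{V(\hat{\rho}_2^o)}(\W_o)$ has the local form $Y = \sum_{j=k}^{2k-1} F_j^A\,\partial/\partial q_j^A$, with no $\partial/\partial t$, no $\partial/\partial q_i^A$ for $i<k$, and no $\partial/\partial p_A^i$ components.

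The key observation is that in the expression \eqref{eqn:LocalCoordFormsWo}, $\Omega_o = \d q_i^A \wedge \d p_A^i + \d(p_A^i q_{i+1}^A - \hat{L})\wedge \d t$, the only position-type coordinates that actually appear are $q_0^A,\ldots,q_k^A$: the sum $p_A^i q_{i+1}^A$ runs over $i=0,\ldots,k-1$ and so involves at most $q_k^A$, while $\hat{L} = (\pi^{2k-1}_k\circ\rho_1)^*L$ depends only on $t,q_0^A,\ldots,q_k^A$. Consequently $\inn(\partial/\partial q_j^A)\Omega_o = 0$ for every $j>k$, so that only the $j=k$ term of $Y$ can contribute. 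A direct contraction then gives $\inn(\partial/\partial q_k^A)\Omega_o = (p_A^{k-1} - \partial\hat{L}/\partial q_k^A)\,\d t$, whence $\inn(Y)\Omega_o = F_k^A\,(p_A^{k-1} - \partial\hat{L}/\partial q_k^A)\,\d t$. This is precisely the $F_k^A$-term already isolated in the generic contraction computed just before the statement, and being a multiple of $\d t = (\rho_\R^o)^*\eta$ it is $\rho_\R^o$-semibasic, which establishes the lemma.

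As a cleaner alternative I could argue without coordinates: since $\Omega_o = (\rho_2^o)^*\Omega_{k-1}$ and $\hat{\rho}_2^o = \mu\circ\rho_2^o$, an $\hat{\rho}_2^o$-vertical $Y$ satisfies $\mu_*((\rho_2^o)_*Y) = 0$, i.e. $(\rho_2^o)_*Y \in \ker\mu_* = \langle\partial/\partial p\rangle$; contracting the canonical form $\Omega_{k-1} = \d q_i^A\wedge\d p_A^i - \d p\wedge\d t$ with $\partial/\partial p$ yields $\inn(\partial/\partial p)\Omega_{k-1} = -\d t$, and pulling back by $\rho_2^o$ (using $t\circ\rho_2^o = t$) again shows $\inn(Y)\Omega_o$ is a multiple of $\d t = (\rho_\R^o)^*\eta$. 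The main obstacle, and really the only nontrivial point, is recognizing that the higher jet coordinates $q_{k+1}^A,\ldots,q_{2k-1}^A$ do not enter $\Omega_o$ at all, so that the bulk of the vertical field $Y$ lies automatically in $\ker\Omega_o$; once this is seen, either route collapses to a one-line contraction.
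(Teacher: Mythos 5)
Your proof is correct and follows essentially the same route as the paper: both identify the $\hat{\rho}_2^o$-vertical fields with the span of $\derpar{}{q_j^A}$, $k\leqslant j\leqslant 2k-1$, and compute the contraction with $\Omega_o$ directly in coordinates, obtaining $\left(p_A^{k-1}-\derpar{\hat L}{q_k^A}\right)\d t$ for $j=k$ and $0$ otherwise. Your coordinate-free alternative via $(\rho_2^o)_*Y\in\ker\mu_*=\langle\partial/\partial p\rangle$ and $\inn(\partial/\partial p)\Omega_{k-1}=-\d t$ is a valid bonus not present in the paper, but the core argument is the same.
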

\begin{proof}
A direct calculation in coordinates leads to this result.
Bearing in mind that a local basis for the $\hat{\rho}_2^o$-vertical
vector fields is given by \eqref{eqn:LocalBasisKerOmega} and the local
expression \eqref{eqn:LocalCoordFormsWo} of $\Omega_o$, we have
\begin{equation*}
\inn\left( \derpar{}{q_j^A} \right) \Omega_o =
\begin{cases}
\left(p_A^{k-1} - \derpar{\hat{L}}{q_k^A}\right)\d t, & \mbox{for } j = k; \\
0 = 0\cdot\d t, & \mbox{for } j = k+1,\ldots,2k-1.
\end{cases}
\end{equation*}
Thus, in both cases we obtain a $\rho_\R^o$-semibasic form.
\end{proof}

As a consequence of this result, we can define the submanifold
\begin{equation*}
\label{eqn:DefSubmanifoldW1}
\W_c = \left\{ w \in \W_o \colon (\inn(Y)\Omega_o)(w)=0 \mbox{ for every } Y \in \vf^{V(\hat{\rho}_2^o)}(\W_o)\right\} \stackrel{j_1}{\hookrightarrow} \W_o \, ,
\end{equation*}
where every section $\psi_o$ solution of equation
\eqref{eqn:DynEquationSections} must take values.
It is called the \textsl{first constraint submanifold} of the
Hamiltonian presymplectic system $(\W_o,\Omega_o)$.

Locally, $\W_c$ is defined in $\W_o$ by the constraints $p_A^{k-1} - \partial \hat{L} / \partial q_k^A = 0$,
as we have seen in \eqref{eqn:LocalCoordLastMomentumCoord} and in the proof of the previous Lemma.
In combination with equations \eqref{eqn:LocalCoordMomentumDiffEq2}, we have:

\begin{prop}
\label{prop:W1GraphFLSections}
$\W_c$ contains a submanifold $\W_1 \hookrightarrow \W_c$ which can be identified
as the graph of a map $\Leg \colon J^{2k-1}\pi \to J^{k-1}\pi^*$ defined locally by
$$
\Leg^*t = t \quad , \quad
\Leg^*q_r^A = q_r^A \quad , \quad
\Leg^*p_A^{r-1} = \sum_{i=0}^{k-r}(-1)^i d_T^i\left( \derpar{\hat{L}}{q_{r+i}^A} \right) \, .
$$
\end{prop}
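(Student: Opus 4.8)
The plan is to generate the defining relations of $\W_1$ inside $\W_c$ by iterating the momentum equations \eqref{eqn:LocalCoordMomentumDiffEq2}, starting from the constraint that already defines $\W_c$, and then to recognise the resulting expressions as the components of a map defined on $J^{2k-1}\pi$. Recall that $\W_c$ is cut out of $\W_o$ by the single family of constraints $p_A^{k-1} = \partial\hat{L}/\partial q_k^A$ coming from \eqref{eqn:LocalCoordLastMomentumCoord}; this is precisely the $r=k$ instance of the stated formula, since $\sum_{i=0}^{0}(-1)^i d_T^i(\partial\hat{L}/\partial q_{k+i}^A) = \partial\hat{L}/\partial q_k^A$. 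So the base case is already in hand.

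First I would rewrite \eqref{eqn:LocalCoordMomentumDiffEq2} as $p_A^{r-1} = \partial\hat{L}/\partial q_r^A - \dot{p}_A^r$ and replace the dynamical derivative $\dot{\ }$ acting on momenta by the total time derivative $d_T$ of \eqref{eqn:LocalCoordTotalTimeDerivative}, which is legitimate on the holonomic sections singled out by \eqref{eqn:LocalCoordHolonomyK}; this converts the differential relation into the pointwise constraint $p_A^{r-1} = \partial\hat{L}/\partial q_r^A - d_T(p_A^r)$. Then I would argue by downward induction on $r$: assuming $p_A^{r} = \sum_{i=0}^{k-r-1}(-1)^i d_T^i(\partial\hat{L}/\partial q_{r+1+i}^A)$, I substitute this into the previous relation, distribute $d_T$ through the sum, and reindex by $j=i+1$; the sign $(-1)^i$ becomes $(-1)^{j-1}$, which combines with the overall minus sign to give $(-1)^j$, yielding $p_A^{r-1} = \sum_{i=0}^{k-r}(-1)^i d_T^i(\partial\hat{L}/\partial q_{r+i}^A)$ and closing the induction for all $r = 1,\dots,k$.

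It then remains to verify that these expressions genuinely define a map out of $J^{2k-1}\pi$, which is the point deserving care. Since $\hat{L}$ is the pull-back of a function on $J^k\pi$, each $\partial\hat{L}/\partial q_{r+i}^A$ is of jet order $k$, and each application of $d_T$ raises the order by one; the top-order term of $p_A^{r-1}$ is $d_T^{k-r}(\partial\hat{L}/\partial q_k^A)$, a function of order $2k-r$. Because $r\geqslant 1$ this never exceeds $2k-1$, so every $p_A^{r-1}$ is a well-defined function on $J^{2k-1}\pi$; together with $\Leg^*t=t$ and $\Leg^*q_r^A=q_r^A$ these are exactly the components of $\Leg\colon J^{2k-1}\pi\to J^{k-1}\pi^*$. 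Since the relations solve each momentum coordinate $p_A^{r-1}$ explicitly in terms of $(t,q_0^A,\dots,q_{2k-1}^A)$ and $d_T$ introduces no momentum dependence, they are functionally independent and cut out an embedded submanifold $\W_1\hookrightarrow\W_c$ on which $\rho_1^o$ restricts to a diffeomorphism onto $J^{2k-1}\pi$; under the identification $\W_o\cong\W_r$ this is precisely the graph of $\Leg$.

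The step I expect to be the main obstacle is the justification of the replacement of $\dot{\ }$ by $d_T$ at every stage of the recursion. The holonomy supplied by \eqref{eqn:LocalCoordHolonomyK} only guarantees $\dot{q}_i^A=q_{i+1}^A$ up to order $i=k-1$, whereas identifying $\tfrac{d}{dt}(p_A^r\circ\psi_o)$ with $(d_T p_A^r)\circ\psi_o$ formally demands holonomy up to the order of $p_A^r$. One must therefore be careful to regard $\W_1$ as defined by the $d_T$-expressions independently of any section, with the dynamical equations \eqref{eqn:LocalCoordMomentumDiffEq2} serving only to produce those expressions, so that the graph structure follows from the order count above as a purely algebraic consequence rather than from full holonomy.
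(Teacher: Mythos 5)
Your proposal is correct and follows essentially the same route as the paper's proof: starting from the constraint $p_A^{k-1}-\partial\hat{L}/\partial q_k^A=0$ defining $\W_c$ and recursively substituting the momentum equations \eqref{eqn:LocalCoordMomentumDiffEq2} (with $\dot{p}_A^i$ read as $d_T(p_A^i)$) to produce the cascade of constraints that cut out $\W_1$ as the graph of $\Leg$. Your added order count showing each $\Leg^*p_A^{r-1}$ lives on $J^{2k-1}\pi$, and your remark that $\W_1$ should be defined by the $d_T$-expressions as purely algebraic constraints, make explicit points the paper leaves implicit, but the argument is the same.
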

\begin{proof}
As $\W_c$ is defined locally by the constraints $p_A^{k-1} - \partial \hat{L} / \partial q_k^A = 0$,
it suffices to prove that these constraints give rise to the functions defining
the map given above, and thus to the submanifold $\W_1$. We do this in coordinates.

Taking into account that $d_T(p_A^i) = \dot{p}_A^i$ along sections, the constraint
function defining $\W_c$, in combination with equations \eqref{eqn:LocalCoordMomentumDiffEq2}
give rise to the following constraint functions
\begin{align*}
&p_A^{k-1} - \derpar{\hat{L}}{q_k^A} = 0 \\
&p^{k-2}_A - \left( \derpar{\hat{L}}{q_{k-1}^A} - d_T(p_A^{k-1}) \right) = p_A^{k-2} - \sum_{i=0}^{1}(-1)^i d_T^i\left(\derpar{\hat{L}}{q_{k-1+i}^A}\right) = 0 \\
&\qquad \qquad \qquad \qquad \vdots \\
&p^{1}_A - \left( \derpar{\hat{L}}{q_{2}^A} - d_T(p_A^{2}) \right) = p_A^1 - \sum_{i=0}^{k-2}(-1)^i d_T^i\left(\derpar{\hat{L}}{q_{2+i}^A}\right) = 0 \\
&p^{0}_A - \left( \derpar{\hat{L}}{q_{1}^A} - d_T(p_A^{1}) \right) = p_A^0 - \sum_{i=0}^{k-1}(-1)^i d_T^i\left(\derpar{\hat{L}}{q_{1+i}^A}\right) = 0 \ ,
\end{align*}
Therefore, these constraints define a submanifold $\W_1 \hookrightarrow \W_c$ and
we may consider that this $\W_1$ is the graph of a map $\Leg \colon J^{2k-1}\pi \to J^{k-1}\pi^*$ given by
$$
\Leg^*t = t \quad , \quad
\Leg^*q_r^A = q_r^A \quad , \quad
\Leg^*p_A^{r-1} = \sum_{i=0}^{k-r}(-1)^i d_T^i\left( \derpar{\hat{L}}{q_{r+i}^A} \right) \, .
$$
\end{proof}

Bearing in mind that the submanifold $\W_o \hookrightarrow \W$ is defined locally by
the constraint function $p + p_A^iq_{i+1}^A - \hat{L} = 0$, and that $\W_1$ is a
submanifold of $\W_c$, and thus a sumbanifold of $\W_o$,
from the above Proposition we can state the following result,
which is a straightforward consequence of the previous result:

\begin{corol}
\label{corol:W1GraphExtendedFLSections}
$\W_1$ is the graph of a map $\widetilde{\Leg} \colon J^{2k-1}\pi \to \Tan^*(J^{k-1}\pi)$ defined locally by
$$
\widetilde{\Leg}^*t = t \quad , \quad
\widetilde{\Leg}^*q_r^A = q_r^A \, ,
$$
$$
\widetilde{\Leg}^*p = \hat{L} - \sum_{r=1}^kq_r^A\sum_{i=0}^{k-r}(-1)^id_T^i\left( \derpar{\hat{L}}{q_{r+i}^A} \right) \quad , \quad
\widetilde{\Leg}^*p_A^{r-1} = \sum_{i=0}^{k-r}(-1)^i d_T^i\left( \derpar{\hat{L}}{q_{r+i}^A} \right) \, .
$$
\end{corol}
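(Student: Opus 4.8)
The plan is to leverage Proposition \ref{prop:W1GraphFLSections} together with the explicit defining constraint of the submanifold $\W_o \hookrightarrow \W$. Since $\W_1$ sits inside $\W_c$, which in turn is a submanifold of $\W_o$, and $\W_o$ is itself cut out of $\W$ by the single constraint $p + p_A^iq_{i+1}^A - \hat{L} = 0$, the key observation is that on $\W_1$ the coordinates $q_r^A$ and $p_A^{r-1}$ are already pinned down by the map $\Leg$ from the previous Proposition, while the extra fiber coordinate $p$ of $\Tan^*(J^{k-1}\pi)$ (which $J^{k-1}\pi^*$ quotients out) is determined by the $\W_o$-constraint.

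First I would recall that the diffeomorphism $\W_o \cong \W_r$ of Proposition \ref{prop:WoDiffWr} and the inclusion $j_o$ express $p$ on $\W_o$ as $p = \hat{L} - p_A^iq_{i+1}^A$; this is precisely the missing coordinate needed to lift a map into $J^{k-1}\pi^*$ to a map into $\Tan^*(J^{k-1}\pi)$. So the strategy is: take the map $\Leg \colon J^{2k-1}\pi \to J^{k-1}\pi^*$ whose graph is $\W_1$, and augment it by the single additional component $\widetilde{\Leg}^*p$ dictated by the $\W_o$-constraint, thereby producing $\widetilde{\Leg} \colon J^{2k-1}\pi \to \Tan^*(J^{k-1}\pi)$ with $\mu \circ \widetilde{\Leg} = \Leg$. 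Concretely, I would substitute the expressions $\Leg^*p_A^{r-1} = \sum_{i=0}^{k-r}(-1)^i d_T^i(\partial\hat{L}/\partial q_{r+i}^A)$ from Proposition \ref{prop:W1GraphFLSections} into the constraint $p = \hat{L} - p_A^iq_{i+1}^A$, reindexing the sum $p_A^iq_{i+1}^A$ (running over $i = 0,\ldots,k-1$) as a sum over $r = 1,\ldots,k$ of $q_r^A \, p_A^{r-1}$.

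The main computational point is simply this reindexing and substitution: writing $p_A^iq_{i+1}^A = \sum_{i=0}^{k-1} q_{i+1}^A p_A^i = \sum_{r=1}^{k} q_r^A p_A^{r-1}$ and then replacing each $p_A^{r-1}$ by its $\Leg$-value yields
$$
\widetilde{\Leg}^*p = \hat{L} - \sum_{r=1}^{k} q_r^A \sum_{i=0}^{k-r}(-1)^i d_T^i\left( \derpar{\hat{L}}{q_{r+i}^A} \right) \, ,
$$
which is exactly the claimed formula. The remaining components $\widetilde{\Leg}^*t = t$, $\widetilde{\Leg}^*q_r^A = q_r^A$, and $\widetilde{\Leg}^*p_A^{r-1} = \sum_{i=0}^{k-r}(-1)^i d_T^i(\partial\hat{L}/\partial q_{r+i}^A)$ are inherited unchanged from $\Leg$ via the identification of $p_A^{r-1}$ as coordinates common to both $J^{k-1}\pi^*$ and $\Tan^*(J^{k-1}\pi)$.

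I do not anticipate a genuine obstacle here, since the statement is essentially a bookkeeping corollary: the whole content is that $\W_1$, being a submanifold of $\W_o$, automatically inherits the $p$-coordinate relation from the $\W_o$-constraint, so that its graph description as a map into $J^{k-1}\pi^*$ promotes canonically to one into $\Tan^*(J^{k-1}\pi)$. The one item deserving care is the index manipulation in the sum defining $\widetilde{\Leg}^*p$ — making sure the range $r = 1,\ldots,k$ and the inner range $i = 0,\ldots,k-r$ match correctly so that no derivative $\partial\hat{L}/\partial q_s^A$ with $s > k$ (which vanishes, since $\hat{L}$ is a pullback from $J^k\pi$) is spuriously introduced — but this is routine rather than subtle.
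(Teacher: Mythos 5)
Your proposal is correct and is precisely the argument the paper intends: the Corollary is stated as a ``straightforward consequence'' of Proposition \ref{prop:W1GraphFLSections} together with the constraint $p + p_A^iq_{i+1}^A - \hat{L} = 0$ defining $\W_o$, which is exactly the substitution and reindexing you carry out. No gaps.
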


\textbf{Remark}: The submanifold $\W_1$ can be obtained from $\W_c$
using a constraint algorithm. Hence, $\W_1$ acts as the initial
phase space of the system.

The maps $\widetilde{\Leg}$ and $\Leg$ are called the
\textsl{extended Legendre-Ostrogradsky map} and the
\textsl{restricted Legendre-Ostrogradsky map} associated
to the Lagrangian density $\Lag$, respectively.
A justification of this terminology is given in Section
\ref{section:hamform}. Now we can give the following definition:

\begin{definition}
A Lagrangian density $\Lag \in \df^{1}(J^{k}\pi)$ is {\rm regular}
if the restricted Legendre-Ostrogradsky map $\Leg$ is a local diffeomorphism.
If the map $\Leg$ is a global diffeomorphism, then $\Lag$ is said to be
{\rm hyperregular}.
\end{definition}

Computing in natural coordinates the local expression of the
tangent map to $\Leg$, the regularity condition for $\Lag$
is equivalent to
$$
\det\left( \derpars{L}{q_k^B}{q_k^A} \right)(\bar{y}) \neq 0, \quad \mbox{for every } \bar{y} \in J^{k}\pi \, .
$$

Equivalently, if we denote $\hat{p}^{r-1}_A = \Leg^*p_A^{r-1}$, then
the Lagrangian density $\Lag$ is regular if, and only if, the set
$(t,q_i^A,\hat{p}_A^i)$, $0 \leqslant i \leqslant k-1$,
is a set of local coordinates in $J^{2k-1}\pi$. The local functions
$\hat{p}_A^i$ are called the \textsl{Jacobi-Ostrogradsky momentum coordinates},
and they satisfy that
\begin{equation}
\label{eqn:RelationsMomenta}
\hat{p}_A^{r-1} = \derpar{L}{q_r^A} - d_T(\hat{p}_A^r) \quad ,
\end{equation}
which are exactly the relations given by \eqref{eqn:LocalCoordMomentumDiffEq2},
taking into account that $d_T = \d/\d t$ along sections.

Notice that
equations \eqref{eqn:LocalCoordMomentumDiffEq1}, \eqref{eqn:LocalCoordMomentumDiffEq2},
and \eqref{eqn:LocalCoordHolonomyK} do not allow us to determinate the functions
$q_j^A$, $k \leqslant j \leqslant 2k-1$, of the section $\psi_o$.
Thus, in the general case, we need an additional
condition when stating the problem, which is the holonomy condition for the section $\psi_o$.
Therefore, the Lagrangian-Hamiltonian problem must be reformulated as follows:

\noindent\textsl{The Lagrangian-Hamiltonian problem
consists in finding holonomic sections $\psi_o \in \Gamma(\rho_\R^o)$ characterized by the
equation \eqref{eqn:DynEquationSections}}.

\textbf{Remarks}:
\bit
\item
In fact, the functions $q_j^A$, $k \leqslant j \leqslant 2k-1$,
are determined by the equations
\eqref{eqn:LocalCoordMomentumDiffEq1} and \eqref{eqn:LocalCoordMomentumDiffEq2},
bearing in mind that the section $\psi_o$ must lie in the submanifold
$\W_1 = {\rm graph}(\widetilde{\Leg})$.
It is easy to see that, by replacing the local expression of the
extended Legendre-Ostrogradsky map in the equations
\eqref{eqn:LocalCoordMomentumDiffEq1} and \eqref{eqn:LocalCoordMomentumDiffEq2},
these equations
lead to the Euler-Lagrange equations and to the remaining $(k-1)n$ equations
that give the full holonomy condition:
\begin{align*}
&\left(\dot{q}_j^B - q_{j+1}^B \right) \restric{\derpars{\hat{L}}{q_k^B}{q_k^A}}{\psi_o} - \sum_{i=k}^{j-1}\left(\dot{q}_i^B-q_{i+1}^B\right)(\cdots\cdots) = 0 \qquad (k \leqslant j \leqslant 2k-2) \\
&\restric{\derpar{\hat{L}}{q_0^A}}{\psi_o} - \restric{\frac{\d}{\d t}\derpar{\hat{L}}{q_1^A}}{\psi_o}
+ \restric{\frac{\d^2}{\d t^2}\derpar{\hat{L}}{q_2^A}}{\psi_o} + \ldots +
(-1)^k \restric{\frac{\d^k}{\d t^k}\derpar{\hat{L}}{q_k^A}}{\psi_o} = 0 \, ,
\end{align*}
where the terms in brackets $(\cdots)$ contain terms involving
partial derivatives of the Lagrangian function and iterated total
time derivatives, and the first sum (for $j=k$) is empty.
However, observe that these equations may or may not be compatible,
and a sufficient condition to ensure compatibility is the regularity of the Lagrangian
density. Thus, for singular Lagrangian densities, the holonomy condition for
the section $\psi_o$ is required.

\item
The requirement of the section $\psi_o$ to be holonomic is a relevant difference
from the first-order case, where the holonomy condition
is deduced straightforwardly from the dynamical equations when
written in local coordinates.
Nevertheless, in the higher-order case, the equations allow us to recover only the
holonomy of type $k$, as seen in \eqref{eqn:LocalCoordHolonomyK}, and the
highest-order holonomy condition can only be recovered from the equations
if the Lagrangian density is regular. Hence, this condition is required ``ad hoc''.
\item
The regularity of the Lagrangian density has no relevant role
at first sight. However, as we have seen in the first remark,
equations \eqref{eqn:LocalCoordMomentumDiffEq1} and \eqref{eqn:LocalCoordMomentumDiffEq2}
give the higher-order Euler-Lagrange equations, which have a unique solution
if the Lagrangian density is regular. For singular Lagrangians, these equations
may give rise to new constraints, and a constraint algorithm should be used 
for finding a submanifold where the equations can be solved.
\eit


\subsubsection{Dynamical equations for vector fields}
\label{subsubsection:DynEqVectFields}

The \textsl{Lagrangian-Hamiltonian problem for vector fields} associated with
the system $(\W_o,\Omega_o)$ consists in finding vector fields $X_o \in \vf(\W_o)$
such that
\begin{equation}
\label{eqn:DynEquationVectorFields}
\inn(X_o)\Omega_o = 0 \quad ; \quad \inn(X_o)(\rho_\R^o)^*\eta = 1 \, .
\end{equation}

According to \cite{art:Chinea_deLeon_Marrero94} and \cite{art:deLeon_Marin_Marrero_Munoz_Roman02},
we have:

\begin{prop}
\label{prop:ExistSolDynEq}
A solution $X_o \in \vf(\W_o)$ to equation \eqref{eqn:DynEquationVectorFields} exists only
on the points of the submanifold $\mathcal{S}_c$ defined by
\begin{equation}
\label{eqn:DefW1}
\mathcal{S}_c = \left\{ w \in \W_o \colon (\inn(Z)\d\hat{H})(w) = 0 \mbox{ for every } Z \in \ker\Omega \right\}
\stackrel{j_1}{\hookrightarrow} \W_o \, .
\end{equation}
\end{prop}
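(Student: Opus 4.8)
The plan is to read \eqref{eqn:DynEquationVectorFields} as a presymplectic equation and apply the standard linear-algebraic obstruction for equations driven by a degenerate $2$-form, in the form adapted to the non-autonomous (cosymplectic-type) setting of \cite{art:Chinea_deLeon_Marrero94,art:deLeon_Marin_Marrero_Munoz_Roman02}. The guiding idea is that a vector field $X_o$ which annihilates $\Omega_o$ but pairs to $1$ with $(\rho_\R^o)^*\eta$ can exist only where the differential of the Hamiltonian is compatible with $\ker\Omega$, and this compatibility is precisely the defining condition of $\mathcal{S}_c$.

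The first step is a pointwise identity: for every $Z\in\ker\Omega$ one has
$$
\inn(Z)\Omega_o = (\inn(Z)\d\hat{H})\,(\rho_\R^o)^*\eta \, .
$$
This follows by rewriting $\Omega_o$ from \eqref{eqn:LocalCoordFormsWo} as $\Omega_o = \d q_i^A\wedge\d p_A^i + \d\hat{H}\wedge\d t$, with $\hat{H}$ given by \eqref{eqn:LocalHamiltonianFunctionUnified}, and recalling from \eqref{eqn:LocalBasisKerOmega} that $\ker\Omega$ is generated by $\partial/\partial q_k^A,\ldots,\partial/\partial q_{2k-1}^A$. Contracting such a $Z$ with the spatial summand $\d q_i^A\wedge\d p_A^i$ yields zero, since that summand pairs only the coordinates $q_0^A,\ldots,q_{k-1}^A$ with their conjugate momenta; and contracting with $\d\hat{H}\wedge\d t$ leaves only the $\d t$-term because $\inn(Z)\d t = 0$. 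This is the coordinate-free counterpart of the Lemma preceding the statement.

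The second step takes an arbitrary solution $X_o$ of \eqref{eqn:DynEquationVectorFields} at a point $w\in\W_o$ and evaluates $\Omega_o(Z,X_o)$ in two ways for $Z\in\ker\Omega$. On the one hand, $\inn(X_o)\Omega_o = 0$ gives $\Omega_o(X_o,Z)=0$, hence $\Omega_o(Z,X_o)=0$. On the other hand, using the identity above together with the normalization $\inn(X_o)(\rho_\R^o)^*\eta = 1$,
$$
\Omega_o(Z,X_o) = (\inn(Z)\Omega_o)(X_o) = (\inn(Z)\d\hat{H})\,\inn(X_o)(\rho_\R^o)^*\eta = \inn(Z)\d\hat{H} \, .
$$
Comparing the two evaluations gives $(\inn(Z)\d\hat{H})(w)=0$ for every $Z\in\ker\Omega$, which is exactly the condition defining $\mathcal{S}_c$; thus $w\in\mathcal{S}_c$, proving that solutions can only exist on $\mathcal{S}_c$.

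I expect the only delicate point to be the bookkeeping around the kernel distribution: \eqref{eqn:LocalBasisKerOmega} describes $\ker\Omega$ on $\W$, whereas here everything must be read on $\W_o$ through $j_o$. I would handle this by working directly with $\Omega_o$ on $\W_o$ and identifying $\ker\Omega$ with the $\hat{\rho}_2^o$-vertical distribution $\langle\partial/\partial q_k^A,\ldots,\partial/\partial q_{2k-1}^A\rangle$, i.e.\ the very vector fields analyzed in the preceding Lemma; the normalization $\inn(X_o)(\rho_\R^o)^*\eta = 1$ is what converts the $\rho_\R^o$-semibasic output $\inn(Z)\Omega_o$ into the scalar obstruction $\inn(Z)\d\hat{H}$. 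As a closing remark one can note that $\inn(\partial/\partial q_j^A)\d\hat{H}=0$ automatically for $j>k$, while $\inn(\partial/\partial q_k^A)\d\hat{H} = p_A^{k-1}-\derpar{\hat{L}}{q_k^A}$, so that $\mathcal{S}_c$ coincides with the first constraint submanifold $\W_c$, consistent with the reuse of the inclusion $j_1$.
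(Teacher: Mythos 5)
Your argument is correct. Note first that the paper itself offers no proof of Proposition \ref{prop:ExistSolDynEq}: it is stated with a citation to \cite{art:Chinea_deLeon_Marrero94,art:deLeon_Marin_Marrero_Munoz_Roman02}, and its content is only verified afterwards in coordinates, where equation \eqref{eqn:LocalCoordLastMomentumCoordVectorField}, $f\bigl(p_A^{k-1}-\derpar{\hat{L}}{q_k^A}\bigr)=0$ together with $f=1$, appears as the compatibility condition for $X_o$. Your two-step argument is exactly the intrinsic version of that: the identity $\inn(Z)\Omega_o=(\inn(Z)\d\hat{H})\,(\rho_\R^o)^*\eta$ for $Z\in\ker\Omega$ is the coordinate-free restatement of the Lemma preceding the definition of $\W_c$ (using $\Omega_o=\d q_i^A\wedge\d p_A^i+\d\hat{H}\wedge\d t$ and $\inn(Z)\d t=0$), and evaluating $\Omega_o(Z,X_o)$ both via $\inn(X_o)\Omega_o=0$ and via the normalization $\inn(X_o)(\rho_\R^o)^*\eta=1$ forces $\inn(Z)\d\hat{H}=0$ at any point carrying a solution. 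What your route buys is a self-contained proof with no appeal to the general presymplectic constraint machinery of the cited references; what it loses is nothing, since your closing identification of $\mathcal{S}_c$ with $\W_c$ (only the $j=k$ contractions are nontrivial) reproduces the paper's own computation in the proof of Proposition \ref{prop:W1GraphFLVectorFields}. The one point worth stating explicitly rather than in passing is the identification of $\ker\Omega$ (a distribution on $\W$, spanned by $\partial/\partial q_k^A,\ldots,\partial/\partial q_{2k-1}^A$) with the corresponding coordinate fields on $\W_o$ in the adapted chart $(t,q_i^A,q_j^A,p_A^i)$ — i.e.\ the $\hat{\rho}_2^o$-vertical distribution — since these fields are not literally tangent to $\W_o\hookrightarrow\W$; you flag this correctly, and it is the same implicit identification the paper makes in its Lemma.
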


We have the following result:

\begin{prop}
\label{prop:W1GraphFLVectorFields}
The submanifold $\mathcal{S}_c \hookrightarrow \W_o$ contains a submanifold
$\mathcal{S}_1 \hookrightarrow \mathcal{S}_c$ which is the graph of the extended
Legendre-Ostrogradsky map; that is, $\mathcal{S}_1 = {\rm graph}\,\widetilde{\Leg}$;
and hence $\mathcal{S}_1 = \W_1$.
\end{prop}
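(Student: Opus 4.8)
The plan is to reduce the statement to the already-established sectional result by computing the local constraints that cut out $\mathcal{S}_c$ inside $\W_o$ and matching them with those defining $\W_c$. Once I show $\mathcal{S}_c = \W_c$, the conclusion follows from Proposition~\ref{prop:W1GraphFLSections} and Corollary~\ref{corol:W1GraphExtendedFLSections}.

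First I would carry out the coordinate computation of the constraints in \eqref{eqn:DefW1}. Using the local expression \eqref{eqn:LocalHamiltonianFunctionUnified}, $\hat{H} = -\hat{L} + p_A^iq_{i+1}^A$, together with the local basis \eqref{eqn:LocalBasisKerOmega} of $\ker\Omega$, the conditions $\inn(Z)\d\hat{H} = 0$ for every $Z \in \ker\Omega$ reduce to $\partial\hat{H}/\partial q_j^A = 0$ for $k \leqslant j \leqslant 2k-1$. Since $\hat{L}$ depends only on $(t,q_0^A,\ldots,q_k^A)$ and the coupling term $p_A^iq_{i+1}^A$ involves the fibre coordinates $q_j^A$ only up to $j=k$, these derivatives vanish identically for $j>k$, while for $j=k$ one finds $\partial\hat{H}/\partial q_k^A = p_A^{k-1} - \partial\hat{L}/\partial q_k^A$. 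Hence $\mathcal{S}_c$ is locally defined by the $n$ equations $p_A^{k-1} - \partial\hat{L}/\partial q_k^A = 0$.

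These are exactly the constraints \eqref{eqn:LocalCoordLastMomentumCoord} defining $\W_c$ in the sectional formulation, so $\mathcal{S}_c = \W_c$ as submanifolds of $\W_o$. Proposition~\ref{prop:W1GraphFLSections} then provides a submanifold $\W_1 \hookrightarrow \W_c$ identified with $\operatorname{graph}\Leg$, and Corollary~\ref{corol:W1GraphExtendedFLSections} identifies it with $\operatorname{graph}\widetilde{\Leg}$. Putting $\mathcal{S}_1 := \W_1$ then exhibits a submanifold $\mathcal{S}_1 \hookrightarrow \mathcal{S}_c$ that is the graph of $\widetilde{\Leg}$, which is the assertion, with $\mathcal{S}_1 = \W_1$.

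The hard part is conceptual rather than computational. One is tempted to recover the entire chain of momentum constraints defining $\W_1$ directly in the vector-field picture, by imposing tangency of a solution $X_o$ to $\mathcal{S}_c$. But solving $\inn(X_o)\Omega_o = 0$ together with $\inn(X_o)(\rho_\R^o)^*\eta = 1$ only fixes the coefficients $f_i^A = q_{i+1}^A$ and $G_A^i = \partial\hat{L}/\partial q_i^A - p_A^{i-1}$ (the analogues of \eqref{eqn:LocalCoordMomentumDiffEq1} and \eqref{eqn:LocalCoordMomentumDiffEq2}), leaving the coefficients $F_j^A$, $k \leqslant j \leqslant 2k-1$, undetermined. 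Because of this, the tangency condition $X_o(p_A^{k-1} - \partial\hat{L}/\partial q_k^A) = 0$ mixes the desired constraint with the free coefficients through the Hessian factor $(F_k^B - q_{k+1}^B)\,\partial^2\hat{L}/\partial q_k^B\partial q_k^A$, so it does not reproduce the constraints of $\W_1$ unless full holonomy $F_k^B = q_{k+1}^B$ is invoked. Routing the proof through $\mathcal{S}_c = \W_c$ and the sectional results avoids this entanglement altogether.
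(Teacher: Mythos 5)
Your proposal is correct and follows essentially the same route as the paper: the paper likewise contracts $\d\hat{H}$ with the basis \eqref{eqn:LocalBasisKerOmega} of $\ker\Omega$ to find that $\mathcal{S}_c$ is cut out by $p_A^{k-1}-\partial\hat{L}/\partial q_k^A=0$, and then explicitly appeals to ``the same argument as in the proof of Proposition \ref{prop:W1GraphFLSections} and the relations \eqref{eqn:RelationsMomenta}'' to produce $\mathcal{S}_1$ as the graph of $\Leg$, upgrading it to the graph of $\widetilde{\Leg}$ via the constraint defining $\W_o$. Your explicit identification $\mathcal{S}_c=\W_c$ is a small clarifying addition, not a different method.
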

\begin{proof}
As $\mathcal{S}_c$ is defined by (\ref{eqn:DefW1}), it suffices to prove that
the constraints defining $\mathcal{S}_c$ give rise to the constraint functions
defining the graph of the extended Legendre-Ostrogradsky map associated to $\Lag$.
We do this calculation in coordinates.
Taking the local expression \eqref{eqn:LocalHamiltonianFunctionUnified}
of the local Hamiltonian function $\hat{H} \in \Cinfty(\W_o)$, we have 
$$
\d\hat{H} = \sum_{i=0}^{k-1}(q_{i+1}^A\d p^i_A + p^i_A\d q_{i+1}^A) - \sum_{i=0}^{k} \derpar{\hat{L}}{q_i^A}\d q_i^A \ ,
$$
and using the local basis of $\ker\,\Omega$ given in \eqref{eqn:LocalBasisKerOmega},
we obtain that the equations defining the submanifold $\mathcal{S}_c$ are
$$
\inn(Z)\d\hat{H} = 0 \Longleftrightarrow p^{k-1}_A - \derpar{\hat{L}}{q_k^A} = 0 \, , \ \mbox{for every } 1 \leqslant A \leqslant n \ .
$$
Note that these expressions relate the momentum coordinates
$p^{k-1}_A$ with the Jacobi-Ostrogradsky functions
$\ds \hat p^{k-1}_A=\partial\hat{L}/\partial q_k^A$, and so we obtain
the last group of equations of the restricted Legendre-Ostrogradsky map.
Now, using the same argument as in the proof of Proposition \ref{prop:W1GraphFLSections}
and the relations \eqref{eqn:RelationsMomenta} for the momenta,
we can consider that $\mathcal{S}_c$ contains a submanifold $\mathcal{S}_1$
which is the graph of a map
$$
\begin{array}{rcl} F \colon J^{2k-1}\pi & \longrightarrow & J^{k-1}\pi^* \\ 
(t,q_i^A,q_j^A) & \longmapsto & (t,q_i^A,p_A^i) 
\end{array}
$$
which we identify with the
restricted Legendre-Ostrogradsky map
by making the identification $p^{r-1}_A=\hat p^{r-1}_A$.

Finally, taking into account that $\mathcal{S}_1$ is also a submanifold of $\W_o$,
which is defined by the constraint $p + p_A^iq_{i+1}^A - \hat{L} = 0$,
we have as a direct consequence that $\mathcal{S}_1$ is the graph of the extended
Legendre-Ostrogradsky map $\widetilde{\Leg}$, and hence $\mathcal{S}_1 = \W_1$.
\end{proof}

We denote by $\vf_{\W_1}(\W)$ the set of vector fields in $\W_o$ at support on $\W_1$.
Hence, we look for vector fields $X_o \in \vf_{\W_1}(\W_o)$ which are
solutions to equations \eqref{eqn:DynEquationVectorFields} at support
on $\W_1$; that is
\begin{equation}\label{eqn:DynEqVectorFieldsSuppW1}
\restric{\inn(X_o)\Omega_o}{\W_1} = 0 \quad ; \quad
\restric{\inn(X_o)(\rho_\R^o)^*\eta}{\W_1} = 1 \ .
\end{equation}

In natural coordinates, let $X_o \in \vf(\W_o)$ be a generic vector field
given locally by \eqref{eqn:LocalCoordGenericVectorFieldWo}. Thus,
from \eqref{eqn:DynEquationVectorFields} we obtain the following system
of $(2k+1)n+2$ equations
\begin{align}
\label{eqn:LocalCoordRedundantEqVectorFields}
&-f_0^A\derpar{\hat{L}}{q_0^A} + f_i^A\left( p_A^{i-1} - \derpar{\hat{L}}{q_i^A} \right)
+ F_k^A\left(p_A^{k-1} - \derpar{\hat{L}}{q_k^A}\right) +G_A^iq_{i+1}^A = 0 \ , \\
\label{eqn:LocalCoordSemisprayK}
&f_i^A = fq_{i+1}^A \ , \\
\label{eqn:LocalCoordDynEquationVectorFields}
&G_A^0 = f\derpar{\hat{L}}{q_0^A} \quad , \quad G_A^i = f\left( \derpar{\hat{L}}{q_i^A} - p_A^{i-1} \right) = fd_T(p_A^i) \ , \\
\label{eqn:LocalCoordFixedGauge}
&f = 1 \ , \\
\label{eqn:LocalCoordLastMomentumCoordVectorField}
&f\left(p_A^{k-1} - \derpar{\hat{L}}{q_k^A}\right) = 0 \ ,
\end{align}
where $0 \leqslant i \leqslant k-1$ in \eqref{eqn:LocalCoordSemisprayK},
and $1 \leqslant i \leqslant k-1$ in \eqref{eqn:LocalCoordDynEquationVectorFields}.
By a simple calculation one can see that equation \eqref{eqn:LocalCoordRedundantEqVectorFields}
is redundant, since it is a combination of the others. Therefore
\begin{equation}
\label{eqn:LocalCoordSolutionVectorField}
X_o = \derpar{}{t} + q_{i+1}^A\derpar{}{q_i^A} + F_j^A\derpar{}{q_j^A} + \derpar{\hat{L}}{q_0^A}\derpar{}{p_A^0} + d_T(p_A^i) \derpar{}{p_A^i}\, .
\end{equation}

\textbf{Remark}: In a more general situation,
the second equation in \eqref{eqn:DynEquationVectorFields} is written
$\inn(X_o)\rho_\R^*\eta \neq 0$,
that is, a $\rho_\R^o$-transversal condition for the
vector field $X_o$.
In local coordinates, this replaces equation
\eqref{eqn:LocalCoordFixedGauge} by $f \neq 0$, thus giving the vector field
\begin{equation*}
X_o = f \left( \derpar{}{t} + q_{i+1}^A\derpar{}{q_i^A} + F_j^A\derpar{}{q_j^A} + \derpar{\hat{L}}{q_0^A}\derpar{}{p_A^0} + d_T(p_A^i) \derpar{}{p_A^i} \right) \, ,
\end{equation*}
where $f \in \Cinfty(\W_o)$ is any non-vanishing function. This gives a whole family of
vector field solutions to the dynamical equations, and taking a particular constant
value for $f$ just fixes a specific vector field in this family. From a physical
viewpoint, taking a particular value for $f$ is just fixing the gauge.

Observe that equations \eqref{eqn:LocalCoordLastMomentumCoordVectorField} are just
a compatibility condition for the vector field $X_o$, which, together with the relations
\eqref{eqn:RelationsMomenta} for the momenta, state that vector field $X_o$ solutions
to equations \eqref{eqn:DynEquationVectorFields} exist only at support on the submanifold
defined by the graph of the extended Legendre-Ostrogradsky map. Thus, we recover, in
coordinates, the result stated in Propositions \ref{prop:ExistSolDynEq} and \ref{prop:W1GraphFLVectorFields}.
Furthermore, equations \eqref{eqn:LocalCoordSemisprayK} show that $X_o$ is a semispray of type $k$ in $\W_o$.

The component functions $F_j^A$, $k \leqslant j \leqslant 2k-1$, are undetermined.
Nevertheless, recall that $X_o$ is a vector field that must be tangent to the submanifold $\W_1$.
Thus, it is necessary to impose that $\restric{\Lie(X_o)\xi}{\W_1} = 0$ for every constraint function
$\xi$ defining $\W_1$. Locally, this is equivalent to imposing $\restric{X_o(\xi)}{\W_1} = 0$.
Hence, taking into account Prop. \ref{prop:W1GraphFLVectorFields},
these conditions lead to
\begin{align*}
&\left(\derpar{}{t} + q_{i+1}^A\derpar{}{q_i^A} + F_j^A\derpar{}{q_j^A} + \derpar{\hat{L}}{q_0^A}\derpar{}{p_A^0}
+ d_T(p_A^i) \derpar{}{p_A^i}\right)
\left( p^{k-1}_A - \derpar{\hat{L}}{q_k^A} \right) = 0 \\
&\left(\derpar{}{t} + q_{i+1}^A\derpar{}{q_i^A} + F_j^A\derpar{}{q_j^A} + \derpar{\hat{L}}{q_0^A}\derpar{}{p_A^0}
+ d_T(p_A^i) \derpar{}{p_A^i}\right)
\left( p^{k-2}_A - \sum_{i=0}^{1}(-1)^i d_T^i\left(\derpar{\hat{L}}{q_{k-1+i}^A}\right)  \right) = 0 \\
&\qquad \qquad \qquad \qquad \vdots \\
&\left(\derpar{}{t} + q_{i+1}^A\derpar{}{q_i^A} + F_j^A\derpar{}{q_j^A} + \derpar{\hat{L}}{q_0^A}\derpar{}{p_A^0}
+ d_T(p_A^i) \derpar{}{p_A^i}\right)
\left( p^{1}_A - \sum_{i=0}^{k-2}(-1)^i d_T^i\left(\derpar{\hat{L}}{q_{2+i}^A}\right)  \right) = 0 \\
&\left(\derpar{}{t} + q_{i+1}^A\derpar{}{q_i^A} + F_j^A\derpar{}{q_j^A} + \derpar{\hat{L}}{q_0^A}\derpar{}{p_A^0}
+ d_T(p_A^i) \derpar{}{p_A^i}\right)
\left( p^{0}_A - \sum_{i=0}^{k-1}(-1)^i d_T^i\left(\derpar{\hat{L}}{q_{1+i}^A}\right)  \right) = 0 \ ,
\end{align*}
(observe that we do not need to check $\Lie(X_o)(p - \widetilde{\Leg}^*p) = 0$, since
this is the constraint defining the submanifold $\W_o \hookrightarrow \W$, and
$X_o$ is a vector field already defined in $\W_o$)
and, from here, we obtain the following $kn$ equations
\begin{equation}
\label{eqn:TangencyVectorFieldXo}
\begin{array}{l}
\displaystyle \left(F_k^B-q_{k+1}^B\right)\derpars{\hat{L}}{q_k^B}{q_k^A} = 0 \\[10pt]
\displaystyle \left(F_{k+1}^B - q_{k+2}^B\right)\derpars{\hat{L}}{q_k^B}{q_k^A} - 
\left(F_k^B-q_{k+1}^B \right) d_T\left(\derpars{\hat{L}}{q_k^B}{q_k^A}\right) = 0 \\
\qquad \qquad \qquad \qquad \vdots \\
\displaystyle \left(F_{2k-2}^B - q_{2k-1}^B\right)\derpars{\hat{L}}{q_k^B}{q_k^A} -
 \sum_{i=0}^{k-3} \left(F_{k+i}^B-q_{k+i+1}^B \right) (\cdots\cdots) = 0 \\
\displaystyle (-1)^k\left(F_{2k-1}^B - d_T\left(q_{2k-1}^B\right)\right) \derpars{\hat{L}}{q_k^B}{q_k^A} + 
\sum_{i=0}^{k} (-1)^id_T^i\left( \derpar{\hat{L}}{q_i^A} \right) - \sum_{i=0}^{k-2} \left(F_{k+i}^B-q_{k+i+1}^B \right)
 (\cdots\cdots) = 0 \ ,
\end{array}
\end{equation}
where the terms in brackets $(\cdots\cdots)$ contain relations involving partial
derivatives of the Lagrangian function $\hat{L}$ and applications of the total
derivative $d_T$, which are not written for simplicity.
These equations may or may not be compatible, and a sufficient condition for
compatibility is the regularity of the Lagrangian density $\Lag$.
In particular, we have:

\begin{prop}
\label{prop:RegularLagrangianTangencyCondition}
If $\Lag \in \df^1(J^k\pi)$ is a regular Lagrangian density, then
there exists a unique vector field $X_o \in \vf_{\W_1}(\W_o)$ which is a solution
to equation \eqref{eqn:DynEqVectorFieldsSuppW1}; it  is  tangent to $\W_1$,
and is a semispray of type $1$ in $\W_o$.
\end{prop}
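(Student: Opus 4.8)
The plan is to start from the generic solution \eqref{eqn:LocalCoordSolutionVectorField} of the equations \eqref{eqn:DynEqVectorFieldsSuppW1}, in which the only undetermined data are the component functions $F_j^A$ for $k \leqslant j \leqslant 2k-1$, and to show that the tangency requirement to $\W_1$ fixes them uniquely precisely when $\Lag$ is regular. I would first recall that, by the analysis preceding this Proposition, any vector field solving \eqref{eqn:DynEqVectorFieldsSuppW1} must be of the form \eqref{eqn:LocalCoordSolutionVectorField}; hence existence and uniqueness of $X_o$ reduce to existence and uniqueness of the functions $F_k^A,\ldots,F_{2k-1}^A$. Imposing tangency to $\W_1 = {\rm graph}\,\widetilde{\Leg}$, that is $\restric{\Lie(X_o)\xi}{\W_1} = 0$ for every constraint $\xi$ defining $\W_1$, yields exactly the system \eqref{eqn:TangencyVectorFieldXo}. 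Writing $W_{AB} := \derpars{\hat{L}}{q_k^A}{q_k^B}$ for the partial Hessian, regularity of $\Lag$ is by definition the non-vanishing of $\det(W_{AB})$, so $W_{AB}$ is pointwise invertible on $\W_1$.

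The core of the argument is to exploit the lower-triangular structure of \eqref{eqn:TangencyVectorFieldXo} in the unknowns $F_{k+j}^A$. The first equation reads $(F_k^B - q_{k+1}^B)W_{AB} = 0$, and invertibility of $W_{AB}$ forces $F_k^A = q_{k+1}^A$. Proceeding by induction, once $F_k^A = q_{k+1}^A,\ldots,F_{k+j-1}^A = q_{k+j}^A$ have been established, every bracketed term $(\cdots\cdots)$ appearing in the $(j+1)$-th equation is multiplied by one of the now-vanishing differences $F_{k+i}^B - q_{k+i+1}^B$ with $i < j$, so those terms drop out and the equation collapses to $(F_{k+j}^B - q_{k+j+1}^B)W_{AB} = 0$, whence $F_{k+j}^A = q_{k+j+1}^A$. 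This determines $F_j^A = q_{j+1}^A$ for all $k \leqslant j \leqslant 2k-2$.

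Finally, in the last equation of \eqref{eqn:TangencyVectorFieldXo} every difference $F_{k+i}^B - q_{k+i+1}^B$ with $0 \leqslant i \leqslant k-2$ has just been shown to vanish, so it reduces to $(-1)^k(F_{2k-1}^B - d_T(q_{2k-1}^B))W_{AB} + \sum_{i=0}^{k}(-1)^i d_T^i(\derpar{\hat{L}}{q_i^A}) = 0$, which, $W_{AB}$ being invertible, solves uniquely for $F_{2k-1}^A$. This establishes both existence and uniqueness of $X_o$. The equalities $F_j^A = q_{j+1}^A$ for $k \leqslant j \leqslant 2k-2$ put $X_o$ in the local form of a semispray of type $1$ in $\W_o$ exhibited earlier, and tangency to $\W_1$ holds by construction.

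The one delicate point, and the step I expect to require the most care, is verifying that the off-diagonal bracketed terms genuinely decouple at each inductive stage: the cascade works only because the recursively obtained vanishing $F_{k+i}^A - q_{k+i+1}^A = 0$ kills precisely the coefficients multiplying those unwritten terms, leaving at every step a single linear equation whose coefficient matrix is $W_{AB}$. It is exactly here that regularity enters — the non-degeneracy of $W_{AB}$ is invoked once per step to invert and thereby pin down each $F_{k+j}^A$ uniquely. For a singular $\Lag$ the leading coefficient $W_{AB}$ would drop rank and the cascade would instead produce the compatibility constraints anticipated after \eqref{eqn:TangencyVectorFieldXo}.
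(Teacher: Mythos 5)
Your proposal is correct and follows essentially the same route as the paper: reduce existence and uniqueness to determining the $F_j^A$ in the generic solution \eqref{eqn:LocalCoordSolutionVectorField}, then solve the triangular system \eqref{eqn:TangencyVectorFieldXo} using the invertibility of the partial Hessian, obtaining \eqref{eqn:TangencyConditionRegularLagrangian} and hence the type-$1$ semispray form. The only difference is that you spell out the inductive cascade that kills the bracketed terms, which the paper leaves implicit.
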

\begin{proof}
As the Lagrangian density $\Lag$ is regular, the Hessian matrix
$\ds \left(\derpars{\hat{L}}{q_k^B}{q_k^A}\right)$ is regular at every point,
and this enables us to solve the above $k$ systems of $n$ equations (\ref{eqn:TangencyVectorFieldXo})
determining all the functions $F_i^A$ uniquely, as follows
\bea
\label{eqn:TangencyConditionRegularLagrangian}
 F_i^A = q_{i+1}^A \quad , \quad  (k \leqslant i \leqslant 2k-2) \\
 (-1)^k\left(F_{2k-1}^B - d_T\left(q_{2k-1}^B\right)\right) \derpars{\hat{L}}{q_k^B}{q_k^A} +
 \sum_{i=0}^{k} (-1)^id_T^i\left( \derpar{\hat{L}}{q_i^A} \right) = 0 \ .
 \nonumber
\eea
In this way, the tangency condition holds for $X_o$ at every point on $\W_1$.
Furthermore, the equalities \eqref{eqn:TangencyConditionRegularLagrangian}
show that $X_o$ is a semispray of type $1$ in $\W_o$ with local expression
\begin{equation}
\label{eqn:LocalCoordSolutionVectorFieldRegularLagrangian}
X_o = \derpar{}{t} + q_{i+1}^A\derpar{}{q_i^A} + F_{2k-1}^A \derpar{}{q_{2k-1}^A} + \derpar{\hat{L}}{q_0^A}\derpar{}{p_A^0} + d_T(p_A^i) \derpar{}{p_A^i}\, .
\end{equation}
\end{proof}

However, if $\Lag$ is  not regular, the equations \eqref{eqn:TangencyVectorFieldXo}
may or may not be compatible, and the compatibility condition may give rise to new constraints.
In the most favourable cases,
there is a submanifold $\W_f \hookrightarrow \W_1$ (it could be $\W_f = \W_1$)
such that there exist vector fields $X_o\in\vf_{\W_1}(\W_o)$, tangent to $\W_f$,
which are solutions to the equations
\begin{equation}
\label{eqn:DynEqVectorFieldsSuppWf}
\restric{\inn(X_o)\Omega_o}{\W_f} = 0 \quad , \quad
\restric{\inn(X_o)(\rho_\R^o)^*\eta}{\W_f} = 1 \ .
\end{equation}

Finally, the relation among the results obtained in the
two last sections is as follows:

\begin{teor}
\label{thm:EquivalenceTheoremUnified}
The following assertions on a holonomic section $\psi_o \in \Gamma(\rho_\R^o)$ are equivalent:
\begin{enumerate}
\item $\psi_o$ is a solution to equation \eqref{eqn:DynEquationSections}, that is,
$$\psi_o^*\inn(Y)\Omega_o = 0, \quad \mbox{for every }Y \in \vf(\W_o) \, .$$
\item If $\psi_o$ is given locally by $\psi_o(t) = (t,q_i^A(t),q_j^A(t),p_A^i(t))$,
$0 \leqslant i \leqslant k-1$, $k \leqslant j \leqslant 2k-1$, then the components
of $\psi_o$ satisfy equations \eqref{eqn:LocalCoordMomentumDiffEq1} and
\eqref{eqn:LocalCoordMomentumDiffEq2}, that is, the following system of $kn$ differential equations
\begin{equation}
\label{eqn:LocalCoordProofEquivalenceTheoremDiffEq}
\dot{p}_A^0 = \derpar{\hat{L}}{q_0^A} \quad ; \quad \dot{p}_A^i = \derpar{\hat{L}}{q_i^A} - p_A^{i-1} \, .
\end{equation}
\item $\psi_o$ is a solution to the equation
\begin{equation}
\label{eqn:DynEquationIntegralCurve}
\inn(\psi^\prime_o)(\Omega_o \circ \psi_o) = 0 \, ,
\end{equation}
where $\psi^\prime_o \colon \R \to \Tan\W_o$
is the canonical lifting of $\psi_o$ to the tangent bundle.
\item $\psi_o$ is an integral curve of a vector field
contained in a class of $\rho_\R^o$-transverse semisprays of type $1$,
$\left\{ X_o \right\} \subset \vf(\W_o)$,
satisfying the first equation in \eqref{eqn:DynEquationVectorFields}, that is,
\begin{equation*}
\inn(X_o)\Omega_o = 0 \, .
\end{equation*}
\end{enumerate}
\end{teor}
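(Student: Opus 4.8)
The plan is to prove the four conditions equivalent through the single hub~(1), establishing (1)~$\Leftrightarrow$~(2), (1)~$\Leftrightarrow$~(3) and (1)~$\Leftrightarrow$~(4) in turn, reusing the two coordinate computations already carried out in Subsections~\ref{subsubsection:DynEqSections} and~\ref{subsubsection:DynEqVectFields} and supplying coordinate-free arguments only where they add content.

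For (1)~$\Leftrightarrow$~(2) I would invoke directly the computation of $\psi_o^*\inn(Y)\Omega_o$ performed before the statement, which shows that \eqref{eqn:DynEquationSections} is equivalent to the full local system \eqref{eqn:LocalCoordRedundantEqSections}--\eqref{eqn:LocalCoordHolonomyK}. Equation \eqref{eqn:LocalCoordRedundantEqSections} is redundant; the holonomy hypothesis on $\psi_o$ makes \eqref{eqn:LocalCoordHolonomyK} hold identically; and the algebraic relations \eqref{eqn:LocalCoordLastMomentumCoord}, combined with \eqref{eqn:LocalCoordMomentumDiffEq2} and the momentum chain \eqref{eqn:RelationsMomenta}, pin $\psi_o$ to the constraint submanifold $\W_1 = \mathrm{graph}\,\widetilde{\Leg}$. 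Thus the only surviving independent differential content is precisely the system \eqref{eqn:LocalCoordProofEquivalenceTheoremDiffEq}, which gives both implications.

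For (1)~$\Leftrightarrow$~(3) I would use the pointwise identity $\psi_o^*\inn(Y)\Omega_o = \Omega_o(\psi_o(t))\bigl(Y(\psi_o(t)),\psi_o'(t)\bigr)\,\d t$, valid for every $Y\in\vf(\W_o)$. Since $\psi_o$ is a section of $\rho_\R^o$ it is an embedding, so the values $Y(\psi_o(t))$ exhaust $\Tan_{\psi_o(t)}\W_o$ as $Y$ ranges over all vector fields; hence vanishing of $\psi_o^*\inn(Y)\Omega_o$ for every $Y$ is equivalent to $\inn(\psi_o'(t))\Omega_o(\psi_o(t)) = 0$ for all $t$, which is exactly \eqref{eqn:DynEquationIntegralCurve}.

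Finally, for (1)/(3)~$\Leftrightarrow$~(4): the implication (4)~$\Rightarrow$~(3) is immediate, since if $\psi_o$ is an integral curve of a semispray $X_o$ of type~1 with $\inn(X_o)\Omega_o = 0$, then $\psi_o' = X_o\circ\psi_o$ yields $\inn(\psi_o')(\Omega_o\circ\psi_o) = (\inn(X_o)\Omega_o)\circ\psi_o = 0$, while the type-1 semispray property forces $\psi_o$ to be holonomic. For the converse I would set $X_o(\psi_o(t)) := \psi_o'(t)$ along the image of $\psi_o$: holonomy gives this the local form of a semispray of type~1, \eqref{eqn:DynEquationIntegralCurve} makes it solve $\inn(X_o)\Omega_o = 0$ there, and $\rho_\R^o$-transversality is automatic because $\psi_o'$ projects onto $\partial/\partial t$. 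One then extends this to a global $\rho_\R^o$-transverse semispray of type~1 solving $\inn(X_o)\Omega_o = 0$, the remaining freedom in the undetermined components $F_j^A$ off the image producing the whole class $\{X_o\}$ rather than a single field. I expect this extension to be the main obstacle: in the singular case one must guarantee the simultaneous existence of a solution of $\inn(X_o)\Omega_o = 0$ that is a type-1 semispray and tangent to the relevant final constraint submanifold, which is where the tangency analysis \eqref{eqn:TangencyVectorFieldXo} and, for regular $\Lag$, Proposition~\ref{prop:RegularLagrangianTangencyCondition} must be brought in.
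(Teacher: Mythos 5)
Your proposal is correct, and parts of it take a genuinely different route from the paper's proof. The equivalence (1)$\Leftrightarrow$(2) is handled exactly as in the paper: both reuse the coordinate computation of $\psi_o^*\inn(Y)\Omega_o$ from Section~\ref{subsubsection:DynEqSections}, discard the redundant equation \eqref{eqn:LocalCoordRedundantEqSections}, absorb \eqref{eqn:LocalCoordHolonomyK} into the holonomy hypothesis, and read \eqref{eqn:LocalCoordLastMomentumCoord} as the condition that $\psi_o$ lies in $\W_1$. Where you diverge is in (1)$\Leftrightarrow$(3): the paper proves (2)$\Leftrightarrow$(3) by recomputing $\inn(\psi_o^\prime)(\Omega_o\circ\psi_o)$ in natural coordinates and observing that the same local system reappears, whereas you use the pointwise identity $\psi_o^*\inn(Y)\Omega_o=\Omega_o(\psi_o(t))\bigl(Y(\psi_o(t)),\psi_o^\prime(t)\bigr)\,\d t$ together with the fact that values of vector fields exhaust each tangent space and the antisymmetry of $\Omega_o$. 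This is a cleaner, coordinate-free argument that makes (1)$\Leftrightarrow$(3) essentially tautological and would work verbatim for any $2$-form; the paper's computation, by contrast, re-exhibits the explicit equations, which it then reuses. For item~4 the paper compares the local system characterizing vector-field solutions (equations \eqref{eqn:LocalCoordRedundantEqVectorFields}--\eqref{eqn:LocalCoordLastMomentumCoordVectorField}) with the system for sections and checks that integral curves of the class satisfy \eqref{eqn:LocalCoordProofEquivalenceTheoremDiffEq}; you instead prove (4)$\Rightarrow$(3) by evaluation along the integral curve and (3)$\Rightarrow$(4) by defining $X_o$ on the image of $\psi_o$ and extending. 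Your direction (4)$\Rightarrow$(3) is more transparent than the paper's; your converse correctly identifies the extension off the image (preserving the semispray-of-type-$1$ form, the support on $\W_1$, and transversality) as the only delicate step, and this is no less rigorous than the paper's own treatment, which also asserts rather than constructs the existence of a suitable representative of the class $\{X_o\}$ through a given solution section.
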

\begin{proof}

($1 \Leftrightarrow2$) \quad
As we have seen in Section \ref{subsubsection:DynEqSections},
equation \eqref{eqn:DynEquationSections} gives, in natural coordinates,
the equations \eqref{eqn:LocalCoordRedundantEqSections},
\eqref{eqn:LocalCoordMomentumDiffEq1}, \eqref{eqn:LocalCoordMomentumDiffEq2},
\eqref{eqn:LocalCoordLastMomentumCoord} and \eqref{eqn:LocalCoordHolonomyK}.
As stated there, equation \eqref{eqn:LocalCoordRedundantEqSections}
is redundant, since it is a combination of the others, and from equations
\eqref{eqn:LocalCoordLastMomentumCoord} we deduce that
the section $\psi_o \in \Gamma(\rho_\R^o)$ lies in the submanifold $\W_1$.
Hence, equation \eqref{eqn:DynEquationSections} is locally equivalent
to equations \eqref{eqn:LocalCoordMomentumDiffEq1}, \eqref{eqn:LocalCoordMomentumDiffEq2},
and \eqref{eqn:LocalCoordHolonomyK}.
However, as we assume
that $\psi_o$ is holonomic, equations
\eqref{eqn:LocalCoordHolonomyK} hold identically, and thus
equation \eqref{eqn:DynEquationSections}
is locally equivalent to equations \eqref{eqn:LocalCoordMomentumDiffEq1}
and \eqref{eqn:LocalCoordMomentumDiffEq2}, that is, to equations
\eqref{eqn:LocalCoordProofEquivalenceTheoremDiffEq}.

($2\Leftrightarrow3$) \quad
If $\psi_o(t) = (t,q_i^A(t),q_j^A(t),p_A^i(t))$
is the local expression of $\psi_o$ in natural coordinates,
then $\psi_o^\prime(t) = (1,\dot{q}_i^A(t),\dot{q}_j^A(t),\dot{p}_A^i(t))$,
and the inner product $\inn(\psi_o^\prime)(\Omega_o \circ \psi_o)$ gives, in coordinates,
\begin{align*}
\inn(\psi_o^\prime)(\Omega_o \circ \psi)_o &=
\left( p_A^i\dot{q}_{i+1}^A - \dot{q}_r^A\derpar{\hat{L}}{q_r^A} + \dot{p}_A^iq_{i+1}^A \right)\d t
+ \left( \derpar{\hat{L}}{q_0^A} - \dot{p}_A^0 \right) \d q_0^A \\
&\qquad + \left( \derpar{\hat{L}}{q_i^A} - p_A^{i-1} - \dot{p}_A^i \right)\d q_i^A
+ \left( p_A^{k-1} - \derpar{\hat{L}}{q_k^A} \right)\d q_k^A
+ \left( \dot{q}_i^A - q_{i+1}^A \right)\d p_A^i \, .
\end{align*}
Now, requiring this last expression to vanish, we obtain
the system of $(2k+1)n+1$ equations
\begin{align*}
p_A^i\dot{q}_{i+1}^A - \dot{q}_r^A\derpar{\hat{L}}{q_r^A} + \dot{p}_A^iq_{i+1}^A = 0 \quad &; \quad
\dot{p}_A^0 = \derpar{\hat{L}}{q_0^A} \quad ; \quad
\dot{p}_A^i = \derpar{\hat{L}}{q_i^A} - p_A^{i-1} \\
p_A^{k-1} = \derpar{\hat{L}}{q_k^A} &\quad ; \quad
\dot{q}_i^A = q_{i+1}^A
\end{align*}

Observe that this system of equations is the same
given by \eqref{eqn:LocalCoordRedundantEqSections},
\eqref{eqn:LocalCoordMomentumDiffEq1}, \eqref{eqn:LocalCoordMomentumDiffEq2},
\eqref{eqn:LocalCoordLastMomentumCoord} and \eqref{eqn:LocalCoordHolonomyK}.
The same remarks given in the proof of ($1\Leftrightarrow2$) apply in this case.
In particular,
the fifth group of $kn$ equations $\dot{q}_i^A = q_{i+1}^A$ is identically
satisfied by the section $\psi_o$, since we assume it to be holonomic.
Thus, bearing in mind the above item, we have proved that equation \eqref{eqn:DynEquationIntegralCurve} is locally
equivalent to the $kn$ differential equations \eqref{eqn:LocalCoordProofEquivalenceTheoremDiffEq}.

($2\Leftrightarrow4$) \quad
As we have seen in this Section,
if a generic vector field $X_o \in \vf(\W_o)$ is given locally
by \eqref{eqn:LocalCoordGenericVectorFieldWo}, then
the first equation
in \eqref{eqn:DynEquationVectorFields} is locally equivalent
to equations \eqref{eqn:LocalCoordRedundantEqVectorFields},
\eqref{eqn:LocalCoordSemisprayK},
\eqref{eqn:LocalCoordDynEquationVectorFields}
and \eqref{eqn:LocalCoordLastMomentumCoordVectorField}.
As already stated, equation \eqref{eqn:LocalCoordRedundantEqVectorFields}
is redundant, since it is a combination of the others; and the $n$ equations
\eqref{eqn:LocalCoordLastMomentumCoordVectorField}
state, in coordinates, the result given in Proposition \ref{prop:ExistSolDynEq}.
In addition, since the vector fields $X_o$ in the class are semisprays of
type $1$, the $kn$ equations \eqref{eqn:LocalCoordSemisprayK}
are identically satisfied. Thus, the first equation in \eqref{eqn:DynEquationVectorFields}
is locally equivalent to the $kn$ equations \eqref{eqn:LocalCoordDynEquationVectorFields}.
Finally, the $\rho_\R^o$-transverse condition for the class $\left\{X_o\right\}$
is locally equivalent to $f \neq 0$.

Now, let $\sigma \in \Gamma(\rho_\R^o)$ be an integral
curve of $X_o$, that is, $\sigma^\prime = X_o \circ \sigma$.
If $\sigma$ is
given locally by $\sigma(t) = (t,q_i^A(t),q_j^A(t),p_A^i(t))$,
then $\sigma^\prime(t) = (1,\dot{q}_i^A(t),\dot{q}_j^A(t),\dot{p}_A^i(t))$,
and, taking $f = 1$ as a representative of the class $\left\{X_o\right\}$,
the condition of $\sigma$ to be an integral curve is locally equivalent to the equations
\begin{equation*}
\dot{q}_i^A = f_i^A \circ \sigma \quad ; \quad
\dot{q}_j^A = F_j^A \circ \sigma \quad ; \quad
\dot{p}_A^i = G_A^i \circ \sigma \, .
\end{equation*}
Replacing these equations in \eqref{eqn:LocalCoordSemisprayK} and
\eqref{eqn:LocalCoordDynEquationVectorFields},
we obtain the following $2kn$ differential equations
$$
\dot{q}_i^A = q_{i+1}^A \quad ; \quad
\dot{p}_A^0 = \derpar{\hat{L}}{q_0^A} \quad ; \quad
\dot{p}_A^i = \derpar{\hat{L}}{q_i^A} - p_A^{i-1} \, .
$$
Observe that, as every vector field in the class is a semispray of type $1$,
the first $kn$ equations are identically satisfied. Thus,
the condition of $\sigma$ to be an integral curve of a $\rho_\R^o$-transverse
semispray of type $1$, $X_o \in \vf(\W_o)$, satisfying the
first equation in \eqref{eqn:DynEquationVectorFields} is
locally equivalent to equations \eqref{eqn:LocalCoordProofEquivalenceTheoremDiffEq}.
\end{proof}


\section{Lagrangian formalism}
\label{section:lagform}


\subsection{General setting}

Now we recover the Lagrangian dynamics from the unified formalism.
We do not distinguish between the regular and singular cases,
since the results remain the same in either case, but a few comments
on the singular case will be given.
First, we have:

\begin{prop}
\label{prop:W1DiffeoJ2k-1Pi}
The map $\rho_1^1 = \rho_1^o \circ j_1 \colon \W_1 \to J^{2k-1}\pi$ is a diffeomorphism.
\end{prop}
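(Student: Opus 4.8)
The plan is to exploit the fact, already established in Corollary \ref{corol:W1GraphExtendedFLSections}, that $\W_1$ is the graph of the extended Legendre-Ostrogradsky map $\widetilde{\Leg}\colon J^{2k-1}\pi \to \Tan^*(J^{k-1}\pi)$. Once this identification is in hand, the assertion becomes essentially formal: the restriction to a graph of the projection onto the domain is always a diffeomorphism, independently of any rank condition on the map whose graph one takes. This is precisely why the result holds uniformly for regular and singular Lagrangian densities, as anticipated in the comment preceding the statement, and why no Hessian regularity enters the argument.

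First I would record that $\widetilde{\Leg}$ covers the identity on $J^{k-1}\pi$: the local relations $\widetilde{\Leg}^*q_r^A = q_r^A$ (for $0 \leqslant r \leqslant k-1$) give $\pi_{J^{k-1}\pi}\circ\widetilde{\Leg} = \pi^{2k-1}_{k-1}$, so the assignment $\gamma\colon \bar{y} \mapsto (\bar{y},\widetilde{\Leg}(\bar{y}))$ indeed takes values in the fibre product $\W = J^{2k-1}\pi \times_{J^{k-1}\pi}\Tan^*(J^{k-1}\pi)$. Moreover, the expression for $\widetilde{\Leg}^*p$ in Corollary \ref{corol:W1GraphExtendedFLSections} is exactly the value of $p$ forced by the constraint $\hat{C}-\hat{L}=0$ once the momenta are set equal to the Jacobi-Ostrogradsky functions $\hat{p}_A^{r-1}$; hence $\gamma$ lands inside $\W_o$ and its image is precisely $\W_1$. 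Thus $\gamma$ is a smooth bijection from $J^{2k-1}\pi$ onto $\W_1$, with smooth components.

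Next I would identify $\rho_1^1$ with the inverse of $\gamma$. Unwinding the definitions, $\rho_1^1 = \rho_1^o\circ j_1$ is the restriction to $\W_1$ of the first projection $\rho_1\colon \W \to J^{2k-1}\pi$ (using $\rho_1^o = \rho_1\circ j_o$ and letting $j_1$ denote the inclusion of $\W_1$ into $\W_o$). By construction of the graph one has $\rho_1\circ\gamma = \mathrm{Id}_{J^{2k-1}\pi}$, while $\gamma\circ\rho_1^1 = \mathrm{Id}_{\W_1}$ because every point of $\W_1$ has the form $(\bar{y},\widetilde{\Leg}(\bar{y}))$. Therefore $\rho_1^1$ and $\gamma$ are mutually inverse smooth maps, and $\rho_1^1$ is a diffeomorphism.

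There is essentially no hard step: the entire content has been displaced into the earlier identification of $\W_1$ as a graph. To make the argument self-contained I would add the coordinate sanity check, namely that $\W_1$ carries the coordinates $(t,q_0^A,\ldots,q_{2k-1}^A)$ pulled back from $J^{2k-1}\pi$ and is embedded in $\W_o$ by appending $p_A^{r-1}=\hat{p}_A^{r-1}(t,q_0^A,\ldots,q_{2k-1}^A)$; in these coordinates $\rho_1^1$ is literally the identity, so it is trivially a local diffeomorphism and, being a bijection, a global one. The only point demanding any care is the bookkeeping of the inclusions $\W_1 \hookrightarrow \W_c \hookrightarrow \W_o \hookrightarrow \W$ and the associated projections, to confirm that the composite $\rho_1^o\circ j_1$ really coincides with the ambient projection $\rho_1$ restricted to $\W_1$.
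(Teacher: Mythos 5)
Your proposal is correct and rests on the same key fact as the paper's proof, namely that $\W_1$ is the graph of the Legendre--Ostrogradsky map, so that projection onto the first factor is a diffeomorphism. The only cosmetic difference is that you exhibit the explicit inverse $\gamma\colon \bar{y}\mapsto(\bar{y},\widetilde{\Leg}(\bar{y}))$, whereas the paper argues that $\rho_1^1$ is a surjective submersion which, by equality of dimensions, is also an injective immersion; both routes are immediate once the graph identification is in hand.
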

\begin{proof}
As $\W_1 = {\rm graph}\,\Leg$, we have that $J^{2k-1}\pi \simeq \W_1$.
Furthermore, $\rho_1^1$ is a surjective submersion and, by the equality between dimensions,
it is also an injective immersion and hence it is a diffeomorphism.
\end{proof}

Now, we must define the Poincar\'{e}-Cartan forms
in order to establish the dynamical equations for the
Lagrangian formalism. First, we have the following result:

\begin{lem}
\label{lemma:RelationFormsLagrangianUnified}
Let $\Theta_{k-1} \in \df^{1}(\Tan^*(J^{k-1}\pi))$,
$\Omega_{k-1} = -\d\Theta_{k-1} \in \df^{2}(\Tan^*(J^{k-1}\pi))$
be the canonical forms in $\Tan^*(J^{k-1}\pi)$. We define the
{\rm Poincar\'{e}-Cartan forms} as
$\Theta_{\Lag} = \widetilde{\Leg}^*\Theta_{k-1} \in \df^{1}(J^{2k-1}\pi)$,
$\Omega_{\Lag} = - \d\Theta_{\Lag} = \widetilde{\Leg}^*\Omega_{k-1} \in \df^{2}(J^{2k-1}\pi)$.
Then
$\Theta_o = (\rho_1^o)^*\Theta_{\Lag}$ and
$\Omega_o = (\rho_1^o)^*\Omega_{\Lag}$.
\end{lem}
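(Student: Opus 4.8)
The plan is to reduce both identities to a single factorization of maps and then invoke functoriality of pullbacks. By construction $\Theta_o = j_o^*\Theta = (\rho_2^o)^*\Theta_{k-1}$, while $(\rho_1^o)^*\Theta_{\Lag} = (\rho_1^o)^*\widetilde{\Leg}^*\Theta_{k-1} = (\widetilde{\Leg}\circ\rho_1^o)^*\Theta_{k-1}$. Hence it is enough to prove the identity of maps
\[
\rho_2^o = \widetilde{\Leg}\circ\rho_1^o \colon \W_o \longrightarrow \Tan^*(J^{k-1}\pi) \, ,
\]
for then $\Theta_o = (\rho_2^o)^*\Theta_{k-1} = (\widetilde{\Leg}\circ\rho_1^o)^*\Theta_{k-1} = (\rho_1^o)^*\Theta_{\Lag}$ at once. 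The statement for $\Omega$ then costs nothing extra: since $\Omega = -\d\Theta$ we have $\Omega_o = j_o^*\Omega = -\d\Theta_o$, while $\Omega_{\Lag} = -\d\Theta_{\Lag}$ by definition, and exterior differentiation commutes with pullback; thus the $\Theta$-identity gives $\Omega_o = -\d(\rho_1^o)^*\Theta_{\Lag} = (\rho_1^o)^*(-\d\Theta_{\Lag}) = (\rho_1^o)^*\Omega_{\Lag}$. So the whole lemma rests on the displayed factorization.

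To establish that factorization I would argue in the natural coordinates on $\W_o$. Both $\rho_2^o$ and $\widetilde{\Leg}\circ\rho_1^o$ leave $t$ and the base coordinates $q_0^A,\dots,q_{k-1}^A$ unchanged, and both cover the projection $\rho_{J^{k-1}\pi}^o$ onto $J^{k-1}\pi$ (using that $\widetilde{\Leg}$ fixes $t$ and $q_0^A,\dots,q_{k-1}^A$, so that $\pi_{J^{k-1}\pi}\circ\widetilde{\Leg}\circ\rho_1^o = \pi^{2k-1}_{k-1}\circ\rho_1^o = \rho_{J^{k-1}\pi}^o$). Consequently the difference of the two maps lives entirely in the fibre directions, and the only thing left to check is that they agree on the momentum coordinates $(p,p_A^0,\dots,p_A^{k-1})$ of $\Tan^*(J^{k-1}\pi)$. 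Comparing the local expression of $\rho_2^o$ with the local expression of $\widetilde{\Leg}$ from Corollary \ref{corol:W1GraphExtendedFLSections}, the task is to match each momentum component $p_A^{r-1}$ against $\sum_{i=0}^{k-r}(-1)^i d_T^i(\partial\hat{L}/\partial q_{r+i}^A)$, and the energy component $p$ against $\hat{L} - \sum_{r=1}^{k} q_r^A\sum_{i=0}^{k-r}(-1)^i d_T^i(\partial\hat{L}/\partial q_{r+i}^A)$, invoking the Jacobi–Ostrogradsky relations \eqref{eqn:RelationsMomenta} that link consecutive momenta through $\partial\hat{L}/\partial q_r^A$ and $d_T$.

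The main obstacle is precisely this coordinate matching, because the higher-order momenta carry iterated total time derivatives $d_T^i$ and one must verify that the expressions telescope correctly so that each component of $\rho_2^o$ equals the corresponding component of $\widetilde{\Leg}\circ\rho_1^o$; the recursion \eqref{eqn:RelationsMomenta} is exactly the tool that collapses the nested sums, and it must be applied carefully together with the defining constraint $p = \hat{L} - p_A^i q_{i+1}^A$ of $\W_o$ to handle the energy component. I expect this bookkeeping to be the only nontrivial part: once the momentum and energy components are shown to coincide, the factorization holds, and both form identities of the lemma follow from it by the functorial argument of the first paragraph.
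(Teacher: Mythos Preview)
Your overall strategy coincides with the paper's: reduce both identities to the factorization $\rho_2^o=\widetilde{\Leg}\circ\rho_1^o$ and then use functoriality of pullback (and $\d$ commuting with pullback for $\Omega$). The paper's proof simply asserts that factorization as its middle step; you try to justify it in coordinates.

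That justification, however, cannot work on $\W_o$: the identity $\rho_2^o=\widetilde{\Leg}\circ\rho_1^o$ is \emph{false} there. On $\W_o$ the momenta $p_A^i$ (for $0\le i\le k-1$) are independent coordinates, whereas $\widetilde{\Leg}\circ\rho_1^o$ forgets them and outputs the Jacobi--Ostrogradsky values $\hat p_A^i=\sum_j(-1)^j d_T^j(\partial\hat L/\partial q_{i+1+j}^A)$, which depend only on the jet coordinates. The relations \eqref{eqn:RelationsMomenta} you plan to invoke are identities among the \emph{functions} $\hat p_A^i$ on $J^{2k-1}\pi$, not constraints satisfied by the coordinates $p_A^i$ on $\W_o$; so the ``telescoping'' you anticipate never gets off the ground. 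Already for $k=1$ one sees
\[
\Theta_o=p_A^0\,\d q_0^A+(\hat L-p_A^0q_1^A)\,\d t,\qquad
(\rho_1^o)^*\Theta_\Lag=\frac{\partial\hat L}{\partial q_1^A}\,\d q_0^A+\Bigl(\hat L-q_1^A\frac{\partial\hat L}{\partial q_1^A}\Bigr)\,\d t,
\]
and these differ unless $p_A^0=\partial\hat L/\partial q_1^A$.

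In short, the factorization (and hence the lemma) holds only after restriction to $\W_1=\mathrm{graph}\,\widetilde{\Leg}$, where it is tautological: a point of $\W_1$ is $(\bar y,\widetilde{\Leg}(\bar y))$, so $\rho_2^o\circ j_1=\widetilde{\Leg}\circ\rho_1^o\circ j_1$ by definition. The paper's one--line proof glosses over this same point; the restricted version is all that is actually used later (cf.\ the proof of Theorem~\ref{thm:Unified-LagrangianSolutions}). Your write--up would be correct if you replace the attempted coordinate verification on $\W_o$ by this immediate observation on $\W_1$.
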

\begin{proof}
We have for $\Theta_{\Lag}$:
$$
(\rho_1^o)^*\Theta_\Lag = (\rho_1^o)^*(\widetilde{\Leg}^*\Theta_{k-1})
= (\widetilde{\Leg} \circ \rho_1^o)^*\Theta_{k-1} = (\rho_2^o)^*\Theta_{k-1}
= \Theta_o \, ,
$$
and for $\Omega_{\Lag}$:
$$
(\rho_1^o)^*\Omega_\Lag = (\rho_1^o)^*(-\d\Theta_\Lag) = -\d(\rho_1^o)^*\Theta_\Lag
= -\d\Theta_o = \Omega_o \, .
$$
\end{proof}

Alternatively, according to \cite{art:Saunders87} and \cite{book:Saunders89}
(see also \cite{art:Aldaya_Azcarraga78_2}, \cite{proc:Garcia_Munoz83}),
we can define the
\textsl{Poincar\'{e}-Cartan $1$-form} using the canonical
structures of the higher-order jet bundles; in particular,
$$
\Theta_\Lag = S_\eta^{(k)}(\d L) + \Lag \in \df^{1}(J^{2k-1}\pi) \, ,
$$
where $S_\eta^{(k)}$ is the generalization to
higher-order jet bundles of the operator used in the classical
Hamilton-Cartan formalism for problems in the calculus of variations
which involve time explicitly
(see \cite{art:Saunders87} and \cite{book:Saunders89} for details).

Using natural coordinates, the local expression of the Poincar\'e-Cartan
$1$-form is
\begin{equation}
\label{eqn:LocalCoordPoincareCartan1Form}
\Theta_\Lag = \sum_{r=1}^{k} \sum_{i=0}^{k-r}(-1)^i d_T^i \left( \derpar{L}{q^A_{r+i}} \right)(\d q^A_{r-1} - q_r^A\d t) + L\d t \, .
\end{equation}

\textbf{Remark}: $\Theta_\Lag$ is a $\pi^{2k-1}_{k-1}$-semibasic $1$-form.

From the Poincar\'{e}-Cartan $1$-form, the concept of regularity for a higher-order Lagrangian density
is a straightforward generalization of the well-known definition for first-order non-autonomous
dynamical systems. In fact, first we define the \textsl{Poincar\'e-Cartan $2$-form} as
$\Omega_\Lag = -\d\Theta_\Lag \in \df^{2}(J^{2k-1}\pi)$. Then

\begin{definition}
A Lagrangian density $\Lag \in \df^{1}(J^{k}\pi)$ is {\rm regular} if the
Poincar\'e-Cartan $2$-form $\Omega_\Lag$ has maximal rank. Elsewhere $\Lag$ is singular.
\end{definition}

In natural coordinates, the local expression of the $2$-form $\Omega_\Lag$ is
\begin{align}
\Omega_\Lag = &\sum_{r=1}^{k} \sum_{i=0}^{k-r}(-1)^{i+1} \left( d_T^i \left( \derpars{L}{t}{q_{r+i}^A}\d t + \derpars{L}{q_j^B}{q_{r+i}^A}\d q_j^B \right) \wedge (\d q_{r-1}^A - q_r^A\d t) \right. \nonumber \\
\label{eqn:LocalCoordPoincareCartan2Form}
&\quad \left. - d_T^i\left(\derpar{L}{q^A_{r+i}} \right) \d q_r^A \wedge \d t \right)  - \derpar{L}{q_j^B}\d q_j^B \wedge \d t \, .
\end{align}
From this expression in local coordinates, we can see that the regularity
condition for $\Lag$ is equivalent to
$$
\det\left( \derpars{L}{q_k^B}{q_k^A} \right)(\bar{y}) \neq 0 \, ,
$$
for every $\bar{y} \in J^{2k-1}\pi$. Thus, this notion of regularity is
equivalent to the one given before. Geometrically, $\Lag$ is regular if, and only if,
$(\Omega_\Lag,(\bar{\pi}^{2k-1})^*\eta)$ is a cosymplectic structure on $J^{2k-1}\pi$,
that is, $\Omega_\Lag$ and $(\bar{\pi}^{2k-1})^*\eta$ are both closed and
$\Omega_\Lag^{kn} \wedge (\bar{\pi}^{2k-1})^*\eta$ is a volume form.


\subsection{Dynamical equations for sections}
\label{subsection:LagrangianSections}

Using the previous results, we can recover the Lagrangian sections
in $J^{2k-1}\pi$ from the sections in the unified formalism.

\begin{prop}
\label{prop:LagrangianSections}
Let $\psi_o \in \Gamma(\rho_\R^o)$ be a holonomic section solution to equation
\eqref{eqn:DynEquationSections}. Then the section
$\psi_\Lag = \rho_1^o \circ \psi_o \in \Gamma(\bar{\pi}^{2k-1})$
is holonomic, and is a solution
to the equation
\begin{equation}
\label{eqn:LagrangianDynEqSections}
\psi_\Lag^*\inn(Y)\Omega_\Lag = 0, \quad \mbox{for every }Y \in \vf(J^{2k-1}\pi)
\end{equation}
\end{prop}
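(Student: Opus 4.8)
The plan is to prove the two assertions—that $\psi_\Lag$ is holonomic and that it solves the Lagrangian equation—separately, exploiting the factorization $\Omega_o = (\rho_1^o)^*\Omega_\Lag$ established in Lemma \ref{lemma:RelationFormsLagrangianUnified}.

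First I would address the holonomy of $\psi_\Lag$. By hypothesis $\psi_o \in \Gamma(\rho_\R^o)$ is holonomic in $\W_o$, which by definition means that $\rho_1^o \circ \psi_o \in \Gamma(\bar{\pi}^{2k-1})$ is holonomic of type $1$ in $J^{2k-1}\pi$. But $\psi_\Lag = \rho_1^o \circ \psi_o$ is exactly this section, so its holonomy is immediate from the definition of holonomy for sections in $\W_o$. This part requires essentially no work beyond unwinding the definitions.

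Next I would derive the dynamical equation. The key tool is the relation $\Omega_o = (\rho_1^o)^*\Omega_\Lag$ from Lemma \ref{lemma:RelationFormsLagrangianUnified}. The natural obstacle is that equation \eqref{eqn:DynEquationSections} quantifies over \emph{all} $Y \in \vf(\W_o)$, whereas \eqref{eqn:LagrangianDynEqSections} quantifies over $Y \in \vf(J^{2k-1}\pi)$, and $\rho_1^o$ need not be a diffeomorphism in the singular case, so not every vector field on $J^{2k-1}\pi$ is $\rho_1^o$-related to one on $\W_o$. The way around this is that $\rho_1^o$ is a surjective submersion, so every $Y \in \vf(J^{2k-1}\pi)$ admits (at least locally) a $\rho_1^o$-projectable vector field $\widetilde{Y} \in \vf(\W_o)$ with $(\rho_1^o)_* \widetilde{Y} = Y \circ \rho_1^o$. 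For such a pair one has the $\rho_1^o$-relatedness identity $\inn(\widetilde{Y})(\rho_1^o)^*\Omega_\Lag = (\rho_1^o)^*(\inn(Y)\Omega_\Lag)$. Combining this with $\Omega_o = (\rho_1^o)^*\Omega_\Lag$ and with $\rho_1^o \circ \psi_o = \psi_\Lag$ gives the chain
\begin{equation*}
\psi_\Lag^*\inn(Y)\Omega_\Lag = \psi_o^*(\rho_1^o)^*\inn(Y)\Omega_\Lag = \psi_o^*\inn(\widetilde{Y})(\rho_1^o)^*\Omega_\Lag = \psi_o^*\inn(\widetilde{Y})\Omega_o = 0,
\end{equation*}
where the last equality is the hypothesis \eqref{eqn:DynEquationSections} applied to $\widetilde{Y}$.

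The step I expect to be the main obstacle is verifying that it suffices to test \eqref{eqn:LagrangianDynEqSections} against projectable vector fields $Y$, or equivalently that the argument above covers every $Y \in \vf(J^{2k-1}\pi)$. Since $\psi_\Lag^*\inn(Y)\Omega_\Lag$ depends only on the values of $Y$ along the image of $\psi_\Lag$, and $\rho_1^o$ is a surjective submersion, one can always find a local projectable extension of $Y$ along $\psi_\Lag$; because the resulting $1$-form vanishes identically, its value is independent of the chosen extension, so the conclusion holds for arbitrary $Y$. Alternatively—and more cheaply—one may simply observe that in the regular case Proposition \ref{prop:W1DiffeoJ2k-1Pi} makes $\rho_1^1$ a diffeomorphism, so projectability is automatic, and that the coordinate computation of Section \ref{subsection:DynamicalEquationsUnified} already shows both equations reduce to the same Euler-Lagrange system \eqref{eqn:LocalCoordProofEquivalenceTheoremDiffEq}; I would present the intrinsic pullback argument as the main proof and note the coordinate verification as a check.
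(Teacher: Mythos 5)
Your proposal is correct and follows essentially the same route as the paper: holonomy of $\psi_\Lag$ is immediate from the definition of holonomy in $\W_o$, and the dynamical equation is obtained from the factorization $\Omega_o = (\rho_1^o)^*\Omega_\Lag$ together with lifting each $Y \in \vf(J^{2k-1}\pi)$ to a $\rho_1^o$-related vector field on $\W_o$ and applying the unified equation to that lift. Your additional remark on why local projectable lifts suffice is a harmless refinement of the same argument; the paper simply asserts the existence of the lift from surjectivity of the submersion.
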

\begin{proof}
Since, by definition, $\psi_o \in \Gamma(\rho_\R^o)$ is holonomic if
$\rho_1^o \circ \psi_o\in \Gamma(\bar{\pi}^{2k-1})$ is holonomic, it is
obvious that $\psi_\Lag = \rho_1^o \circ \psi_o$ is a holonomic section.

Now, recall that, since $\rho_1^o$ is a submersion, for every
$Y \in \vf(J^{2k-1}\pi)$ there exist some $Z \in \vf(\W_o)$ such that
${\rho_1^o}_*Z = Y$, that is, $Y$ and $Z$ are $\rho_1^o$-related. Note that
this vector field is not unique, since $Z + Z_o$, with $Z_o \in \ker{\rho_1^o}_*$
is also $\rho_1^o$-related with $Y$. Thus, using this particular choice of
$\rho_1^o$-related vector fields, we have
\begin{equation*}
\psi_\Lag^*\inn(Y)\Omega_\Lag = (\rho_1^o \circ \psi_o)^*\inn(Y)\Omega_\Lag =
\psi_o^*((\rho_1^o)^*\inn(Y)\Omega_\Lag) = \psi_o^*(\inn(Z)(\rho_1^o)^*\Omega_\Lag) =
\psi_o^*i(Z)\Omega_o \, .
\end{equation*}
Since the equality $\psi_o^*\inn(Z)\Omega_o = 0$ holds
for every $Z \in \vf(\W_o)$, in particular it holds for every $Z \in \vf(\W_o)$
which is $\rho_1^o$-related with $Y \in \vf(J^{2k-1}\pi)$. Hence, we obtain
$$
\psi_\Lag^*\inn(Y)\Omega_\Lag = \psi_o^*\inn(Z)\Omega_o = 0 \, .
$$
\end{proof}

The diagram for this situation is the following:
$$
\xymatrix{
\ & \ & \W_o \ar[dd]_-{\rho_\R^o} \ar[dll]_-{\rho_1^o}  \\
J^{2k-1}\pi \ar[drr]^{\bar{\pi}^{2k-1}} & \ & \ \\
\ & \ & \R \ar@/_1pc/[uu]_{\psi_o} \ar@/^1pc/@{-->}[ull]^{\psi_\Lag = \rho_1^o \circ \psi_o} \\
}
$$

\textbf{Remark}:
Observe that, from this result, we have no equivalence between
section $\psi_o \in \Gamma(\rho_\R^o)$ solutions to equation
\eqref{eqn:DynEquationSections} and section
$\psi_\Lag \in \Gamma(\bar{\pi}^{2k-1})$ solutions to equation
\eqref{eqn:LagrangianDynEqSections}, but only that every holonomic section
$\psi_o$ solution to the dynamical equations in the unified formalism
can be projected to a holonomic section $\psi_\Lag$ solution to
the Lagrangian equations. Nevertheless, recall that section $\psi_o$
solutions to equation \eqref{eqn:DynEquationSections} take their values
in the submanifold $\W_1$, which is diffeomorphic to $J^{2k-1}\pi$,
and thus it is possible to establish an equivalence using the
diffeomorphism $\rho_1^1$.

Assume $\psi_o \in \Gamma(\rho_\R^o)$ is given locally
by $\psi_o(t) = (t,q_i^A(t),q_j^A(t),p_A^i(t))$,
$0 \leqslant i \leqslant k-1$, $k \leqslant j \leqslant 2k-1$.
Since $\psi_o$ is assumed to be a holonomic section solution
to equation \eqref{eqn:DynEquationSections}, it must satisfy equations
\eqref{eqn:LocalCoordMomentumDiffEq1}, \eqref{eqn:LocalCoordMomentumDiffEq2}
and \eqref{eqn:LocalCoordHolonomyK}. The last group of equations is
automatically satisfied because of the holonomy condition.
Now, bearing in mind that the section $\psi_o$ takes values in the
submanifold $\W_1$, and the characterization of $\W_1$ given
in Proposition \ref{prop:W1GraphFLSections}, equations
\eqref{eqn:LocalCoordMomentumDiffEq1} and \eqref{eqn:LocalCoordMomentumDiffEq2}
can be $\rho_1^o$-projected to $J^{2k-1}\pi$, thus giving the following equations
for the section $\psi_\Lag = \rho_1^o \circ \psi_o$:
$$
\restric{\derpar{L}{q_0^A}}{\psi_\Lag} - \restric{\frac{\d}{\d t}\derpar{L}{q_1^A}}{\psi_\Lag}
+ \restric{\frac{\d^2}{\d t^2}\derpar{L}{q_2^A}}{\psi_\Lag} + \ldots +
(-1)^k \restric{\frac{\d^k}{\d t^k}\derpar{L}{q_k^A}}{\psi_\Lag} = 0 \, .
$$
Finally, bearing in mind that $\psi_\Lag$ is holonomic in $J^{2k-1}\pi$, there
exists a section $\phi \in \Gamma(\pi)$, whose local expression is $\phi(t) = (t,q_0^A(t))$,
such that $j^{2k-1}\phi = \psi_\Lag$, and
thus the above equations can be rewritten in the following form
\begin{equation}
\label{eqn:Euler-LagrangeEquations}
\restric{\derpar{L}{q_0^A}}{j^{2k-1}\phi} - \restric{\frac{\d}{\d t}\derpar{L}{q_1^A}}{j^{2k-1}\phi}
+ \restric{\frac{\d^2}{\d t^2}\derpar{L}{q_2^A}}{j^{2k-1}\phi} + \ldots +
(-1)^k \restric{\frac{\d^k}{\d t^k}\derpar{L}{q_k^A}}{j^{2k-1}\phi} = 0 \, .
\end{equation}
Therefore, we obtain the Euler-Lagrange equations for a $k$th order non-autonomous system.
As stated before, equation \eqref{eqn:Euler-LagrangeEquations} may or may not be compatible,
and in this last case a constraint algorithm must be used in order to obtain
a submanifold $S_f \hookrightarrow J^{2k-1}\pi$ (if such submanifold exists)
where the equations can be solved.


\subsection{Dynamical equations for vector fields}

Now, using the results stated at the beginning of the Section,
we can recover a vector field solution to the Lagrangian equations
starting from a vector field solution to the equation in the unified
formalism. First we have:

\begin{lem}
\label{lemma:CorrepondenceXoXL}
Let $X_o \in \vf(\W_o)$ be a vector field tangent to $\W_1$.
Then there exists a unique vector field $X_\Lag \in \vf(J^{2k-1}\pi)$
such that $X_\Lag \circ \rho_1^o \circ j_1 = \Tan\rho_1^o \circ X_o \circ j_1$.
\end{lem}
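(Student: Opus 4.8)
The plan is to use the diffeomorphism $\rho_1^1 = \rho_1^o \circ j_1 \colon \W_1 \to J^{2k-1}\pi$ furnished by Proposition \ref{prop:W1DiffeoJ2k-1Pi} to transport $X_o$ (restricted to $\W_1$) across it. The identity $X_\Lag \circ \rho_1^o \circ j_1 = \Tan\rho_1^o \circ X_o \circ j_1$ is precisely an $\rho_1^1$-relatedness condition, so the statement should reduce to the standard behaviour of vector fields under a diffeomorphism, and both existence and uniqueness should come out formally once the maps are chained correctly.

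First I would use the tangency hypothesis. Since $X_o$ is tangent to $\W_1$, the standard fact that a vector field tangent to a submanifold restricts to a vector field on it yields a unique $\bar{X}_o \in \vf(\W_1)$ that is $j_1$-related to $X_o$, that is, $\Tan j_1 \circ \bar{X}_o = X_o \circ j_1$. Then I define $X_\Lag$ as the push-forward of $\bar{X}_o$ along the diffeomorphism $\rho_1^1$, namely $X_\Lag = (\rho_1^1)_*\bar{X}_o = \Tan\rho_1^1 \circ \bar{X}_o \circ (\rho_1^1)^{-1}$. Because $\rho_1^1$ is a diffeomorphism, this is automatically a smooth section of $\Tan J^{2k-1}\pi$, so no separate verification is needed that $X_\Lag$ is a genuine vector field.

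Next I would verify the stated identity by chaining the tangent maps. The defining property of the push-forward gives $X_\Lag \circ \rho_1^1 = \Tan\rho_1^1 \circ \bar{X}_o$; then, writing $\rho_1^1 = \rho_1^o \circ j_1$ and using functoriality of $\Tan$, the right-hand side becomes $\Tan\rho_1^o \circ \Tan j_1 \circ \bar{X}_o = \Tan\rho_1^o \circ X_o \circ j_1$, which is exactly the claimed relation. For uniqueness, if $X_\Lag'$ satisfies the same identity, then $X_\Lag \circ \rho_1^1 = X_\Lag' \circ \rho_1^1$, and surjectivity of $\rho_1^1$ (being a diffeomorphism) forces $X_\Lag = X_\Lag'$ pointwise on $J^{2k-1}\pi$.

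I do not expect a deep obstacle: the content is essentially that a vector field on $\W_o$ tangent to the submanifold $\W_1 \cong J^{2k-1}\pi$ descends uniquely along the diffeomorphism $\rho_1^1$. The only point requiring care is the bookkeeping of the several tangent maps and the correct invocation of the tangency hypothesis to produce $\bar{X}_o$; once $\bar{X}_o$ is in hand, existence and uniqueness of $X_\Lag$ are immediate consequences of $\rho_1^1$ being a diffeomorphism.
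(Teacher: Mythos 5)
Your proof is correct and follows essentially the same route as the paper: restrict $X_o$ to a vector field on $\W_1$ via the tangency hypothesis, push it forward through the diffeomorphism $\rho_1^1$ of Proposition \ref{prop:W1DiffeoJ2k-1Pi}, and verify the identity by functoriality of the tangent functor. Your explicit uniqueness argument via surjectivity of $\rho_1^1$ is a small addition the paper leaves implicit, but otherwise the two arguments coincide.
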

\begin{proof}
Since $X_o$ is tangent to $\W_1$, there exists a vector field $X_1 \in \vf(\W_1)$
such that $\Tan j_1 \circ X_1 = X_o \circ j_1$. Furthermore, as $\rho_1^1$ is a
diffeomorphism, there is a unique vector field $X_\Lag \in \vf(J^{2k-1}\pi)$ which
is $\rho_1^1$-related with $X_1$; that is, $X_\Lag \circ \rho_1^1 = \Tan\rho_1^1 \circ X_1$.
Then
$$
X_\Lag \circ \rho_1^o \circ j_1 = X_\Lag \circ \rho_1^1 =
\Tan\rho_1^1 \circ X_1 = \Tan\rho_1^o \circ \Tan j_1 \circ X_1 =
\Tan \rho_1^o \circ X_o \circ j_1 \, .
$$
\end{proof}

The above result states that
for every $X_o \in \vf_{\W_1}(\W_o)$
there exists a vector field $X_\Lag \in \vf(J^{2k-1}\pi)$
such that the following diagram commutes
$$
\xymatrix{
\ & \ & \Tan\W_o \ar[ddll]_-{\Tan\rho_1^o} \\
\ & \ & \Tan\W_1 \ar[dll]^-{\Tan\rho_1^1} \\
\Tan(J^{2k-1}\pi) & \ & \ \\
\ & \ & \W_o \ar[ddll]_{\rho_1^o} \ar@/_1.8pc/[uuu]_-{X_o} \\
\ & \ & \W_1 \ar[dll]^-{\rho_1^1} \ar@/^1.8pc/[uuu]^-{X_1}|(.225){\hole} \ar@{_{(}->}[u]_{j_1} \\
J^{2k-1}\pi \ar[uuu]^-{X_\Lag} & \ & \
}
$$

As a consequence we obtain:

\begin{teor}
\label{thm:Unified-LagrangianSolutions}
Let $X_o \in \vf_{\W_1}(\W_o)$ be a vector field solution to equations
\eqref{eqn:DynEqVectorFieldsSuppW1} and tangent to $\W_1$
(at least on the points of a submanifold $\W_f \hookrightarrow \W_1$).
Then there exists a unique semispray of type $k$, $X_\Lag \in \vf(J^{2k-1}\pi)$,
which is a solution to the equations
\begin{equation}
\label{eqn:LagrangianDynEqVectorFields}
\inn(X_\Lag)\Omega_{\Lag} = 0 \quad , \quad
\inn(X_\Lag)(\bar{\pi}^{2k-1})^*\eta = 1
\end{equation}
(at least on the points of $S_f = \rho_1^o(\W_f)$).
In addition, if $\Lag \in \df^{1}(J^{k}\pi)$ is a regular
Lagrangian density, then $X_\Lag$ is a semispray of type $1$.

Conversely, if $X_\Lag \in \vf(J^{2k-1}\pi)$ is a semispray of type $k$
(resp., of type $1$), which is a solution to equations
\eqref{eqn:LagrangianDynEqVectorFields}
(at least on the points of a submanifold $S_f \hookrightarrow J^{2k-1}\pi$),
then there exists a unique vector field $X_o \in \vf_{\W_1}(\W_o)$ which
is a solution to equations \eqref{eqn:DynEqVectorFieldsSuppW1}
(at least on the points of
$\W_f = (\rho_1^1)^{-1}(S_f) \hookrightarrow \W_1 \hookrightarrow \W_o$),
and it is a semispray of type $k$ in $\W_o$ (resp., of type $1$).
\end{teor}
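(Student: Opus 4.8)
The strategy is to leverage Lemma \ref{lemma:CorrepondenceXoXL}, which already produces the vector field $X_\Lag$, and the projectability relation $\Omega_o = (\rho_1^o)^*\Omega_\Lag$ of Lemma \ref{lemma:RelationFormsLagrangianUnified}, to transport the dynamical equations between $\W_o$ and $J^{2k-1}\pi$. The whole argument splits into the two directions of the statement, and within each I must track three things: that the dynamical equations \eqref{eqn:LagrangianDynEqVectorFields} hold, that the transversality normalization is preserved, and that the semispray type is correctly inherited.

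For the forward direction, I would start from a solution $X_o \in \vf_{\W_1}(\W_o)$ of \eqref{eqn:DynEqVectorFieldsSuppW1} tangent to $\W_1$ and let $X_\Lag$ be the vector field supplied by Lemma \ref{lemma:CorrepondenceXoXL}. The key computation is to push the first equation $\restric{\inn(X_o)\Omega_o}{\W_1}=0$ down to $J^{2k-1}\pi$: writing $\Omega_o = (\rho_1^o)^*\Omega_\Lag$ and using that $X_o$ and $X_\Lag$ are $\rho_1^o$-related on $\W_1$ (via the commuting diagram and the diffeomorphism $\rho_1^1$ of Proposition \ref{prop:W1DiffeoJ2k-1Pi}), one gets $\inn(X_o)\Omega_o = (\rho_1^o)^*\inn(X_\Lag)\Omega_\Lag$ along $\W_1$; since $\rho_1^1$ is a diffeomorphism this yields $\restric{\inn(X_\Lag)\Omega_\Lag}{S_f}=0$. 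The normalization $\inn(X_\Lag)(\bar{\pi}^{2k-1})^*\eta = 1$ follows from $(\rho_1^o)^*(\bar{\pi}^{2k-1})^*\eta = (\rho_\R^o)^*\eta$ together with the second equation in \eqref{eqn:DynEqVectorFieldsSuppW1}. The semispray type is read off in coordinates: from the local form \eqref{eqn:LocalCoordSolutionVectorField} of $X_o$ (a semispray of type $k$ in $\W_o$), projecting by $\rho_1^o$ discards the $\partial/\partial p_A^i$ components and leaves a semispray of type $k$ in $J^{2k-1}\pi$; when $\Lag$ is regular, Proposition \ref{prop:RegularLagrangianTangencyCondition} gives the sharper form \eqref{eqn:LocalCoordSolutionVectorFieldRegularLagrangian}, which projects to a type-$1$ semispray. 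Uniqueness of $X_\Lag$ is immediate from Lemma \ref{lemma:CorrepondenceXoXL}.

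For the converse, given a semispray $X_\Lag$ on $J^{2k-1}\pi$ solving \eqref{eqn:LagrangianDynEqVectorFields}, I would transport it back to $\W_1 \simeq J^{2k-1}\pi$ via $\rho_1^1$, obtaining $X_1 \in \vf(\W_1)$, and then use the inclusion $j_1$ and the section $\hat h$ (equivalently, the fact that $\W_1 = \mathrm{graph}\,\widetilde{\Leg}$) to build $X_o \in \vf_{\W_1}(\W_o)$ with prescribed $p$-components dictated by the dynamical equations \eqref{eqn:LocalCoordDynEquationVectorFields}. The point to verify is that the $\rho_2^o$-vertical (momentum) components $G_A^i = f\,d_T(p_A^i)$ are forced: these are exactly the components that do not arise by $\rho_1^o$-projection, so I must check they are consistent with tangency to $\W_1$ and with $\inn(X_o)\Omega_o = 0$ on $\W_f = (\rho_1^1)^{-1}(S_f)$. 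This is again a coordinate verification using \eqref{eqn:LocalCoordFormsWo} and the local expressions already derived in Section \ref{subsubsection:DynEqVectFields}.

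The main obstacle is the bookkeeping around $\rho_1^o$-relatedness, since $\rho_1^o$ is a submersion with nontrivial kernel $\ker(\rho_1^o)_* = \langle \partial/\partial p_A^i \rangle$: the pullback identity $\inn(X_o)\Omega_o = (\rho_1^o)^*\inn(X_\Lag)\Omega_\Lag$ holds only modulo this kernel, so I must verify that the $\inn(\,\cdot\,)\Omega_o$ contributions from the vertical part of $X_o$ vanish on $\W_1$ — precisely the content encoded in the constraint $p_A^{k-1} = \partial\hat L/\partial q_k^A$ defining $\W_1$ and in the tangency equations \eqref{eqn:TangencyVectorFieldXo}. Handling the singular case correctly (passing to $\W_f$ and $S_f$ and ensuring $S_f = \rho_1^o(\W_f)$ matches $\W_f = (\rho_1^1)^{-1}(S_f)$) is where the argument demands the most care, since there the component functions $F_j^A$ are only partially determined and compatibility must be inherited through the diffeomorphism $\rho_1^1$.
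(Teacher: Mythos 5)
Your proposal follows essentially the same route as the paper's proof: it combines Lemma \ref{lemma:CorrepondenceXoXL} with the projectability relation $\Omega_o = (\rho_1^o)^*\Omega_\Lag$ of Lemma \ref{lemma:RelationFormsLagrangianUnified} to transport both equations \eqref{eqn:DynEqVectorFieldsSuppW1} through the surjective submersion $\rho_1^o$ (using that $\rho_1^1$ is a diffeomorphism for the converse), and then reads off the semispray type from the local expressions \eqref{eqn:LocalCoordSolutionVectorField} and \eqref{eqn:LocalCoordSolutionVectorFieldRegularLagrangian}. The only remark worth making is that your worry about the identity $\inn(X_o)\Omega_o = (\rho_1^o)^*\inn(X_\Lag)\Omega_\Lag$ holding ``only modulo the kernel'' is unnecessary: since $\Omega_o$ is itself a pullback by $\rho_1^o$, any $\rho_1^o$-vertical component of $X_o$ contracts to zero automatically, so the identity is exact for $\rho_1^o$-related fields.
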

\begin{proof}
Applying Lemmas \ref{lemma:RelationFormsLagrangianUnified}
and \ref{lemma:CorrepondenceXoXL}, we have:
$$
0 = \restric{\inn(X_o)\Omega_o}{\W_1} = \restric{\inn(X_o)(\rho_1^o)^*\Omega_\Lag}{\W_1} =
\restric{(\rho_1^o)^*\inn(X_\Lag)\Omega_\Lag}{\W_1} \, ,
$$
$$
1 = \restric{\inn(X_o)(\rho_\R^o)^*\eta}{\W_1} = \restric{\inn(X_o)(\bar{\pi}^{2k-1}\circ \rho_1^o)^*\eta}{\W_1} =
\restric{(\rho_1^o)^*\inn(X_\Lag)(\bar{\pi}^{2k-1})^*\eta}{\W_1} \, .
$$
However, as $\rho_1^o$ is a surjective submersion, this is equivalent to
$$
0 = \restric{\inn(X_\Lag)\Omega_\Lag}{\rho_1^o(\W_1)} =
\restric{\inn(X_\Lag)\Omega_\Lag}{J^{2k-1}\pi} \, ,
$$
$$
1 = \restric{\inn(X_\Lag)(\bar{\pi}^{2k-1})^*\eta}{\rho_1^o(\W_1)} =
\restric{\inn(X_\Lag)(\bar{\pi}^{2k-1})^*\eta}{J^{2k-1}\pi} \, ,
$$
since $\rho_1^o(\W_1) = J^{2k-1}\pi$ (or the submanifold $S_f \hookrightarrow J^{2k-1}\pi$).
The converse is immediate, reversing this reasoning.

In order to prove that $X_\Lag$ is a semispray of type $k$, we
compute its local expression in coordinates. From the local expression
\eqref{eqn:LocalCoordSolutionVectorField} for the vector field $X_o$
(where the functions $F_j^A$ are the solutions of equations
\eqref{eqn:TangencyVectorFieldXo}), and using Lemma
\ref{lemma:CorrepondenceXoXL}, we obtain that the local expression
of the vector field $X_\Lag \in \vf(J^{2k-1}\pi)$ is
$$
X_\Lag = \derpar{}{t} + \sum_{i=0}^{k}q_{i+1}^A\derpar{}{q_i^A} + \sum_{j=k}^{2k-1}F_j^A\derpar{}{q_j^A} \, ,
$$
which is the local expression for a semispray of type $k$ in $J^{2k-1}\pi$.

Finally, if $\Lag \in \df^{1}(J^{k}\pi)$ is a regular Lagrangian density,
equations \eqref{eqn:TangencyVectorFieldXo} become
\eqref{eqn:TangencyConditionRegularLagrangian}, and hence the local expression
of $X_\Lag$ is
$$
X_\Lag = \derpar{}{t} + \sum_{i=0}^{2k-2}q_{i+1}^A\derpar{}{q_i^A} + F_{2k-1}^A\derpar{}{q_{2k-1}^A} \, ,
$$
which is the local expression for a semispray of type $1$ in $J^{2k-1}\pi$.
\end{proof}

\textbf{Remarks}:

\begin{itemize}
\item
It is important to point out that, if $\Lag$ is not a regular
Lagrangian density, then $X_o$ is a semispray of type $k$ in $\W_o$, but
not necessarily a semispray of type $1$. This means that $X_\Lag$ may be a
solution to the Lagrangian equations for vector fields, but the trajectories
given by its integral sections are not solutions to the dynamical system
(the sections solution to the dynamical problem must be holonomic,
but the integral sections of $X_\Lag$ are only holonomic of type $k$).
Thus, for singular Lagrangians, this must be imposed as an additional condition.
This constitutes a relevant difference from the case of first-order dynamical systems,
where this condition ($X_\Lag$ is a semispray of type $1$) is obtained
straightforwardly in the unified formalism.

For singular Lagrangians, only in the most interesting cases can we assure
the existence of a submanifold $\W_f \hookrightarrow \W_1$ and vector fields
$X_o \in \vf_{\W_1}(\W_o)$ tangent to $\W_f$ which are solutions to equations
\eqref{eqn:DynEqVectorFieldsSuppWf}. Then, considering the submanifold
$S_f = \rho_1^1(\W_f) \hookrightarrow J^{2k-1}\pi$, in the best cases we have
that those semisprays of type $1$ $X_\Lag$ exist, perhaps on another submanifold
$M_f \hookrightarrow S_f$ where they are tangent, and are solutions to equations
\begin{equation}
 \label{eqn:LagrangianDynEqVectorFieldsSupportMf}
\restric{\inn(X_\Lag)\Omega_\Lag}{M_f} = 0 \quad , \quad
\restric{\inn(X_\Lag)(\bar{\pi}^{2k-1})^*\eta}{M_f} = 1 \, .
\end{equation}

\item
Notice that Theorem \ref{thm:Unified-LagrangianSolutions} states that
there is a one-to-one correspondence between vector field $X_o \in \vf_{\W_1}(\W_o)$
solutions to equations \eqref{eqn:DynEqVectorFieldsSuppW1} and vector field
$X_\Lag \in \vf(J^{2k-1}\pi)$ solutions to \eqref{eqn:LagrangianDynEqVectorFields},
but not uniqueness. In fact, we cannot assure uniqueness of the vector field $X_\Lag$
unless the Lagrangian density is regular, as we can see in the following result:
\end{itemize}

\begin{corol}
If the Lagrangian density $\Lag\in\df^{1}(J^{k}\pi)$ is regular,
then there is a unique semispray of type $1$, $X_\Lag \in \vf(J^{2k-1}\pi)$,
which is a solution to equations \eqref{eqn:LagrangianDynEqVectorFields}.
\end{corol}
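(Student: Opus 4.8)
The plan is to read the result off from Proposition~\ref{prop:RegularLagrangianTangencyCondition} and Theorem~\ref{thm:Unified-LagrangianSolutions}, transporting the existence-and-uniqueness statement from the unified system on $\W_o$ to the Lagrangian system on $J^{2k-1}\pi$ through the diffeomorphism $\rho_1^1$. All the genuine work has already been done in those results, so the corollary should follow formally.

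For existence, I would start from the regularity hypothesis: by Proposition~\ref{prop:RegularLagrangianTangencyCondition}, since $\Lag$ is regular the Hessian $\bigl(\partial^2\hat{L}/\partial q_k^B\partial q_k^A\bigr)$ is everywhere invertible, the tangency equations \eqref{eqn:TangencyVectorFieldXo} are compatible on all of $\W_1$, and there is a vector field $X_o\in\vf_{\W_1}(\W_o)$ solving \eqref{eqn:DynEqVectorFieldsSuppW1}, tangent to $\W_1$, and a semispray of type $1$ in $\W_o$. Feeding this $X_o$ into the direct implication of Theorem~\ref{thm:Unified-LagrangianSolutions} (here $\W_f=\W_1$ and $S_f=J^{2k-1}\pi$, precisely because regularity removes the secondary constraints) yields a semispray of type $1$, $X_\Lag\in\vf(J^{2k-1}\pi)$, solving \eqref{eqn:LagrangianDynEqVectorFields}.

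For uniqueness I would invoke the converse implication of Theorem~\ref{thm:Unified-LagrangianSolutions}: every semispray of type $1$ on $J^{2k-1}\pi$ that solves \eqref{eqn:LagrangianDynEqVectorFields} comes from a \emph{unique} semispray of type $1$ $X_o\in\vf_{\W_1}(\W_o)$ solving \eqref{eqn:DynEqVectorFieldsSuppW1}. Together with the direct implication, this sets up a bijection between the two solution classes, mediated (via Lemma~\ref{lemma:CorrepondenceXoXL}) by the diffeomorphism $\rho_1^1\colon\W_1\to J^{2k-1}\pi$ of Proposition~\ref{prop:W1DiffeoJ2k-1Pi}, so no solution is gained or lost passing between the two formalisms. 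Since Proposition~\ref{prop:RegularLagrangianTangencyCondition} guarantees the class on the $\W_o$ side is a singleton, the class on the $J^{2k-1}\pi$ side is a singleton as well, which is exactly the assertion.

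I do not anticipate a real obstacle, as the analytic difficulty (solving \eqref{eqn:TangencyVectorFieldXo} by inverting the Hessian) is already absorbed into Proposition~\ref{prop:RegularLagrangianTangencyCondition}. The only point I would state carefully is the meaning of ``unique'': in the regular case $(\Omega_\Lag,(\bar{\pi}^{2k-1})^*\eta)$ is cosymplectic, so $\Omega_\Lag$ has maximal rank and equations \eqref{eqn:LagrangianDynEqVectorFields} already pin down a single vector field (the associated Reeb field); Theorem~\ref{thm:Unified-LagrangianSolutions} merely certifies that this unique solution happens to be a semispray of type $1$. Thus the semispray-of-type-$1$ condition is automatic rather than an additional restriction, and the uniqueness is uniqueness among all solutions. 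I would mention this cosymplectic viewpoint as an equivalent and more direct route to the same conclusion.
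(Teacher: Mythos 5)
Your argument is correct and is essentially the paper's own proof: Proposition~\ref{prop:RegularLagrangianTangencyCondition} gives the unique semispray of type $1$ on $\W_o$ tangent to $\W_1$, and Theorem~\ref{thm:Unified-LagrangianSolutions} transports existence and uniqueness to $J^{2k-1}\pi$ via the diffeomorphism $\rho_1^1$. Your closing aside, that in the regular case $(\Omega_\Lag,(\bar{\pi}^{2k-1})^*\eta)$ is cosymplectic so equations \eqref{eqn:LagrangianDynEqVectorFields} already single out the Reeb vector field directly, is a valid and more intrinsic shortcut, but it is supplementary to the main argument, which matches the paper.
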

\begin{proof}
If the Lagrangian density $\Lag \in \df^{1}(J^{k}\pi)$ is regular, using
Proposition \ref{prop:RegularLagrangianTangencyCondition}, there exists a unique
semispray of type $1$, $X_o \in \vf(\W_o)$, solution to equations
\eqref{eqn:DynEqVectorFieldsSuppW1} and tangent to $\W_1$.
Then, using Theorem \ref{thm:Unified-LagrangianSolutions},
there is a unique vector field $X_\Lag \in \vf(J^{2k-1}\pi)$,
which is a semispray of type $1$ in $J^{2k-1}\pi$ and is a
solution to equations \eqref{eqn:LagrangianDynEqVectorFields}.
\end{proof}

In other words, uniqueness of the vector field $X_\Lag$ is a consequence of uniqueness of $X_o$.

Finally, as a consequence of Theorem \ref{thm:EquivalenceTheoremUnified}
and the results stated in this Section, we obtain:

\begin{teor}
\label{thm:EquivalenceTheoremLagrangian}
The following assertions on a section $\phi \in \Gamma(\pi)$ are equivalent:
\begin{enumerate}
\item $j^{2k-1}\phi$ is a solution to equation \eqref{eqn:LagrangianDynEqSections}, that is,
$$(j^{2k-1}\phi)^*\inn(Y)\Omega_\Lag = 0, \quad \mbox{for every }Y \in \vf(J^{2k-1}\pi) \, .$$
\item In natural coordinates, if $\phi = (t,q_0^A(t))$, then
$j^{2k-1}\phi = (t,q_0^A(t),q_1^A(t),\ldots,q_{2k-1}^A(t))$
is a solution to the $k$th order Euler-Lagrange equations given by
\eqref{eqn:Euler-LagrangeEquations}, that is,
$$
\restric{\derpar{L}{q_0^A}}{j^{2k-1}\phi} - \restric{\frac{\d}{\d t}\derpar{L}{q_1^A}}{j^{2k-1}\phi}
+ \restric{\frac{\d^2}{\d t^2}\derpar{L}{q_2^A}}{j^{2k-1}\phi} + \ldots +
(-1)^k \restric{\frac{\d^k}{\d t^k}\derpar{L}{q_k^A}}{j^{2k-1}\phi} = 0 \, .
$$
\item Denoting $\psi_\Lag = j^{2k-1}\phi$, then $\psi_\Lag$ is a solution to the equation
$$\inn(\psi^\prime_\Lag)(\Omega_\Lag \circ \psi_\Lag) = 0 \, ,$$
where $\psi^\prime_\Lag \colon \R \to \Tan(J^{2k-1}\pi)$ is the
canonical lifting of $\psi_\Lag$ to the tangent bundle.
\item $j^{2k-1}\phi$ is an integral curve of a vector field
contained in a class of $\bar{\pi}^{2k-1}$-transverse semisprays of type $1$,
$\left\{ X_\Lag \right\} \subset \vf(J^{2k-1}\pi)$,
satisfying the first equation in \eqref{eqn:LagrangianDynEqVectorFields}, that is,
$$\inn(X_\Lag)\Omega_\Lag = 0 \, .$$
\end{enumerate}
\end{teor}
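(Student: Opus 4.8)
The plan is to establish the four-way equivalence by transporting the already-proven unified equivalence (Theorem~\ref{thm:EquivalenceTheoremUnified}) down to $J^{2k-1}\pi$ via the diffeomorphism $\rho_1^1$, supplemented by the direct coordinate computations of this Section. The central observation is that a section $\phi\in\Gamma(\pi)$ produces a holonomic section $\psi_\Lag = j^{2k-1}\phi\in\Gamma(\bar\pi^{2k-1})$, and that holonomic sections $\psi_\Lag$ solving the Lagrangian equations are in bijective correspondence with holonomic sections $\psi_o\in\Gamma(\rho_\R^o)$ solving the unified equations, because $\psi_o$ must take values in $\W_1 = {\rm graph}\,\widetilde{\Leg}$ and $\rho_1^1\colon\W_1\to J^{2k-1}\pi$ is a diffeomorphism. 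I would organize the proof as a cycle of implications, leaning on the relation $\Omega_o = (\rho_1^o)^*\Omega_\Lag$ from Lemma~\ref{lemma:RelationFormsLagrangianUnified} wherever a form pulls back.

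For the implication $(1\Rightarrow 2)$, I would compute $(j^{2k-1}\phi)^*\inn(Y)\Omega_\Lag$ in natural coordinates using the explicit expression \eqref{eqn:LocalCoordPoincareCartan2Form} for $\Omega_\Lag$; demanding that this vanish for every $Y\in\vf(J^{2k-1}\pi)$, after integrating by parts the total-derivative terms along the holonomic section, yields precisely the Euler--Lagrange equations \eqref{eqn:Euler-LagrangeEquations}. For $(2\Rightarrow 1)$ I would reverse this coordinate calculation. Alternatively—and more cleanly—I would factor the whole equivalence through the unified formalism: given $\phi$, lift $\psi_\Lag = j^{2k-1}\phi$ to the holonomic section $\psi_o = (\rho_1^1)^{-1}\circ\psi_\Lag\in\Gamma(\rho_\R^o)$ taking values in $\W_1$; then Proposition~\ref{prop:LagrangianSections} and the remark following it give that $(1)$ here is equivalent to assertion~$(1)$ of Theorem~\ref{thm:EquivalenceTheoremUnified}, while the discussion in Section~\ref{subsection:LagrangianSections} shows the Euler--Lagrange equations $(2)$ here are the $\rho_1^o$-projection of equations \eqref{eqn:LocalCoordMomentumDiffEq1}--\eqref{eqn:LocalCoordMomentumDiffEq2}, i.e. assertion~$(2)$ there. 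Thus $(1)\Leftrightarrow(2)$ descends directly.

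For $(2\Leftrightarrow 3)$, I would compute $\inn(\psi'_\Lag)(\Omega_\Lag\circ\psi_\Lag)$ for the canonical lift $\psi'_\Lag(t)=(1,\dot q_0^A,\ldots,\dot q_{2k-1}^A)$; substituting \eqref{eqn:LocalCoordPoincareCartan2Form} and setting the result to zero recovers the Euler--Lagrange system, exactly mirroring the $(2\Leftrightarrow 3)$ step of Theorem~\ref{thm:EquivalenceTheoremUnified} once the holonomy constraints $\dot q_i^A = q_{i+1}^A$ are used to eliminate the redundant equations. For $(2\Leftrightarrow 4)$, I would invoke Theorem~\ref{thm:Unified-LagrangianSolutions}: a $\bar\pi^{2k-1}$-transverse semispray of type $1$ solving $\inn(X_\Lag)\Omega_\Lag=0$ corresponds to a $\rho_\R^o$-transverse semispray of type $1$ $X_o$ on $\W_o$ satisfying the first equation of \eqref{eqn:DynEquationVectorFields}, and its integral curves through $j^{2k-1}\phi$ then satisfy the Euler--Lagrange equations by the $(2\Leftrightarrow 4)$ part of Theorem~\ref{thm:EquivalenceTheoremUnified}, projected down by $\rho_1^o$.

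The main obstacle will be handling the holonomy bookkeeping correctly: the integral curves of $X_\Lag$ are a priori only holonomic of type $k$ (by Theorem~\ref{thm:Unified-LagrangianSolutions}), so to identify them with $j^{2k-1}\phi$ I must use that the class $\{X_\Lag\}$ in assertion~$(4)$ consists of semisprays of type $1$, whose integral curves are genuine prolongations $j^{2k-1}\phi$; the equivalence is therefore an equivalence \emph{about holonomic sections of the form $j^{2k-1}\phi$}, not about arbitrary solutions of the field equations. Keeping track of this restriction—and the associated integration-by-parts in the $(1\Leftrightarrow 2)$ and $(2\Leftrightarrow 3)$ coordinate computations, where the total time derivatives $d_T^i$ become ordinary $\d^i/\d t^i$ only along the holonomic section—is where the bulk of the care is required; everything else follows formally from the diffeomorphism $\rho_1^1$ and the pullback identities of Lemma~\ref{lemma:RelationFormsLagrangianUnified}.
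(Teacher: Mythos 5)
Your proposal is correct and follows essentially the same route as the paper, which states this theorem without a separate proof precisely because it is obtained ``as a consequence of Theorem \ref{thm:EquivalenceTheoremUnified} and the results stated in this Section'' --- i.e.\ by transporting the unified equivalence through the diffeomorphism $\rho_1^1\colon\W_1\to J^{2k-1}\pi$, the pullback identities of Lemma \ref{lemma:RelationFormsLagrangianUnified}, Proposition \ref{prop:LagrangianSections} and Theorem \ref{thm:Unified-LagrangianSolutions}. Your additional care about the holonomy bookkeeping (restricting to sections of the form $j^{2k-1}\phi$ and using the semispray-of-type-$1$ condition in assertion~$4$) matches the remarks the paper makes in Section \ref{section:lagform}.
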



\section{Hamiltonian formalism}
\label{section:hamform}


\subsection{General setting}

In order to describe the Hamiltonian formalism on the basis of the unified one,
we must distinguish between the regular and non-regular cases.
In fact, the only ``non-regular'' case we consider is the almost-regular one,
so we need to define the concept of \textsl{almost-regular Lagrangian density}.

Before doing so, we must define the generalization of the Legendre map from the
first-order time-dependent case. Since $\Theta_\Lag \in \df^{1}(J^{2k-1}\pi)$
is a $\pi^{2k-1}_{k-1}$-semibasic $1$-form, we can give the following definition:

\begin{definition}
The {\rm extended Legendre-Ostrogradsky map} associated with the Lagrangian
density $\Lag$ is the map
$\widetilde{\Leg} \colon J^{2k-1}\pi \to \Tan^*(J^{k-1}\pi)$ defined as follows:
for every $u \in \Tan(J^{2k-1}\pi)$,
$$
\Theta_\Lag(u) = \left\langle \Tan\pi^{2k-1}_{k-1}(u) \mid \widetilde{\Leg}(\tau_{J^{2k-1}\pi}(u)) \right\rangle \, ,
$$
where $\tau_{J^{2k-1}\pi} \colon \Tan(J^{2k-1}\pi) \to J^{2k-1}\pi$ is the canonical
submersion.
\end{definition}

This map verifies that $\pi_{J^{k-1}\pi} \circ \widetilde{\Leg} = \pi^{2k-1}_{k-1}$,
where $\pi_{J^{k-1}\pi} \colon \Tan^*(J^{k-1}\pi) \to J^{k-1}\pi$ is the natural projection.
Furthermore, if $\Theta_{k-1} \in \df^{1}(\Tan^*(J^{k-1}\pi))$ and
$\Omega_{k-1} = -\d\Theta_{k-1} \in \df^{2}(\Tan^*(J^{k-1}\pi))$ are the
canonical $1$ and $2$ forms of the cotangent bundle $\Tan^*(J^{k-1}\pi)$,
we have that
$$
\widetilde{\Leg}^*\Theta_{k-1} = \Theta_\Lag \quad , \quad
\widetilde{\Leg}^*\Omega_{k-1} = \Omega_\Lag \, .
$$

Bearing in mind the local expression
\eqref{eqn:LocalCoordCanonicalSymplecticForm}
of the tautological $1$-form on $\Tan^*(J^{k-1}\pi)$
and the local expression
\eqref{eqn:LocalCoordPoincareCartan1Form} of $\Theta_\Lag$,
we have that the local expression of the map
$\widetilde{\Leg}$ is:
$$
\widetilde{\Leg}^*t = t \quad , \quad
\widetilde{\Leg}^*q_r^A = q_r^A \, ,
$$
$$
\widetilde{\Leg}^*p = L - \sum_{r=1}^kq_r^A\sum_{i=0}^{k-r}(-1)^id_T^i\left( \derpar{L}{q_{r+i}^A} \right) \quad , \quad
\widetilde{\Leg}^*p_A^{r-1} = \sum_{i=0}^{k-r}(-1)^i d_T^i\left( \derpar{L}{q_{r+i}^A} \right) \, ,
$$
that is, this map coincides with the extended Legendre-Ostrogradsky map
defined locally in Section \ref{subsubsection:DynEqSections}, thus
justifying the notation and terminology introduced therein.

Notice that $\dim\Tan^*(J^{k-1}\pi) = 2kn+2 > 2kn+1 = \dim J^{2k-1}\pi$. Thus,
$\Tan^*(J^{k-1}\pi)$ is not a suitable dual bundle to $J^{2k-1}\pi$
for giving a Hamiltonian description of the dynamical system.
Therefore, according to, for instance, \cite{art:Roman09}
and the references therein, we consider the bundle
$J^{k-1}\pi^* = \Tan^*(J^{k-1}\pi)/(\bar{\pi}^{k-1})^*\Tan^*\R$,
with the natural projections
$$
\mu \colon \Tan^*(J^{k-1}\pi) \to J^{k-1}\pi^* \quad , \quad
\pi_{J^{k-1}\pi}^r \colon J^{k-1}\pi^* \to J^{k-1}\pi \quad , \quad
\bar{\tau} = \pi_{J^{k-1}\pi}^r \circ \bar{\pi}^{k-1} \colon J^{k-1}\pi^* \to \R
\, ,
$$
where $\pi_{J^{k-1}\pi}^r$ is the map satisfying $\pi_{J^{k-1}\pi} = \pi_{J^{k-1}\pi}^r \circ \mu$.
Notice that $\dim J^{k-1}\pi^* = 2kn+1$.

Thus, we define the {\sl restricted Legendre-Ostrogradsky map} as
$\Leg = \mu \circ \widetilde{\Leg} \colon J^{2k-1}\pi \to J^{k-1}\pi^*$.
This map satisfies $\pi_{J^{k-1}\pi}^r \circ \Leg = \pi^{2k-1}_{k-1}$, and has
the following local expression
$$
\Leg^*t = t \quad , \quad
\Leg^*q_r^A = q_r^A \quad , \quad
\Leg^*p_A^{r-1} = \sum_{i=0}^{k-r}(-1)^i d_T^i\left( \derpar{L}{q_{r+i}^A} \right) \, .
$$
In other words, this map coincides with the restricted Legendre-Ostrogradsky map
defined locally in Section \ref{subsubsection:DynEqSections}. This
justifies the notation and terminology introduced in that Section.

\begin{prop}
\label{prop:RankExtendedRestrictedLegTrans}
For every $\bar{y} \in J^{2k-1}\pi$ we have that
 ${\rm rank}(\widetilde{\Leg}(\bar{y})) = {\rm rank}(\Leg(\bar{y}))$.
\end{prop}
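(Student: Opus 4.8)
The plan is to reduce the equality of ranks to a single linear-algebra observation about pulled-back coordinate differentials, and then to verify that observation by one short differentiation. I would start from the standard fact that, for any smooth map $f$, the rank of the tangent map $\Tan_{\bar y}f$ equals the rank of its dual, hence equals the dimension of the span of the $1$-forms $f^*\d\xi$ at $\bar y$, where $\xi$ ranges over a set of coordinates on the target. I would apply this to both $\widetilde{\Leg}$ and $\Leg=\mu\circ\widetilde{\Leg}$, writing $\hat p=\widetilde{\Leg}^*p$ and recalling that $\hat p_A^{r-1}=\widetilde{\Leg}^*p_A^{r-1}=\Leg^*p_A^{r-1}$, since $\mu$ merely forgets the $p$ coordinate.

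Because $\widetilde{\Leg}^*t=t$ and $\widetilde{\Leg}^*q_r^A=q_r^A$, the rank of $\Tan_{\bar y}\widetilde{\Leg}$ is the dimension of the span of
$$
\{\,\d t,\ \d q_r^A\ (0\leqslant r\leqslant k-1),\ \d\hat p,\ \d\hat p_A^{r-1}\ (1\leqslant r\leqslant k)\,\}
$$
at $\bar y$, while the rank of $\Tan_{\bar y}\Leg$ is the dimension of the span of the same family with $\d\hat p$ deleted --- this is precisely because the coordinates on $J^{k-1}\pi^*$ are $(t,q_r^A,p_A^{r-1})$, that is, those on $\Tan^*(J^{k-1}\pi)$ with $p$ removed. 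Thus the Proposition is equivalent to the assertion that, at every point, $\d\hat p$ already lies in the span of $\{\d t,\ \d q_r^A,\ \d\hat p_A^{r-1}\}$.

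To prove this, I would use the local expression of $\widetilde{\Leg}^*p$, which reads $\hat p=L-\sum_{r=1}^k q_r^A\,\hat p_A^{r-1}$, and differentiate it:
$$
\d\hat p=\d L-\sum_{r=1}^k \hat p_A^{r-1}\,\d q_r^A-\sum_{r=1}^k q_r^A\,\d\hat p_A^{r-1}.
$$
The last sum is manifestly in the required span. Expanding $\d L=\derpar{L}{t}\,\d t+\sum_{i=0}^k\derpar{L}{q_i^A}\,\d q_i^A$, the only contribution that is not obviously in the span is the term $\derpar{L}{q_k^A}\,\d q_k^A$. The key step is that the $r=k$ summand of $\sum_{r=1}^k\hat p_A^{r-1}\,\d q_r^A$ equals $\hat p_A^{k-1}\,\d q_k^A$, and since $\hat p_A^{k-1}=\derpar{L}{q_k^A}$ (the $r=k$ case of the Legendre relations, where the defining sum has a single term), these two $\d q_k^A$ contributions cancel exactly. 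After the cancellation, every surviving term involves only $\d t$, $\d q_r^A$ with $0\leqslant r\leqslant k-1$, and $\d\hat p_A^{r-1}$, so $\d\hat p$ lies in the span and the two ranks coincide.

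The main --- indeed the only nontrivial --- obstacle is exactly this cancellation of $\d q_k^A$: a priori $\d q_k^A$ is independent of the coordinate differentials governing ${\rm rank}(\Leg)$, so without the identity $\hat p_A^{k-1}=\derpar{L}{q_k^A}$ the extended map could carry strictly greater rank. Equivalently and dually, the same computation shows $\operatorname{im}(\Tan_{\bar y}\widetilde{\Leg})\cap\ker\Tan\mu=\{0\}$, where $\ker\Tan\mu=\langle\partial/\partial p\rangle$; combined with the general formula ${\rm rank}(\Leg)={\rm rank}(\widetilde{\Leg})-\dim\bigl(\operatorname{im}(\Tan_{\bar y}\widetilde{\Leg})\cap\ker\Tan\mu\bigr)$ for a submersion $\mu$, this yields the statement as well.
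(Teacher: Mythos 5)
Your argument is correct and follows essentially the same route as the paper, which only sketches the proof by remarking that the extra row of the Jacobian of $\widetilde{\Leg}$ (equivalently, $\d\hat p = \d(\widetilde{\Leg}^*p)$) is a linear combination of the others; you supply the computation the paper omits, with the decisive point being exactly the cancellation of the $\d q_k^A$ terms via $\hat p_A^{k-1}=\partial L/\partial q_k^A$. Nothing further is needed.
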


We do not prove this result.
Following the patterns in \cite{art:DeLeon_Marin_Marrero96},
the idea is to compute in natural coordinates
the local expressions of the Jacobian matrices of $\Leg$ and $\widetilde{\Leg}$.
Then, observe that the ranks of both maps depend on the rank of the Hessian matrix
of $L$ with respect to $q_k^A$ at the point $\bar{y}$, and that the additional row in the Jacobian matrix
of $\widetilde{\Leg}$ is a linear combination of the others. See \cite{art:DeLeon_Marin_Marrero96}
for details in the first-order case.

As a consequence of Proposition \ref{prop:RankExtendedRestrictedLegTrans}, and taking into
account the different definitions given for
the regularity of the Lagrangian density, we arrive at the following result:

\begin{prop}
\label{prop:CharacterizationRegularLagrangian}
Given a Lagrangian $\Lag \in \df^{1}(J^{k}\pi)$, the following
statements are equivalent:
\ben
\item $\Omega_\Lag$ has maximal rank on $J^{2k-1}\pi$.
\item The pair $(\Omega_\Lag,(\bar{\pi}^{2k-1})^*\eta)$ is a cosymplectic structure on $J^{2k-1}\pi$.
\item $\Leg \colon J^{2k-1}\pi \to J^{k-1}\pi^*$ is a local diffeomorphism.
\item $\widetilde{\Leg} \colon J^{2k-1}\pi \to \Tan^*(J^{k-1}\pi)$ is an immersion.
\een
\end{prop}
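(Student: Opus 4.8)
The plan is to show that each of the four statements is equivalent to the pointwise nondegeneracy of the partial Hessian
$$
W_{AB}(\bar y) = \derpars{L}{q_k^B}{q_k^A}(\bar y) \, ,
$$
that is, to the condition $\det(W_{AB})(\bar y)\neq 0$ for every $\bar y\in J^{2k-1}\pi$, which is the common thread linking the presymplectic, the variational and the Legendre pictures. First I would record two equivalences that are already at hand from the local computations carried out above. From the coordinate expression \eqref{eqn:LocalCoordPoincareCartan2Form} of $\Omega_\Lag$ on the $(2kn+1)$-dimensional manifold $J^{2k-1}\pi$, one reads off that $\Omega_\Lag$ attains its maximal possible rank $2kn$ (equivalently, $\ker\Omega_\Lag$ is one-dimensional) if and only if $\det(W_{AB})\neq 0$; this gives $(1)\Leftrightarrow\det(W_{AB})\neq0$. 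Likewise, computing the tangent map of $\Leg$ from its local expression $\Leg^*p_A^{r-1}=\sum_{i=0}^{k-r}(-1)^id_T^i(\partial L/\partial q_{r+i}^A)$ shows that $\Leg_*$ is an isomorphism exactly where $\det(W_{AB})\neq0$; since $\dim J^{2k-1}\pi=\dim J^{k-1}\pi^*=2kn+1$, this is the same as $\Leg$ being a local diffeomorphism, so $(3)\Leftrightarrow\det(W_{AB})\neq0$. Hence $(1)\Leftrightarrow(3)$.

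Next I would dispose of $(3)\Leftrightarrow(4)$ by a rank-plus-dimension argument. By Proposition \ref{prop:RankExtendedRestrictedLegTrans} we have ${\rm rank}\,\widetilde{\Leg}(\bar y)={\rm rank}\,\Leg(\bar y)$ at every $\bar y$. Since $\dim\Tan^*(J^{k-1}\pi)=2kn+2>2kn+1=\dim J^{2k-1}\pi$, the map $\widetilde{\Leg}$ is an immersion precisely when it has full rank $2kn+1$ on the source; by the equality of ranks this happens exactly when $\Leg$ has rank $2kn+1$, i.e. when $\Leg$ is a local diffeomorphism. Thus $(3)\Leftrightarrow(4)$.

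It remains to prove $(1)\Leftrightarrow(2)$. The implication $(2)\Rightarrow(1)$ is immediate: if $(\Omega_\Lag,(\bar{\pi}^{2k-1})^*\eta)$ is cosymplectic then $\Omega_\Lag^{kn}\neq0$, which forces ${\rm rank}\,\Omega_\Lag=2kn$. For $(1)\Rightarrow(2)$, closedness is free, since $\Omega_\Lag=-\d\Theta_\Lag$ is exact and $(\bar{\pi}^{2k-1})^*\eta=\d t$ is closed, so the only point to check is that $(\bar{\pi}^{2k-1})^*\eta\wedge\Omega_\Lag^{kn}$ is a volume form. As $(1)$ already gives $\Omega_\Lag^{kn}\neq0$ with $\ker\Omega_\Lag$ one-dimensional, this reduces to showing that $\d t$ does not vanish on $\ker\Omega_\Lag$. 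Here I would invoke the dynamics: by $(1)\Leftrightarrow\det(W_{AB})\neq0$ the Lagrangian density is regular, so Proposition \ref{prop:RegularLagrangianTangencyCondition} together with Theorem \ref{thm:Unified-LagrangianSolutions} provides a type-$1$ semispray $X_\Lag\in\vf(J^{2k-1}\pi)$ solving \eqref{eqn:LagrangianDynEqVectorFields}, that is, $\inn(X_\Lag)\Omega_\Lag=0$ and $\inn(X_\Lag)(\bar{\pi}^{2k-1})^*\eta=1$. Since $\ker\Omega_\Lag$ is one-dimensional, $X_\Lag$ spans it and $\d t(X_\Lag)=1\neq0$, which is exactly the required transversality; hence $(\bar{\pi}^{2k-1})^*\eta\wedge\Omega_\Lag^{kn}$ is a volume form and $(2)$ holds.

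The main obstacle is precisely this last transversality in $(1)\Rightarrow(2)$: maximal rank of $\Omega_\Lag$ by itself does not a priori prevent the kernel direction from being $\bar{\pi}^{2k-1}$-vertical, and it is the existence of a $\d t$-normalized semispray lying in $\ker\Omega_\Lag$ that rules this out. Everything else in the argument is bookkeeping with ranks and dimensions.
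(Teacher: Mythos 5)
Your proposal is correct and follows essentially the same route as the paper, whose proof is the one-line observation that all four statements are locally equivalent to $\det\left(\derpars{L}{q_k^B}{q_k^A}\right)\neq 0$. You have simply filled in the details the paper leaves as ``easy to check,'' including the one genuinely non-obvious point — that in $(1)\Rightarrow(2)$ the line $\ker\Omega_\Lag$ must be transverse to $\d t$, which you settle correctly by exhibiting the normalized semispray from Proposition \ref{prop:RegularLagrangianTangencyCondition} and Theorem \ref{thm:Unified-LagrangianSolutions} as a spanning element of that kernel.
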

\begin{proof}
It is easy to check that all the statements are locally equivalent to
$$
\det\left( \derpars{L}{q_k^B}{q_k^A} \right)(\bar{y}) \neq 0, \quad \mbox{for every } \bar{y} \in J^{k}\pi \, .
$$
\end{proof}

Now, we denote by
$\widetilde{\P} = {\rm Im}(\widetilde{\Leg}) = \widetilde{\Leg}(J^{2k-1}\pi)
\stackrel{\tilde{\jmath}}{\hookrightarrow} \Tan^*(J^{k-1}\pi)$
the image of the extended Legendre-Ostrogradsky map;
and by $\P = {\rm Im}(\Leg) = \Leg(J^{2k-1}\pi)
\stackrel{\jmath}{\hookrightarrow} J^{k-1}\pi^*$ the image of the
restricted Legendre-Ostrogradsky map.
Let $\bar{\tau}_o = \bar{\tau} \circ \jmath \colon \P \to \R$ be the natural projection.
We can now give the following definition:

\begin{definition}
A Lagrangian $\Lag \in \df^{1}(J^{k}\pi)$ is called an
{\rm almost-regular Lagrangian density} if:
\begin{enumerate}
\item $\P$ is a closed submanifold of $J^{k-1}\pi^*$.
\item $\Leg$ is a submersion onto its image.
\item For every $\bar{y} \in J^{2k-1}\pi$, the fibers $\Leg^{-1}(\Leg(\bar{y}))$ are connected
submanifolds of $J^{2k-1}\pi$.
\end{enumerate}
\end{definition}

As a consequence of Prop. \ref{prop:RankExtendedRestrictedLegTrans},
we have that $\widetilde{\P}$ is diffeomorphic to $\P$.
This diffeomorphism is just $\mu$ restricted to the image set $\widetilde{\P}$,
and we denote it by $\widetilde{\mu}$.
This enables us to state:

\begin{lem}
\label{lemma:ProjectedHamiltonianSection}
If the Lagrangian density $\Lag \in \df^{1}(J^{k}\pi)$ is,
at least, almost-regular,
the Hamiltonian section $\hat{h} \in \Gamma(\mu_\W)$
induces a Hamiltonian section $h \in \Gamma(\mu)$ defined by
\begin{equation}
h([\alpha]) = (\rho_2 \circ \hat{h})([(\rho_2^r)^{-1}(\jmath([\alpha]))]),\quad \mbox{for every } [\alpha] \in \P.
\end{equation}
\end{lem}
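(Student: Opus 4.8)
The plan is to build $h$ as the inverse of the diffeomorphism $\widetilde{\mu}$ and then to check that this coincides with the expression in the statement, the real content being that the latter does not depend on any choice. First I would invoke the two facts already available at this point: that $\mu \circ \widetilde{\Leg} = \Leg$, so that, regarding the codomains as $\widetilde{\P}$ and $\P$, one has $\widetilde{\mu} \circ \widetilde{\Leg} = \Leg$; and that $\widetilde{\mu} = \restric{\mu}{\widetilde{\P}} \colon \widetilde{\P} \to \P$ is a diffeomorphism, which is the consequence of Proposition \ref{prop:RankExtendedRestrictedLegTrans} together with the almost-regularity hypothesis recorded just before the statement. With these in hand I would simply set $h = \tilde{\jmath} \circ \widetilde{\mu}^{-1} \colon \P \to \Tan^*(J^{k-1}\pi)$. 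This map is smooth, being the composition of the smooth $\widetilde{\mu}^{-1}$ with the embedding $\tilde{\jmath}$, and it is a section of $\mu$ over $\P$: since $\mu \circ \tilde{\jmath} = \jmath \circ \widetilde{\mu}$ by the very definition of $\widetilde{\mu}$, we obtain $\mu \circ h = \jmath \circ \widetilde{\mu} \circ \widetilde{\mu}^{-1} = \jmath$, which is exactly the section condition meant by $h \in \Gamma(\mu)$ in the almost-regular case (a genuine global section when $\Lag$ is regular and $\P = J^{k-1}\pi^*$).

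Next I would show that this $h$ agrees with the displayed formula, for which the essential step is well-definedness. Injectivity of $\widetilde{\mu}$ forces $\widetilde{\Leg}$ to be constant along the fibres of $\Leg$: if $\Leg(\bar{y}_1) = \Leg(\bar{y}_2)$, then $\widetilde{\mu}(\widetilde{\Leg}(\bar{y}_1)) = \widetilde{\mu}(\widetilde{\Leg}(\bar{y}_2))$, whence $\widetilde{\Leg}(\bar{y}_1) = \widetilde{\Leg}(\bar{y}_2)$; consequently $h$ is characterized by $h \circ \Leg = \widetilde{\Leg}$. To recognize the right-hand side of the formula as $\widetilde{\Leg}(\bar{y})$, I would use the identification $\W_o \cong \W_r$ furnished by $\mu_\W \circ j_o$, under which $\hat{h}$ becomes $j_o$ and $\rho_2 \circ \hat{h}$ coincides with $\rho_2^o$, together with the coordinate identity $\restric{\rho_2^o}{\W_1} = \widetilde{\Leg} \circ \rho_1^1$. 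The latter is the short computation that on $\W_1 = {\rm graph}\,\widetilde{\Leg}$ (Corollary \ref{corol:W1GraphExtendedFLSections}) the $p$-slot of $\rho_2^o$, namely $\hat{L} - p_A^i q_{i+1}^A$, restricts to $\widetilde{\Leg}^*p = \hat{L} - \sum_{i=0}^{k-1} q_{i+1}^A \hat{p}_A^i$, while the remaining slots read $(t,q_i^A,\hat{p}_A^i)$; here I use that $\rho_1^1$ is a diffeomorphism (Proposition \ref{prop:W1DiffeoJ2k-1Pi}). Thus, taking any $\bar{y} \in \Leg^{-1}([\alpha])$ and lifting it into $\W_1 \subset \W_r$ via $(\rho_1^1)^{-1}$, the value of $\rho_2 \circ \hat{h}$ at that point equals $\widetilde{\Leg}(\bar{y})$, which is precisely $h([\alpha])$.

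Finally I would flag the correct reading of the preimage in the formula, which is where the only genuine subtlety lies. The set $(\rho_2^r)^{-1}(\jmath([\alpha]))$ is a full fibre, and $\rho_2 \circ \hat{h} = \rho_2^o$ is \emph{not} constant on all of it (it varies in the $q_k^A$-direction unless $\partial \hat{L}/\partial q_k^A = p_A^{k-1}$); what matters is its restriction to the points lying over $\Leg^{-1}([\alpha])$, that is, to the graph $\W_1$, where the identity above pins the value down to $\widetilde{\Leg}(\bar{y})$. The independence of the representative $\bar{y}$ within the connected fibre $\Leg^{-1}([\alpha])$ is then guaranteed by the constancy of $\widetilde{\Leg}$ on $\Leg$-fibres already established. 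I expect this well-definedness to be the main obstacle: once $\widetilde{\mu}$ is known to be a diffeomorphism, the section property is formal and the coordinate check $\restric{\rho_2^o}{\W_1} = \widetilde{\Leg}\circ\rho_1^1$ is routine, so the whole argument reduces to exploiting injectivity of $\widetilde{\mu}$.
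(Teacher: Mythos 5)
Your argument is correct, but it is organized around a different mechanism than the paper's proof. The paper works directly with the displayed formula and isolates, as the crucial point, the $\rho_2$-projectability of the local Hamiltonian function $\hat{H}$: computing $\ker{\rho_2}_* = \langle \partial/\partial q_k^A,\ldots,\partial/\partial q_{2k-1}^A\rangle$, it finds that $\hat{H}$ is projectable exactly where $p_A^{k-1} = \partial L/\partial q_k^A$, a condition fulfilled over $\P$, so that $\rho_2\circ\hat{h}$ carries the relevant fibre into $\widetilde{\P}$. You instead construct $h$ globally as $\tilde{\jmath}\circ\widetilde{\mu}^{-1}$, which makes the section property $\mu\circ h=\jmath$ purely formal, and then match this with the formula by restricting the preimage to the graph $\W_1$, where $\rho_2^o$ coincides with $\widetilde{\Leg}\circ\rho_1^1$; well-definedness is then the constancy of $\widetilde{\Leg}$ on the fibres of $\Leg$, which you correctly extract from the injectivity of $\widetilde{\mu}$. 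Both routes ultimately rest on the same fact recorded just before the Lemma (that $\widetilde{\mu}\colon\widetilde{\P}\to\P$ is a diffeomorphism), but they trade different benefits: yours yields the clean characterization $h\circ\Leg=\widetilde{\Leg}$ and a precise reading of the preimage in the formula --- your observation that $\rho_2\circ\hat{h}$ is \emph{not} constant on the whole fibre $(\rho_2^r)^{-1}(\jmath([\alpha]))$, only on its intersection with the locus $p_A^{k-1}=\partial\hat{L}/\partial q_k^A$, is a legitimate sharpening of the statement as written --- whereas the paper's coordinate computation pinpoints concretely that the sole obstruction lives in the $p$-slot, i.e.\ in $\hat{H}$, which is the form in which the result is reused later.
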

\begin{proof}
It is clear that, given $[\alpha] \in J^{k-1}\pi^*$, the section
$\hat{h}$ maps every point $(\bar{y},[\alpha]) \in (\rho^r_2)^{-1}([\alpha])$
into $\rho_2^{-1}[\rho_2(\hat{h}(\bar{y},[\alpha]))]$.
So we have the diagram
$$
\xymatrix{
\widetilde{\mathcal{P}} \ar[rr]^{\tilde{\jmath}} \ar[d]^-{\tilde{\mu}} & \ & \Tan^*(J^{k-1}\pi) \ar[d]^-{\mu} & \ & \W \ar[d]_-{\mu_\W} \ar[ll]_-{\rho_2} \\
\mathcal{P} \ar[rr]^-{\jmath} \ar[urr]^-{h}& \ & J^{k-1}\pi^* & \ & \W_r \ar@/_0.7pc/[u]_{\hat{h}} \ar[ll]_-{\rho_2^r}
}
$$
Thus, the crucial point is
the $\rho_2$-projectability of the local function $\hat{H}$. However, since a
local base for $\ker{\rho_2}_*$ is given by
$$
\ker{\rho_2}_* = \left\langle \derpar{}{q_k^A},\ldots,\derpar{}{q_{2k-1}^A} \right\rangle \, ,
$$
we have that $\hat{H}$ is $\rho_2$-projectable if and only if
$$
p_{A}^{k-1} = \derpar{L}{q_k^A} \, .
$$
This condition is fulfilled when $[\alpha] \in \mathcal{P}$, which
implies that $\rho_2[\hat{h}((\rho_2^r)^{-1}([\alpha]))] \in \widetilde{\mathcal{P}}$.
\end{proof}

\textbf{Remark}: In the hyperregular case, we have $\P = J^{k-1}\pi^*$.

Locally, this Hamiltonian $\mu$-section is specified by the local Hamiltonian function
$H \in \Cinfty(J^{k-1}\pi^*)$, that is,
$$
h(t,q_i^A,p^i_A) = (t,q_i^A,-H,p_A^i) \, .
$$


\subsection{Hyperregular and regular systems.
Dynamical equations for sections and vector fields}
\label{subsection:HamiltonianRegularCase}

Now we analyze the case when $\Lag$ is a regular Lagrangian density,
although by simplicity we focus on the hyperregular case (the regular
case is recovered from this by restriction on the corresponding open sets
where $\Leg$ is a local diffeomorphism). This means that
the phase space of the system is $J^{k-1}\pi^*$ (or the corresponding
open sets).

In this case, we can give the explicit expression for the
local Hamiltonian function, which is
\begin{equation}
\label{eqn:LocalCoordLocalHamiltonianFunctionHamForm}
H = \sum_{i=0}^{k-2}p_A^iq_{i+1}^A + p_A^{k-1}(\Leg^{-1})^*q_k^A - (\pi_{k}^{2k-1} \circ \Leg^{-1})^*L \, .
\end{equation}

The Hamiltonian section $h$ is used to construct the
\textsl{Hamilton-Cartan forms} in $J^{k-1}\pi^*$ by making
\begin{equation*}
\Theta_h = h^*\Theta_{k-1} \in \df^{1}(J^{k-1}\pi^*) \quad , \quad
\Omega_h = h^*\Omega_{k-1} \in \df^{2}(J^{k-1}\pi^*) \, ,
\end{equation*}
where $\Theta_{k-1}$ and $\Omega_{k-1}$ are the canonical $1$ and $2$ forms
of the cotangent bundle $\Tan^*(J^{k-1}\pi)$.
Bearing in mind the local expression \eqref{eqn:LocalCoordCanonicalSymplecticForm}
of $\Theta_{k-1}$ and $\Omega_{k-1}$, the local expression of the forms $\Theta_{h}$
and $\Omega_{h}$ is
\begin{equation*}
\Theta_h = p_A^i\d q_i^A - H\d t \quad , \quad
\Omega_h = \d q_i^A \wedge \d p_A^i + \d H \wedge \d t \, ,
\end{equation*}

Notice that $\Leg^*\Theta_h = \Theta_\Lag$ and $\Leg^*\Omega_h = \Omega_\Lag$.

\begin{prop}
\label{prop:DiffeomorphismHamiltonianHyperregularCase}
If $\Lag \in \df^{1}(J^{k}\pi)$ is a hyperregular Lagrangian,
then $\hat{\rho}_2^1 = \hat{\rho}_2^o \circ j_1 \colon \W_1 \to J^{k-1}\pi^*$
is a diffeomorphism.
\end{prop}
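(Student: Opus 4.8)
The plan is to reduce the statement to two diffeomorphisms that are already at our disposal: the map $\rho_1^1 \colon \W_1 \to J^{2k-1}\pi$, which is a diffeomorphism by Proposition \ref{prop:W1DiffeoJ2k-1Pi}, and the restricted Legendre-Ostrogradsky map $\Leg \colon J^{2k-1}\pi \to J^{k-1}\pi^*$, which is a global diffeomorphism precisely because $\Lag$ is assumed hyperregular (so that $\P = J^{k-1}\pi^*$). The key observation I would establish is the factorization
\[
\hat{\rho}_2^1 = \Leg \circ \rho_1^1 \, ,
\]
after which the conclusion is immediate, $\hat{\rho}_2^1$ being a composition of diffeomorphisms.

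To prove the factorization I would track an arbitrary point of $\W_1$ through the projections. Since $\W_1 = {\rm graph}\,\widetilde{\Leg}$ inside $\W_o \hookrightarrow \W$, every point of $\W_1$ has the form $w = (\bar{y},\widetilde{\Leg}(\bar{y}))$ with $\bar{y} \in J^{2k-1}\pi$. By the definitions of the projections one has $\rho_1^1(w) = \rho_1(w) = \bar{y}$ and $\rho_2^o(j_1(w)) = \rho_2(w) = \widetilde{\Leg}(\bar{y})$. Applying $\hat{\rho}_2^o = \mu \circ \rho_2^o$ and using the relation $\Leg = \mu \circ \widetilde{\Leg}$, I obtain
\[
\hat{\rho}_2^1(w) = \mu(\widetilde{\Leg}(\bar{y})) = \Leg(\bar{y}) = \Leg(\rho_1^1(w)) \, ,
\]
which is exactly the claimed identity. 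As a cross-check I would verify the same relation in natural coordinates: under the identification $\rho_1^1$, the submanifold $\W_1$ carries coordinates $(t,q_0^A,\ldots,q_{2k-1}^A)$, and the local expression of $\hat{\rho}_2^o \circ j_1$ reproduces the momentum functions $\Leg^* p_A^{r-1} = \sum_{i=0}^{k-r}(-1)^i d_T^i\left(\derpar{\hat{L}}{q_{r+i}^A}\right)$ determined in Proposition \ref{prop:W1GraphFLSections}, which are precisely the components of $\Leg$.

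With the factorization in hand, the result follows at once: $\rho_1^1$ is a diffeomorphism by Proposition \ref{prop:W1DiffeoJ2k-1Pi}, and $\Leg$ is a global diffeomorphism by hyperregularity, so $\hat{\rho}_2^1 = \Leg \circ \rho_1^1$ is a diffeomorphism. I do not expect any genuine analytic obstacle here; the only point requiring care is the bookkeeping in the factorization step, where one must keep straight how the restrictions of $\rho_1$, $\rho_2$ and $\mu$ act on the graph $\W_1$ and match $\mu \circ \widetilde{\Leg}$ with $\Leg$. The dimension count ($\dim \W_1 = \dim J^{2k-1}\pi = 2kn+1 = \dim J^{k-1}\pi^*$) is automatically consistent and need not be invoked separately.
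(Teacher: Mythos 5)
Your proposal is correct and follows exactly the paper's argument: the paper also proves the result by establishing the commutativity $\hat{\rho}_2^1 = \Leg \circ \rho_1^1$ and then invoking Proposition \ref{prop:W1DiffeoJ2k-1Pi} together with the hyperregularity of $\Lag$ to conclude that $\hat{\rho}_2^1$ is a composition of diffeomorphisms. Your pointwise verification of the factorization on the graph $\W_1$ is a slightly more explicit justification of the commutative diagram that the paper simply displays.
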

\begin{proof}
The following diagram is commutative
$$
\xymatrix{
\ & \ & \W_o \ar@/_1.3pc/[ddll]_-{\rho_1^o} \ar@/^1.3pc/[ddrr]^-{\hat{\rho}_2^o} \ & \ \\
\ & \ & \W_1 \ar[dll]_-{\rho_1^1} \ar[drr]^-{\hat{\rho}_2^1} \ar@{^{(}->}[u]^{j_1} & \ & \ \\
J^{2k-1}\pi \ar[rrrr]^-{\Leg} & \ & \ & \ & J^{k-1}\pi^*
}
$$
that is, we have $\hat{\rho}_2^1 = \hat{\rho}_2^o \circ j_1 = \Leg \circ \rho_1^1$.
Now, by Proposition \ref{prop:W1DiffeoJ2k-1Pi}, the map
$\rho_1^1$ is a diffeomorphism. In addition, as $\Lag$ is hyperregular,
the map $\Leg$ is also a diffeomorphism, and thus $\hat{\rho}_2^1$
is a composition of diffeomorphisms, and hence a diffeomorphism itself.
\end{proof}

This last result allows us to recover the Hamiltonian formalism
in the same way we recovered the Lagrangian one (see Section \ref{section:lagform}),
just using the diffeomorphism to define a correspondence between the solutions
of both equations.


Using the previous results, we can recover the Hamiltonian sections
in $J^{k-1}\pi^*$ from the sections solution to the equations in the
unified formalism.

\begin{prop}
\label{prop:HamiltonianSectionsRegular}
Let $\Lag \in \df^{1}(J^{k}\pi)$ be a hyperregular Lagrangian.
Let $\psi_o \in \Gamma(\rho_\R^o)$ be a section solution to equation
\eqref{eqn:DynEquationSections}. Then the section
$\psi_h = \hat{\rho}_2^o \circ \psi_o \in \Gamma(\bar{\tau})$
is a solution to the equation
\begin{equation}
\label{eqn:HamiltonianDynEqSections}
\psi_h^*\inn(Y)\Omega_h = 0, \quad \mbox{for every }Y \in \vf(J^{k-1}\pi^*)
\end{equation}
\end{prop}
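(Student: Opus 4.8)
The plan is to avoid a direct submersion argument through $\hat{\rho}_2^o$ (which, as I explain below, does not go through verbatim) and instead to reduce the statement to the Lagrangian result already established in Proposition \ref{prop:LagrangianSections}, transporting it along the Legendre--Ostrogradsky diffeomorphism.

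First I would record the geometric identities I intend to use. Since $\Lag$ is hyperregular, any section $\psi_o$ solving \eqref{eqn:DynEquationSections} is automatically holonomic and takes its values in the submanifold $\W_1$ (the highest-order holonomy being forced by the dynamical equations in the regular case, as noted after Theorem \ref{thm:EquivalenceTheoremUnified}); hence I may write $\psi_o = j_1 \circ \psi_1$ for a section $\psi_1 \colon \R \to \W_1$. By Proposition \ref{prop:LagrangianSections} the projected section $\psi_\Lag = \rho_1^o \circ \psi_o = \rho_1^1 \circ \psi_1$ is holonomic and satisfies $\psi_\Lag^*\inn(\widetilde{Y})\Omega_\Lag = 0$ for every $\widetilde{Y} \in \vf(J^{2k-1}\pi)$. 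Finally, from the commutative triangle of Proposition \ref{prop:DiffeomorphismHamiltonianHyperregularCase} one has $\hat{\rho}_2^1 = \Leg \circ \rho_1^1$, so that $\psi_h = \hat{\rho}_2^o \circ \psi_o = \hat{\rho}_2^1 \circ \psi_1 = \Leg \circ \rho_1^1 \circ \psi_1 = \Leg \circ \psi_\Lag$.

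The second step is the contraction computation. Let $Y \in \vf(J^{k-1}\pi^*)$ be arbitrary. Because $\Lag$ is hyperregular, $\Leg \colon J^{2k-1}\pi \to J^{k-1}\pi^*$ is a global diffeomorphism, so there is a unique vector field $\widetilde{Y} \in \vf(J^{2k-1}\pi)$ that is $\Leg$-related to $Y$. Using $\psi_h = \Leg \circ \psi_\Lag$, the naturality of the interior product for $\Leg$-related fields, and the relation $\Leg^*\Omega_h = \Omega_\Lag$ recorded in this section, I would compute $\psi_h^*\inn(Y)\Omega_h = \psi_\Lag^*\Leg^*\inn(Y)\Omega_h = \psi_\Lag^*\inn(\widetilde{Y})(\Leg^*\Omega_h) = \psi_\Lag^*\inn(\widetilde{Y})\Omega_\Lag$, which vanishes by Proposition \ref{prop:LagrangianSections}. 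As $Y$ was arbitrary, this yields \eqref{eqn:HamiltonianDynEqSections}.

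The main obstacle, and the reason for routing the argument through the Legendre diffeomorphism rather than imitating the proof of Proposition \ref{prop:LagrangianSections} directly, is that there is no global analogue of Lemma \ref{lemma:RelationFormsLagrangianUnified} for $\hat{\rho}_2^o$: a short coordinate check gives $(\hat{\rho}_2^o)^*\Omega_h - \Omega_o = \d(H \circ \hat{\rho}_2^o - \hat{H}) \wedge \d t$, and although the function $H \circ \hat{\rho}_2^o - \hat{H}$ vanishes on $\W_1$, its derivative in directions transverse to $\W_1$ does not, so the discrepancy does not drop out of $\psi_o^*\inn(Z)(\cdot)$ for an arbitrary $\hat{\rho}_2^o$-related field $Z$. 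The diffeomorphism reduction sidesteps this entirely. Two points must be checked to make the reduction legitimate: that hyperregularity really forces $\psi_o$ to be holonomic, so that Proposition \ref{prop:LagrangianSections} applies, and that $\Leg$ is a genuine diffeomorphism, so that the $\Leg$-related $\widetilde{Y}$ exists globally; the merely regular case is then recovered by restricting to the open sets on which $\Leg$ is a local diffeomorphism.
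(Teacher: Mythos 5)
Your argument is correct, and it takes a genuinely different route from the paper's. The paper's proof is literally declared to be ``analogous to Proposition \ref{prop:LagrangianSections}'': one lifts an arbitrary $Y\in\vf(J^{k-1}\pi^*)$ along the submersion $\hat{\rho}_2^o$ to a $\hat{\rho}_2^o$-related $Z\in\vf(\W_o)$ and compares $\Omega_o$ with $(\hat{\rho}_2^o)^*\Omega_h$ at the points where $\psi_o$ takes its values. You instead transport the already-established Lagrangian statement along the Legendre--Ostrogradsky diffeomorphism, via $\psi_h=\Leg\circ\psi_\Lag$ and $\Leg^*\Omega_h=\Omega_\Lag$; every step of that reduction is sound, and the two side conditions you flag (full holonomy of $\psi_o$ forced by regularity, so that Proposition \ref{prop:LagrangianSections} applies, and global invertibility of $\Leg$, so that the $\Leg$-related lift $\widetilde{Y}$ exists) are exactly the right ones to check and are both supported by the paper's remarks. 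What your detour buys is that it uses only identities already recorded in Section \ref{section:hamform}; what it costs is that it leans on hyperregularity in an essential way, whereas the paper's intended route needs only that $\hat{\rho}_2^o$ is a submersion. One correction, however: the obstacle you cite to the direct argument is not actually there. The discrepancy $(\hat{\rho}_2^o)^*\Omega_h-\Omega_o=\d\bigl(H\circ\hat{\rho}_2^o-\hat{H}\bigr)\wedge\d t$ vanishes as a $2$-form at every point of $\W_1$, not merely as a function on $\W_1$: the differential of $G=H\circ\hat{\rho}_2^o-\hat{H}$ annihilates $\Tan\W_1$ because $G$ vanishes there, and it also annihilates the transverse directions $\partial/\partial p_A^i$, since $\derpar{G}{p_A^i}=0$ identically for $i\leqslant k-2$ while $\derpar{G}{p_A^{k-1}}=\bigl((\Leg^{-1})^*q_k^A\bigr)\circ\hat{\rho}_2^o-q_k^A$ vanishes precisely on the graph of $\Leg$ (and $\derpar{G}{q_k^A}=\derpar{\hat{L}}{q_k^A}-p_A^{k-1}$ likewise vanishes there). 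Since every solution $\psi_o$ takes its values in $\W_1$, the discrepancy term drops out of $\psi_o^*\inn(Z)(\cdot)$ after all, so the ``analogous'' submersion proof does go through; your reformulation is a clean alternative that makes the vanishing manifest without this pointwise computation, not a repair of a broken argument.
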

\begin{proof}
The proof of this result is analogous to the proof
given for Proposition \ref{prop:LagrangianSections}.
\end{proof}

The diagram for this situation is the following:
$$
\xymatrix{
\W_o \ar[dd]^-{\rho_\R^o} \ar[drr]^-{\hat{\rho}_2^o} & \ & \  \\
\ & \ & J^{k-1}\pi^* \ar[dll]_{\bar{\tau}} \\
\R \ar@/^1pc/[uu]^{\psi_o} \ar@/_1pc/@{-->}[urr]_{\psi_h = \hat{\rho}_2^o \circ \psi_o} & \ & \ \\
}
$$

\textbf{Remarks}:
\begin{itemize}
\item
Observe that, for the Hamiltonian sections, the
condition of holonomy on the section $\psi_o$ is not required. This
is because we only need $\psi_o$ to be a holonomic section of type $k$,
and this condition is always fulfilled.
\item
As for the Lagrangian sections given by Proposition \ref{prop:LagrangianSections},
this last result does not give an equivalence between sections
$\psi_o \in \Gamma(\rho_\R^o)$, which are solutions to equation
\eqref{eqn:DynEquationSections}, and sections $\psi_h \in \Gamma(\bar{\tau})$,
which are solutions to equation \eqref{eqn:HamiltonianDynEqSections}. However, recall
that sections $\psi_o$, which are solutions to the dynamical equations in the
unified formalism, take values in $\W_1$, and hence we are able to establish the
equivalence using the diffeomorphism $\hat{\rho}_2^1$.
\end{itemize}

Let $\psi_o(t) = (t,q_i^A(t),q_j^A(t),p_A^i(t)) \in \Gamma(\rho_\R^o)$,
$0 \leqslant i \leqslant k-1$, $k \leqslant j \leqslant 2k-1$, be a solution
to equation \eqref{eqn:DynEquationSections}. Hence, $\psi_o$ must satisfy equations
\eqref{eqn:LocalCoordMomentumDiffEq1}, \eqref{eqn:LocalCoordMomentumDiffEq2} and
\eqref{eqn:LocalCoordHolonomyK}. Now, bearing in mind the local expression for
the local Hamiltonian function $H$ given in
\eqref{eqn:LocalCoordLocalHamiltonianFunctionHamForm}, we obtain
the following $2kn$ equations for the section
$\psi_h = \hat{\rho}_2^o \circ \psi_o = (t,q_i^A(t),p_A^i(t))$:
\begin{equation}
\label{eqn:HamiltonEquations}
\dot{q}_i^A = \restric{\derpar{H}{p_A^i}}{\psi_h} \quad ; \quad
\dot{p}_A^i = - \restric{\derpar{H}{q_i^A}}{\psi_h} \, .
\end{equation}
So we obtain the Hamilton equations for a $k$th-order
non-autonomous system.


Next, we recover the Hamiltonian vector field
from the vector field solution to the dynamical equations
\eqref{eqn:DynEquationVectorFields}
in the hyperregular case. As $\hat{\rho}_2^1$ is a
diffeomorphism by Proposition
\ref{prop:DiffeomorphismHamiltonianHyperregularCase},
the reasoning we follow is the same as that for the
Lagrangian formalism.

\begin{lem}
\label{lemma:CorrepondenceXoXh}
Let $\Lag \in \df^{1}(J^{k}\pi)$ be a hyperregular Lagrangian.
Let $X_o \in \vf(\W_o)$ be a vector field tangent to $\W_1$.
Then there exists a unique vector field $X_h \in \vf(J^{k-1}\pi^*)$
such that $X_h \circ \hat{\rho}_2^o \circ j_1 = \Tan\hat{\rho}_2^o \circ X_o \circ j_1$.
\end{lem}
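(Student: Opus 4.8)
The plan is to mimic verbatim the argument of Lemma \ref{lemma:CorrepondenceXoXL}, simply replacing the diffeomorphism $\rho_1^1$ used there by the diffeomorphism $\hat{\rho}_2^1 = \hat{\rho}_2^o \circ j_1 \colon \W_1 \to J^{k-1}\pi^*$ furnished by Proposition \ref{prop:DiffeomorphismHamiltonianHyperregularCase}. This is the only structural ingredient that changes, and it is available precisely because $\Lag$ is hyperregular (so that $\Leg$, and hence $\hat{\rho}_2^1 = \Leg \circ \rho_1^1$, is a diffeomorphism).

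Carrying this out, I would proceed in three steps. First, since $X_o \in \vf(\W_o)$ is tangent to $\W_1$, its restriction to $\W_1$ takes values in $\Tan\W_1 \subset \Tan\W_o$; hence there is a well-defined vector field $X_1 \in \vf(\W_1)$ characterized by $\Tan j_1 \circ X_1 = X_o \circ j_1$. Second, because $\hat{\rho}_2^1$ is a diffeomorphism, there exists a unique vector field $X_h \in \vf(J^{k-1}\pi^*)$ that is $\hat{\rho}_2^1$-related to $X_1$, namely $X_h = \Tan\hat{\rho}_2^1 \circ X_1 \circ (\hat{\rho}_2^1)^{-1}$, so that $X_h \circ \hat{\rho}_2^1 = \Tan\hat{\rho}_2^1 \circ X_1$. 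Third, I would verify the stated relation using $\hat{\rho}_2^1 = \hat{\rho}_2^o \circ j_1$ and the functoriality of the tangent functor:
$$
X_h \circ \hat{\rho}_2^o \circ j_1 = X_h \circ \hat{\rho}_2^1 = \Tan\hat{\rho}_2^1 \circ X_1 = \Tan\hat{\rho}_2^o \circ \Tan j_1 \circ X_1 = \Tan\hat{\rho}_2^o \circ X_o \circ j_1 \, .
$$
Uniqueness of $X_h$ is automatic: since $\hat{\rho}_2^1$ is invertible, any vector field satisfying the required identity is forced to equal $\Tan\hat{\rho}_2^1 \circ X_1 \circ (\hat{\rho}_2^1)^{-1}$.

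There is no substantive obstacle here; the proof is routine diagram-chasing once Proposition \ref{prop:DiffeomorphismHamiltonianHyperregularCase} is in hand, which is why the hyperregularity hypothesis enters exactly at the single point where the diffeomorphism property of $\hat{\rho}_2^1$ is invoked. For the merely regular (rather than hyperregular) case the same argument applies after restricting to the open sets on which $\Leg$ is a local diffeomorphism, in parallel with the discussion at the start of Section \ref{subsection:HamiltonianRegularCase}.
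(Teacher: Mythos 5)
Your proof is correct and is exactly the argument the paper intends: the paper's own proof of this lemma simply says it is analogous to that of Lemma \ref{lemma:CorrepondenceXoXL}, with the diffeomorphism $\hat{\rho}_2^1$ from Proposition \ref{prop:DiffeomorphismHamiltonianHyperregularCase} playing the role of $\rho_1^1$, which is precisely what you carry out.
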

\begin{proof}
The proof of this result is similar to the
proof given for Lemma \ref{lemma:CorrepondenceXoXL}.
\end{proof}

This result states that,
for every $X_o \in \vf_{\W_1}(\W_o)$,
we have a vector field $X_h \in \vf(J^{k-1}\pi^*)$
such that the following diagram commutes
$$
\xymatrix{
\Tan\W_o \ar[ddrr]^-{\Tan\hat{\rho}_2^o} & \ & \ \\
\Tan\W_1 \ar[drr]_-{\Tan\hat{\rho}_2^1} & \ & \ \\
\ & \ & \Tan(J^{k-1}\pi^*) \\
\W_o \ar[ddrr]^-{\hat{\rho}_2^o} \ar@/^1.8pc/[uuu]^-{X_o} & \ & \ \\
\W_1 \ar[drr]_-{\hat{\rho}_2^1} \ar@{^{(}->}[u]^-{j_1} \ar@/_1.8pc/[uuu]_-{X_1}|(.23){\hole} & \ & \ \\
\ & \ & J^{k-1}\pi^* \ar[uuu]_-{X_h} \\
}
$$


\begin{teor}
\label{thm:Unified-HamiltonianSolutions}
Let $\Lag \in \df^{1}(J^{k}\pi)$ be a hyperregular Lagrangian,
and $X_o \in \vf_{\W_1}(\W_o)$ the vector field solution to equations
\eqref{eqn:DynEqVectorFieldsSuppW1} and tangent to $\W_1$. Then, there
exists a unique vector field $X_h \in \vf(J^{k-1}\pi^*)$, which is a solution
to the equations
\begin{equation}
\label{eqn:HamiltonianDynEqVectorFields}
\inn(X_h)\Omega_h = 0 \quad , \quad \inn(X_h)\bar{\tau}^*\eta = 1 \, .
\end{equation}

Conversely, if $X_h \in \vf(J^{k-1}\pi^*)$ is a solution to equations
\eqref{eqn:HamiltonianDynEqVectorFields}, then there exists a unique
vector field $X_o \in \vf_{\W_1}(\W_o)$, tangent to $\W_1$, which is a
solution to equations \eqref{eqn:DynEqVectorFieldsSuppW1}.
\end{teor}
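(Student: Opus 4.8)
The plan is to mirror the proof of Theorem \ref{thm:Unified-LagrangianSolutions}, with the diffeomorphism $\rho_1^1$ replaced by the diffeomorphism $\hat{\rho}_2^1 = \hat{\rho}_2^o \circ j_1$ of Proposition \ref{prop:DiffeomorphismHamiltonianHyperregularCase}, and with Lemma \ref{lemma:RelationFormsLagrangianUnified} replaced by its Hamiltonian counterpart. First I would establish the identity $j_1^*\Omega_o = (\hat{\rho}_2^1)^*\Omega_h$, together with $j_1^*(\rho_\R^o)^*\eta = (\hat{\rho}_2^1)^*\bar{\tau}^*\eta$. Both follow by telescoping three facts already available: Lemma \ref{lemma:RelationFormsLagrangianUnified} gives $\Omega_o = (\rho_1^o)^*\Omega_\Lag$; the relation $\Leg^*\Omega_h = \Omega_\Lag$ noted before Proposition \ref{prop:DiffeomorphismHamiltonianHyperregularCase}; and the commutativity $\hat{\rho}_2^1 = \Leg \circ \rho_1^1$ proved there. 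Indeed $j_1^*\Omega_o = (\rho_1^1)^*\Omega_\Lag = (\rho_1^1)^*\Leg^*\Omega_h = (\Leg \circ \rho_1^1)^*\Omega_h = (\hat{\rho}_2^1)^*\Omega_h$, and the same chain works for the volume form since $\bar{\tau} \circ \Leg = \bar{\pi}^{2k-1}$ and $\rho_\R^o = \bar{\pi}^{2k-1} \circ \rho_1^o$.

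For the forward implication I would invoke Lemma \ref{lemma:CorrepondenceXoXh} to produce the unique $X_h \in \vf(J^{k-1}\pi^*)$ which is $\hat{\rho}_2^o$-related to $X_o$ along $j_1$; its restriction $X_1 \in \vf(\W_1)$ is then $\hat{\rho}_2^1$-related to $X_h$. Because related vector fields interchange with pullback of contractions, $(\hat{\rho}_2^1)^*\inn(X_h)\Omega_h = \inn(X_1)(\hat{\rho}_2^1)^*\Omega_h = \inn(X_1) j_1^*\Omega_o = j_1^*\inn(X_o)\Omega_o = 0$, the last equality being $\restric{\inn(X_o)\Omega_o}{\W_1} = 0$ from \eqref{eqn:DynEqVectorFieldsSuppW1}. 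As $\hat{\rho}_2^1$ is a diffeomorphism, hence surjective, this yields $\inn(X_h)\Omega_h = 0$ on all of $J^{k-1}\pi^*$; the transversality equation $\inn(X_h)\bar{\tau}^*\eta = 1$ follows identically from the $\eta$-identity. Uniqueness of $X_h$ is precisely the uniqueness asserted in Lemma \ref{lemma:CorrepondenceXoXh}.

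For the converse, given $X_h$ solving \eqref{eqn:HamiltonianDynEqVectorFields}, I would use that $\hat{\rho}_2^1$ is a diffeomorphism to set $X_1 = ((\hat{\rho}_2^1)^{-1})_* X_h \in \vf(\W_1)$ and extend it, through $j_1$, to a vector field $X_o \in \vf_{\W_1}(\W_o)$ tangent to $\W_1$. Reversing the chain of pullback identities above shows $\restric{\inn(X_o)\Omega_o}{\W_1} = 0$ and $\restric{\inn(X_o)(\rho_\R^o)^*\eta}{\W_1} = 1$, so $X_o$ solves \eqref{eqn:DynEqVectorFieldsSuppW1}, and uniqueness again comes from the bijective correspondence of Lemma \ref{lemma:CorrepondenceXoXh}.

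The main obstacle, and really the only non-formal point, is that $\hat{\rho}_2^o$ is not itself a diffeomorphism: only its restriction $\hat{\rho}_2^1$ to $\W_1$ is. Consequently every form and vector-field identity must be set up on $\W_1$ (that is, pulled back by $j_1$) before being transported to $J^{k-1}\pi^*$ through the diffeomorphism, and one must verify that the $\hat{\rho}_2^o$-relatedness furnished by Lemma \ref{lemma:CorrepondenceXoXh} is exactly what licenses interchanging $\inn(\cdot)$ with $(\hat{\rho}_2^1)^*$ along $\W_1$. As a cross-check I would note that, in the hyperregular case, $\Leg = \hat{\rho}_2^1 \circ (\rho_1^1)^{-1}$ is a global diffeomorphism, so the statement can also be deduced from Theorem \ref{thm:Unified-LagrangianSolutions} by pushing the Lagrangian solution $X_\Lag$ forward along $\Leg$ and using $\Leg^*\Omega_h = \Omega_\Lag$.
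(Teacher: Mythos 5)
Your proposal is correct and follows essentially the same route as the paper, whose proof is simply stated to be analogous to the first part of the proof of Theorem \ref{thm:Unified-LagrangianSolutions}, with Lemma \ref{lemma:CorrepondenceXoXh} supplying the vector field $X_h$. Your explicit verification of the pullback identity $j_1^*\Omega_o = (\hat{\rho}_2^1)^*\Omega_h$ via $\hat{\rho}_2^1 = \Leg\circ\rho_1^1$ and $\Leg^*\Omega_h = \Omega_\Lag$, and your care in working on $\W_1$ (where $\hat{\rho}_2^1$, unlike $\hat{\rho}_2^o$, is a diffeomorphism), fill in exactly the details the paper leaves implicit.
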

\begin{proof}
The proof of this result is analogous to the first part
of the proof given for Theorem \ref{thm:Unified-LagrangianSolutions},
Lemma \ref{lemma:CorrepondenceXoXh} now being used to obtain the
vector field $X_h \in \vf(J^{k-1}\pi^*)$.
\end{proof}

In local coordinates, if the vector field $X_o \in \vf_{\W_1}(\W_o)$
solution to equations \eqref{eqn:DynEqVectorFieldsSuppW1}
is given by \eqref{eqn:LocalCoordSolutionVectorFieldRegularLagrangian},
by using Lemma \ref{lemma:CorrepondenceXoXh} we obtain the local
expression for the vector field $X_h$, which is
$$
X_h = \derpar{}{t} + q_{i+1}^A\derpar{}{q_i^A} + \derpar{L}{q_0^A}\derpar{}{p_A^0}
+ d_T(p_A^i)\derpar{}{p_A^i} \, .
$$

Finally, to close the hyperregular case,
as a consequence of Theorem \ref{thm:EquivalenceTheoremUnified}
and the results stated in this Section, we obtain the following result:

\begin{teor}
\label{thm:EquivalenceTheoremHamiltonianRegular}
The following assertions on a section $\psi_h \in \Gamma(\bar{\tau})$ are equivalent:
\begin{enumerate}
\item $\psi_h$ is a solution to equation \eqref{eqn:HamiltonianDynEqSections}, that is,
$$\psi_h^*\inn(Y)\Omega_h = 0, \quad \mbox{for every }Y \in \vf(J^{k-1}\pi^*) \, .$$
\item In natural coordinates, if $\psi_h$ is given by $\psi_h(t) = (t,q_i^A(t),p_A^i(t))$,
$0 \leqslant i \leqslant k-1$, then the components
of $\psi_h$ satisfy the $k$th order Hamilton equations given by \eqref{eqn:HamiltonEquations}, that is,
$$\dot{q}_i^A = \restric{\derpar{H}{p_A^i}}{\psi_h} \quad ; \quad
\dot{p}_A^i = - \restric{\derpar{H}{q_i^A}}{\psi_h} \, .$$
\item $\psi_h$ is a solution to the equation
$$ \inn(\psi^\prime_h)(\Omega_h \circ \psi_h) = 0 \, ,$$
where $\psi^\prime_h \colon \R \to \Tan(J^{k-1}\pi^*)$
is the canonical lifting of $\psi_h$ to the tangent bundle.
\item $\psi_h$ is an integral curve of a vector field contained
in a class of $\bar{\tau}$-transverse vector fields,
$\left\{ X_h \right\} \subset \vf(J^{k-1}\pi^*)$,
satisfying the first equation in \eqref{eqn:HamiltonianDynEqVectorFields}, that is,
$$ \inn(X_h)\Omega_h = 0 \, . $$
\end{enumerate}
\end{teor}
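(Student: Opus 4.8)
The plan is to mirror the proof of Theorem \ref{thm:EquivalenceTheoremUnified}, establishing the equivalences by direct computation in natural coordinates using the local expression $\Omega_h = \d q_i^A \wedge \d p_A^i + \d H \wedge \d t$. Conceptually, the backbone is that in the hyperregular case $\hat{\rho}_2^1 \colon \W_1 \to J^{k-1}\pi^*$ is a diffeomorphism (Proposition \ref{prop:DiffeomorphismHamiltonianHyperregularCase}), so every statement about a section $\psi_h \in \Gamma(\bar{\tau})$ corresponds to a statement about a section $\psi_o$ of the unified problem taking values in $\W_1$, while Proposition \ref{prop:HamiltonianSectionsRegular} and Theorem \ref{thm:Unified-HamiltonianSolutions} supply the correspondence between the section and vector-field pictures. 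Since the coordinate computations are short, however, I would carry them out explicitly rather than relying only on this transfer.

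For $(1\Leftrightarrow 2)$, first I would contract a generic $Y = f\,\partial/\partial t + f_i^A\,\partial/\partial q_i^A + G_A^i\,\partial/\partial p_A^i \in \vf(J^{k-1}\pi^*)$ with $\Omega_h$, obtaining a $1$-form whose coefficients of $\d p_A^i$, $\d q_i^A$ and $\d t$ involve $f_i^A - f\,\partial H/\partial p_A^i$, $-G_A^i - f\,\partial H/\partial q_i^A$, and a linear combination of the first two. Pulling back by $\psi_h = (t,q_i^A(t),p_A^i(t))$ replaces $\d q_i^A, \d p_A^i$ by $\dot{q}_i^A\,\d t, \dot{p}_A^i\,\d t$; requiring the result to vanish for every $f, f_i^A, G_A^i$ then yields exactly the Hamilton equations \eqref{eqn:HamiltonEquations}, together with the $\d t$-coefficient equation, which is redundant because it is a combination of the others.

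For $(2\Leftrightarrow 3)$ I would repeat the same contraction but with the canonical lift $\psi_h'(t) = (1,\dot{q}_i^A(t),\dot{p}_A^i(t))$ in place of the test field, so that $\inn(\psi_h')(\Omega_h \circ \psi_h)$ has coefficient $\dot{q}_i^A - \partial H/\partial p_A^i$ in $\d p_A^i$ and $-\dot{p}_A^i - \partial H/\partial q_i^A$ in $\d q_i^A$; vanishing of these recovers \eqref{eqn:HamiltonEquations}, the $\d t$-coefficient again being a consequence. For $(2\Leftrightarrow 4)$ I would impose $\inn(X_h)\Omega_h = 0$ on a generic $X_h$, which forces $f_i^A = f\,\partial H/\partial p_A^i$ and $G_A^i = -f\,\partial H/\partial q_i^A$; the $\bar{\tau}$-transversality condition guarantees $f \neq 0$, and choosing the representative $f = 1$ of the class makes the integral-curve equations $\dot{q}_i^A = f_i^A\circ\sigma$, $\dot{p}_A^i = G_A^i\circ\sigma$ coincide with the Hamilton equations.

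The individual computations are routine; the point I would be most careful about is the treatment of statement $4$. Here, in contrast with the unified and Lagrangian equivalence theorems, no holonomy or semispray-of-type-$1$ condition enters: the momenta $p_A^i$ are independent fibre coordinates on $J^{k-1}\pi^*$, so the first equation of \eqref{eqn:HamiltonianDynEqVectorFields} together with $\bar{\tau}$-transversality already pins down every component of $X_h$ up to the overall factor $f$, as recorded after Theorem \ref{thm:Unified-HamiltonianSolutions}. I would also emphasize that hyperregularity makes $H \in \Cinfty(J^{k-1}\pi^*)$ globally defined, so no constraint submanifold or compatibility algorithm is needed and the $\d t$-coefficient equations remain genuinely redundant throughout, which is what keeps all four statements strictly equivalent.
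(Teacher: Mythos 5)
Your proposal is correct and is essentially the paper's own argument: the paper states this theorem without an explicit proof, presenting it as a direct consequence of Theorem \ref{thm:EquivalenceTheoremUnified} together with the correspondence results of Section \ref{subsection:HamiltonianRegularCase} (Propositions \ref{prop:DiffeomorphismHamiltonianHyperregularCase}, \ref{prop:HamiltonianSectionsRegular} and Theorem \ref{thm:Unified-HamiltonianSolutions}), which is exactly the transfer you identify as the backbone. Your explicit coordinate computations with $\Omega_h = \d q_i^A \wedge \d p_A^i + \d H \wedge \d t$ simply make that implicit proof concrete, and your observation that no holonomy or semispray-of-type-$1$ condition is needed in item $4$ is accurate and consistent with the statement.
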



\subsection{Singular (almost-regular) Lagrangians.
Dynamical equations for sections and vector fields}
\label{subsection:HamiltonianSingularCase}

Recall that, for almost-regular Lagrangians, only in the most
favourable cases can we assure the existence of some submanifold
$\W_f \hookrightarrow \W_1$ where the dynamical equations
can be solved.
In this case, the solutions to the Hamiltonian formalism cannot be obtained
straightforwardly from the solutions in the unified formalism, but rather by passing
through the Lagrangian formalism and using the Legendre-Ostrogradsky map.

In this case, the phase space of the system is
$\P = {\rm Im}(\Leg) \hookrightarrow J^{k-1}\pi^*$.
We denote by $\Leg_o \colon J^{2k-1}\pi \to \P$ the map defined
by $\Leg = \jmath \circ \Leg_o$.
As in the hyperregular
case, the Hamiltonian section $h$ is used to construct the Hamilton-Cartan forms
on $\P$ as follows:
$$
\Theta_h^o = h^*\Theta_{k-1} \in \df^{1}(\P) \quad , \quad
\Omega_h^o = h^*\Omega_{k-1} \in \df^{2}(\P) \, .
$$
They verify that $\Leg_o^*\Theta_h^o = \Theta_\Lag$ and $\Leg_o^*\Omega_h^o = \Omega_\Lag$.


\begin{prop}
\label{prop:HamiltonianSectionsSingular}
Let $\Lag \in \df^{1}(J^{k}\pi)$ be an almost-regular Lagrangian.
Let $\psi_o \in \Gamma(\rho_\R^o)$ be a section solution to equation
\eqref{eqn:DynEquationSections}. Then, the section
$\psi_h^o = \Leg_o \circ \psi_\Lag \in \Gamma(\bar{\tau}_o)$ is a solution
to the equation
\begin{equation}
\label{eqn:HamiltonianSingularDynEqSections}
(\psi_h^o)^*\inn(Y)\Omega_h^o = 0,\quad \mbox{for every } Y \in \vf(\P) \, .
\end{equation}
\end{prop}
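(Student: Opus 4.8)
The plan is to imitate the argument used for the Lagrangian sections in Proposition \ref{prop:LagrangianSections} (and for the hyperregular Hamiltonian case in Proposition \ref{prop:HamiltonianSectionsRegular}), transporting the already-established Lagrangian equation along $\Leg_o$ by a single pullback computation. Two ingredients make this work. First, the almost-regularity hypothesis guarantees that the corestriction $\Leg_o \colon J^{2k-1}\pi \to \P$ of the restricted Legendre-Ostrogradsky map to its image is a surjective submersion (this is precisely condition~2 in the definition of almost-regular Lagrangian). Second, we have the identity $\Leg_o^*\Omega_h^o = \Omega_\Lag$ recorded just before the statement. Together these let us lift vector fields from $\P$ to $J^{2k-1}\pi$ and push forms the other way.

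First I would fix an arbitrary $Y \in \vf(\P)$. Since $\Leg_o$ is a submersion, there exists a vector field $Z \in \vf(J^{2k-1}\pi)$ that is $\Leg_o$-related to $Y$, i.e. $(\Leg_o)_* Z = Y$; such a $Z$ is not unique, but the computation below will be independent of the choice, since any two lifts differ by a field in $\ker(\Leg_o)_*$. Then, using $\psi_h^o = \Leg_o \circ \psi_\Lag$, naturality of the pullback, the $\Leg_o$-relatedness of $Z$ and $Y$, and the form identity, I would compute
\[
(\psi_h^o)^*\inn(Y)\Omega_h^o
= \psi_\Lag^*\bigl(\Leg_o^*\inn(Y)\Omega_h^o\bigr)
= \psi_\Lag^*\bigl(\inn(Z)\,\Leg_o^*\Omega_h^o\bigr)
= \psi_\Lag^*\inn(Z)\Omega_\Lag .
\]
By Proposition \ref{prop:LagrangianSections}, the section $\psi_\Lag = \rho_1^o \circ \psi_o$ satisfies $\psi_\Lag^*\inn(W)\Omega_\Lag = 0$ for \emph{every} $W \in \vf(J^{2k-1}\pi)$; specializing to $W = Z$ gives $\psi_\Lag^*\inn(Z)\Omega_\Lag = 0$, hence $(\psi_h^o)^*\inn(Y)\Omega_h^o = 0$. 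As $Y \in \vf(\P)$ was arbitrary, the claim follows. I would also note that $\psi_h^o$ really is a section of $\bar\tau_o$, which is immediate from $\pi_{J^{k-1}\pi}^r \circ \Leg = \pi^{2k-1}_{k-1}$ together with $\psi_\Lag \in \Gamma(\bar\pi^{2k-1})$.

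The only genuine obstacle is the existence of the $\Leg_o$-related lift $Z$: a general smooth map does not allow one to push forward vector fields, so this step rests entirely on $\Leg_o$ being a submersion onto $\P$, which is exactly what almost-regularity provides (in the hyperregular case it degenerates into the diffeomorphism used in Proposition \ref{prop:HamiltonianSectionsRegular}). Everything else is a formal pullback manipulation. One small point to address cleanly is that the statement of Proposition \ref{prop:LagrangianSections} carries a holonomy hypothesis on $\psi_o$; however, its holonomy assumption is used there only to conclude that $\psi_\Lag$ is holonomic, whereas the dynamical conclusion $\psi_\Lag^*\inn(W)\Omega_\Lag = 0$ is obtained purely by a $\rho_1^o$-related lift through $\W_o$ (via $\Omega_o = (\rho_1^o)^*\Omega_\Lag$ from Lemma \ref{lemma:RelationFormsLagrangianUnified}) and so holds for any $\psi_o$ solving \eqref{eqn:DynEquationSections}. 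Thus no new constraint or compatibility analysis is needed here: the passage through the Lagrangian formalism has already absorbed it.
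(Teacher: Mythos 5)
Your proposal is correct and follows essentially the same route as the paper's own proof: use almost-regularity to get that $\Leg_o$ is a submersion onto $\P$, lift $Y$ to a $\Leg_o$-related $Z$ on $J^{2k-1}\pi$, pull back through $\Leg_o^*\Omega_h^o = \Omega_\Lag$, and invoke Proposition \ref{prop:LagrangianSections}. Your additional remarks (independence of the choice of lift, why $\psi_h^o$ is a section of $\bar\tau_o$, and the status of the holonomy hypothesis) are accurate refinements the paper leaves implicit.
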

\begin{proof}
Since the Lagrangian density is almost-regular, the
map $\Leg_o$ is a submersion onto its image, $\mathcal{P}$.
Hence, for every $Y \in \vf(\mathcal{P})$ there exist some $Z \in \vf(J^{2k-1}\pi)$
such that $Z$ is $\Leg_o$-related with $Y$, that is, ${\Leg_o}_*Z = Y$.
Using this, we have
$$
(\psi_h^o)^*\inn(Y)\Omega_h^o = (\Leg_o \circ \psi_\Lag)^*\inn(Y)\Omega_h^o =
\psi_\Lag^*(\Leg^*\inn(Y)\Omega_h^o) = \psi_\Lag^*\inn(Z)\Leg_o^*\Omega_h^o =
\psi_\Lag^*\inn(Z)\Omega_\Lag \, .
$$
Then, using Proposition \ref{prop:LagrangianSections}, we have proved
$$
(\psi_h^o)^*\inn(Y)\Omega_h^o = \psi_\Lag^*\inn(Z)\Omega_\Lag = 0 \, .
$$
\end{proof}

The diagram for this situation is the following:
$$
\xymatrix{
\ & \ & \W_o \ar[ddd]_<(0.6){\rho_\R^o} \ar[drr]^{\hat{\rho}_2^o} \ar[dll]_{\rho_1^o} & \ & \ \\
J^{2k-1}\pi \ar[ddrr]^{\bar{\pi}^{2k-1}} \ar[rrrr]^<(0.65){\Leg}|(.492){\hole}|(.55){\hole} \ar[drrrr]_<(0.7){\Leg_o}|(.492){\hole}|(.56){\hole} & \ & \ & \ & J^{k-1}\pi^* \\
\ & \ & \ & \ & \mathcal{P} \ar@{^{(}->}[u]^{\jmath} \ar[dll]_{\bar{\tau}_o} \\
\ & \ & \R \ar@/_1pc/[uuu]_<(0.535){\psi_o} \ar@/^1pc/[uull]^{\psi_\Lag} \ar@/_1pc/@{-->}[urr]_-{\psi^o_h = \Leg_o \circ \psi_\Lag} & \ & \ \\
}
$$


Now, assume that there exists a submanifold $\W_f \hookrightarrow \W_1$
and vector fields $X_o \in \vf_{\W_1}(\W_o)$ tangent to $\W_f$
which are solutions to equations \eqref{eqn:DynEqVectorFieldsSuppWf}.
Now consider the submanifolds
$S_f = \rho_1^1(\W_f) \hookrightarrow J^{2k-1}\pi$ and
$P_f = \hat{\rho}_2^1(\W_f) = \Leg(S_f) \hookrightarrow \P \hookrightarrow J^{k-1}\pi^*$.
Using Theorem \ref{thm:Unified-LagrangianSolutions},
from the vector fields $X_o \in \vf_{\W_1}(\W_o)$ we obtain the
corresponding vector fields $X_\Lag \in \vf(J^{2k-1}\pi)$,
and from these, the semisprays of type $1$ (if they exist), which
are perhaps defined on a submanifold $M_f \hookrightarrow S_f$,
are tangent to $M_f$ and are solutions to equations
\eqref{eqn:LagrangianDynEqVectorFieldsSupportMf}. So we have the diagram

$$
\xymatrix{
\ & \ & \ & \W_o \ar@/_1.3pc/[ddll]_{\rho_1^o} \ar@/^1.3pc/[ddrr]^{\hat{\rho}_2^o} & \ & \ \\
\ & \ & \ & \W_1 \ar[dll]^{\rho_1^1} \ar[drr]_{\hat{\rho}_2^1} \ar@{^{(}->}[u]^{j_1} & \ & \ \\
\ & J^{2k-1}\pi \ar[rrrr]^<(0.35){\Leg}|(.493){\hole} \ar[drrrr]_<(0.35){\Leg_o}|(.495){\hole} & \ & \ & \ & J^{k-1}\pi^* \\
\ & \ & \ & \ & \ & \mathcal{P} \ar@{^{(}->}[u]^-{\jmath} \\
\ & \ & \ & \W_f \ar@{^{(}->}[uuu] \ar[dll] \ar[drr] & \ & \ \\
M_f \ar@{^{(}->}[r] & S_f \ar@{^{(}->}[uuu] & \ & \ & \ & P_f \ar@{^{(}->}[uu] \\
}
$$

Now, following analogous procedures for autonomous
and non-autonomous systems 
\cite{art:deLeon_Marin_Marrero_Munoz_Roman02,art:Gracia_Pons_Roman92},
one can prove that there
are semisprays of type $1$ in $M_f$ (perhaps only on the points of another submanifold
$\bar{M}_f \hookrightarrow M_f$), which are $\Leg$-projectable on $P_f$. These vector
fields $X_h^o = \Leg_*X_\Lag \in \vf(\P)$ are tangent to $P_f$
and are solutions to equations
\begin{equation}
\label{eqn:HamiltonianDynEqVectorFieldsSupportPf}
\restric{\inn(X_h^o)\Omega_h^o}{P_f} = 0 \quad , \quad
\restric{\inn(X_h^o)\bar{\tau}_o^*\eta}{P_f} = 1
\end{equation}

Conversely, as $\Leg_o$ is a submersion, for every vector field
$X_h^o \in \vf(\P)$ solution to equations
\eqref{eqn:HamiltonianDynEqVectorFieldsSupportPf}, there is a
semispray of type $1$, $X_\Lag \in \vf(J^{2k-1}\pi)$, such that
${\Leg_o}_*X_\Lag = X_{h}^o$, and we can recover solutions to equations
\eqref{eqn:DynEqVectorFieldsSuppWf} using Theorem \ref{thm:Unified-LagrangianSolutions}.

Of course, for the almost-regular case, we have a similar result
to Theorem \ref{thm:EquivalenceTheoremHamiltonianRegular},
on the points of the final constraint submanifold $P_f$.


\section{Examples}
\label{section:Examples}


\subsection{The shape of a deformed elastic cylindrical beam with fixed ends}
\label{subsection:CylindricalBeam}

As a first example we consider a deformed elastic cylindrical beam with both
ends fixed. The problem is to determinate its shape;
that is, the width of every section transversal to the axis.
This system has been studied on many occasions,
such as \cite{book:Benson06} (Chapter 3, \S 3.9)
and \cite{book:Elsgoltz83} (Chapter IV, \S 4).
Strictly speaking, it is not
a time-dependent mechanical system, but it can be modeled
using a configuration bundle over a compact subset of $\R$,
where the base coordinate
represents every transversal section of the beam, thus allowing us
to show an application of our formalism. For simplicity,
instead of a compact subset, we take
the whole real line as the base manifold.


The configuration bundle for this system is $\pi \colon E \to \R$,
where $E$ is a $2$-dimensional smooth manifold.
Let $x$ be the global coordinate in $\R$,
and $\eta \in \df^{1}(\R)$ the volume form in $\R$ with local expression
$\eta = \d x$.
Natural coordinates in $E$ adapted to the bundle structure are $(x,q_0)$.
Now, taking natural coordinates in the higher-order jet bundle of $\pi$,
the second-order Lagrangian density for this system,
$\Lag \in \df^{1}(J^{2}\pi)$, is locally given by
$$
\Lag(x,q_0,q_1,q_2) = L\cdot(\bar{\pi}^{2})^*\eta = 
\left( \frac{1}{2}\mu(x) q_2^2 + \rho(x) q_0 \right) \d x \, ,
$$
where $\mu, \rho \in \Cinfty(J^{2}\pi)$ are functions that only depend on
the coordinate $x$ and represent physical parameters of the beam:
$\rho$ is the linear density and $\mu$ is a non-vanishing function involving
Young's modulus of the material, the radius of curvature
and the sectional moment of the cross-section considered
(see \cite{book:Benson06} for a detailed description).
This is 
a regular Lagrangian density, since the Hessian matrix of the 
Lagrangian function $L \in \Cinfty(J^{2}\pi)$ associated with $\Lag$
with respect to $q_2$ is
$$
\left( \derpars{L}{q_2}{q_2} \right) = \mu(x) \, ,
$$
and this $1 \times 1$ matrix has maximum rank, since $\mu$ is 
a non-vanishing function.

\textbf{Remark}:
If the beam is homogeneous,
$\mu$ and $\rho$ are constants (with $\mu\not=0$),
and thus the Lagrangian density is ``autonomous'', that is, it does
not depend explicitly on the coordinate of the base manifold.
This case is analyzed in \cite{book:Elsgoltz83}.

As this is a second-order system, we consider the bundles
$\W = J^{3}\pi \times_{J^{1}\pi} \Tan^*(J^{1}\pi)$
and $\W_r = J^{3}\pi \times_{J^{1}\pi} J^{1}\pi^*$,
with natural coordinates $(x,q_0,q_1,q_2,q_3,p,p^0,p^1)$
and $(x,q_0,q_1,q_2,q_3,p^0,p^1)$, respectively.
Now, using the notation and terminology introduced throughout this article,
if $\Theta_{1} \in \df^{1}(\Tan^*(J^{1}\pi))$ and
$\Omega_{1} \in \df^{2}(\Tan^*(J^{1}\pi))$ are
the canonical forms of $\Tan^*(J^{1}\pi)$,
we define the forms $\Theta = \rho_2^*\Theta_{1} \in \df^{1}(\W)$
and $\Omega = \rho_2^*\Omega_{1} \in \df^{2}(\W)$,
whose local expressions are
$$
\Theta = p^0\d q_0 + p^1\d q_1 + p\d x \quad ; \quad
\Omega = \d q_0 \wedge \d p^0 + \d q_1 \wedge \d p^1 - \d p \wedge \d x \, .
$$
The coupling $1$-form $\hat{\C} \in \df^{1}(\W)$ has the local expression
$$
\hat{\C} = \hat{C}\cdot\rho_\R^*\eta = (p + p^0q_1 + p^1q_2)\d x \ ,
$$
and then we can introduce the Hamiltonian submanifold
$$
\W_o = \left\{ w \in \W \colon \hat{\Lag}(w) = \hat{\C}(w) \right\}
\stackrel{j_o}{\hookrightarrow} \W \, ,
$$
which is locally defined by the constraint function
$\hat{C} - \hat{L} = 0$, whose coordinate expression is
$$
\hat{C}-\hat{L}=p+p^0q_1+p^1q_2-\frac{1}{2}\mu(x)q_2^2-\rho(x)q_0 = 0 \, .
$$
Finally, we construct the
Hamiltonian $\mu_\W$-section $\hat{h} \in \Gamma(\mu_\W)$, which is specified
by giving the local Hamiltonian function $\hat{H}$, whose local expression is
$$
\hat{H}(x,q_0,q_1,q_2,q_3,p^0,p^1) = 
p^0q_1 + p^1q_2 - \frac{1}{2}\mu(x) q_2^2 - \rho(x) q_0 \, ;
$$
that is, we have 
$\hat{h}(x,q_0,q_1,q_2,q_3,p^0,p^1) = (x,q_0,q_1,q_2,q_3,-\hat{H},p^0,p^1)$.
Using this Hamiltonian section, we define the forms
$\Theta_o = j_o^*\Theta \in \df^{1}(\W_o)$ and
$\Omega_o = j_o^*\Omega \in \df^{2}(\W_o)$, with local expressions
\begin{align*}
& \Theta_o = p^0\d q_0 + p^1 \d q_1 + \left(\frac{1}{2}\mu(x) q_2^2 + \rho(x) q_0 - p^0q_1 - p^1q_2 \right)\d x \, , \\
&\Omega_o = \d q_0 \wedge \d p^0 + \d q_1 \wedge \d p^1 + \left( -\rho(x)\d q_0 + p^0 \d q_1 + (p^1 - \mu(x)q_2)\d q_2 + q_1 \d p^0 + q_2 \d p^1\right) \wedge \d x \, .
\end{align*}


In order to state the Lagrangian-Hamiltonian problem for sections
in this system, let $Y \in \vf(\W_o)$ be a generic vector field
locally given by
$$
Y = f\derpar{}{x} + f_0\derpar{}{q_0} + f_1\derpar{}{q_1} + 
F_2\derpar{}{q_2} + F_3\derpar{}{q_3} + G^0\derpar{}{p^0} + 
G^1\derpar{}{p^1} \, .
$$
Now, if $\psi_o(x) = (x,q_0(x),q_1(x),q_2(x),q_3(x),p^0(x),p^1(x))$ is a
holonomic section of the projection $\rho_\R^o$,
equation \eqref{eqn:DynEquationSections} leads to 
the following $5$ equations
(the redundant equation \eqref{eqn:LocalCoordRedundantEqSections} is omitted):
\begin{align}
\label{eqn:Example1_Holonomy}
&\dot{q}_0 = q_1 \quad ; \quad \dot{q}_1 = q_2 \\
\label{eqn:Example1_DiffEquations}
&\dot{p}^0 = \rho(x) \quad ; \quad \dot{p}^1 = -p^0 \\
\label{eqn:Example1_LastMomentumCoord1}
&p^1 = q_2\mu(x)
\end{align}
Equations \eqref{eqn:Example1_Holonomy}
give us the condition of holonomy of type $2$ for the section, which are also
redundant since we assume that $\psi_o$ is holonomic.
Equation \eqref{eqn:Example1_LastMomentumCoord1} is a pointwise algebraic
condition, from which we know that the section $\psi_o$ must lie in a
submanifold $\W_1$ that can be identified with the graph of the
extended Legendre-Ostrogradsky map, $\widetilde{\Leg}$.

Now we compute the local expression of the map
$\widetilde{\Leg} \colon J^{3}\pi \to \Tan^*(J^{1}\pi)$.
From Corollary \ref{corol:W1GraphExtendedFLSections} we know the general
expression for this map, and we obtain:
\begin{equation}
\label{eqn:Example1_ExtendedLegendreMap}
\widetilde{\Leg}^*p^0 = -q_2 \derpar{\mu}{x} - q_3\mu \quad ; \quad \widetilde{\Leg}^*p^1 = q_2\mu \quad ; \quad
\widetilde{\Leg}^*p = -\frac{1}{2}\mu q_2^2 + q_1q_2 \derpar{\mu}{x} + q_1q_3\mu + q_0\rho \, .
\end{equation}

Therefore, the section $\psi_o \in \Gamma(\rho_\R^o)$ is a holonomic section of the
projection $\rho_\R^o$, which lies in the submanifold $\W_1 \hookrightarrow \W_o$
defined by the above constraint functions, and whose last components 
satisfy the differential equations
$$
\dot{p}^0 = \rho(x) \quad ; \quad \dot{p}^1 = -p^0 \, .
$$



Now we state the Lagrangian-Hamiltonian problem for vector fields:
we wish to find $X_o \in \vf(\W_o)$ solution to
\eqref{eqn:DynEquationVectorFields}. If $X_o$
is locally given by
$$
X_o = f\derpar{}{x} + f_0\derpar{}{q_0} + f_1\derpar{}{q_1} + 
F_2\derpar{}{q_2} + F_3\derpar{}{q_3} + G^0\derpar{}{p^0} + 
G^1\derpar{}{p^1} \, ,
$$
then equations \eqref{eqn:DynEquationVectorFields} lead to the following
(again, the redundant equation \eqref{eqn:LocalCoordRedundantEqVectorFields}
is omitted):
\begin{align}
\label{eqn:Example1_Semispray2}
&f_0 = f\cdot q_1 \quad ; \quad f_1 = f\cdot q_2 \\
\label{eqn:Example1_DynamicalEquations}
&G^0 = f\cdot\rho(x) \quad ; \quad G^1 = -f\cdot p^0 \\
\label{eqn:Example1_FixingGauge}
&f = 1 \\
\label{eqn:Example1_LastMomentumCoord2}
&f\cdot \left(p^1 - q_2\mu(x)\right) = 0
\end{align}
Equations \eqref{eqn:Example1_Semispray2}
give us the condition of semispray of type $2$ in $\W_o$ for $X_o$.
In addition, equation \eqref{eqn:Example1_LastMomentumCoord2} is an algebraic relation
from which we obtain, in coordinates, the result stated in Propositions \ref{prop:ExistSolDynEq}
and \ref{prop:W1GraphFLVectorFields}, that is, the vector field $X_o$ is defined along a
submanifold $\W_1$ which we identify with the graph of the extended 
Legendre-Ostrogradsky map and is defined by
$$
\W_1 = \left\{ w \in \W_o \colon \xi_0(w) = \xi_1(w) = 0 \right\} \, ,
$$
where $\xi_r = p^r - \widetilde{\Leg}^*p^r$, $r = 1,2$.
Thus, using \eqref{eqn:Example1_Semispray2},
\eqref{eqn:Example1_DynamicalEquations}
and \eqref{eqn:Example1_FixingGauge}, $X_o$ is given locally by
\begin{equation}
\label{eqn:Example1_VectorFieldBeforeTangency}
X_o = \derpar{}{x} + q_1\derpar{}{q_0} + q_2\derpar{}{q_1} + 
F_2\derpar{}{q_2} + F_3\derpar{}{q_3} + \rho\derpar{}{p^0}
- p^0\derpar{}{p^1} \, .
\end{equation}
Notice that the functions $F_2$ and $F_3$ in \eqref{eqn:Example1_VectorFieldBeforeTangency}
are not determined until the tangency of $X_o$ on $\W_1$ is required.
This condition is locally equivalent to checking if the following identities hold
$$
\restric{\Lie(X_o)\xi_0}{\W_1} = 0 \quad , \quad
\restric{\Lie(X_o)\xi_1}{\W_1} = 0 \, .
$$
As we have seen in Section \ref{subsubsection:DynEqVectFields}, these
equations lead to the Lagrangian equations for the vector field $X_o$;
that is, on the points of $\W_o$ we obtain
\begin{align}
\label{eqn:Example1_EulerLagrangeVectFieldEq}
&\Lie(X_o)\xi_0 = \rho + q_2 \derpars{\mu}{x}{x} + q_3\derpar{\mu}{x} + F_2 \derpar{\mu}{x} + F_3\mu = 0\\
\label{eqn:Example1_Semispray1}
&\Lie(X_o)\xi_1 = (q_3 - F_2)\mu = 0 \ .
\end{align}
Equation \eqref{eqn:Example1_Semispray1} gives us the condition of semispray
of type $1$ for the vector field $X_o$ (recall that $\mu$ is non-vanishing),
and equation \eqref{eqn:Example1_EulerLagrangeVectFieldEq} is the
Euler-Lagrange equation for $X_o$. Observe that, since $\mu$ is a
non-vanishing function, these equations have a unique solution for
$F_2$ and $F_3$. Hence, there is a unique vector field $X_o \in \vf(\W_o)$
solution to the equations $\restric{\inn(X_o)\Omega_o}{\W_1} = 0$
and $\restric{\inn(X_o)(\rho_\R^o)^*\eta}{\W_1} = 1$, which is tangent
to the submanifold $\W_1 \hookrightarrow \W_o$, and is given locally by
$$
X_o = \derpar{}{x} + q_1\derpar{}{q_0} + q_2\derpar{}{q_1} + q_3\derpar{}{q_2}
- \frac{1}{\mu}\left( \rho + q_2 \derpars{\mu}{x}{x} + 2q_3\derpar{\mu}{x} \right) \derpar{}{q_3}
+ \rho\derpar{}{p^0} - p^0\derpar{}{p^1} \, .
$$



Finally, we recover the Lagrangian and Hamiltonian solutions
for sections and vector fields. For the Lagrangian solutions,
by Proposition \ref{prop:LagrangianSections},
from the holonomic section
$\psi_o \in \Gamma(\rho_\R^o)$ we can recover a holonomic section
$\psi_\Lag = \rho_1^o \circ \psi_o \in \Gamma(\bar{\pi}^3)$ solution to equation
\eqref{eqn:LagrangianDynEqSections}. In particular, if
$\psi_o(x) = (x,q_0(x),q_1(x),q_2(x),q_3(x),p^0(x),p^1(x))$, then
$\psi_\Lag(x) = (x,q_0(x),q_1(x),q_2(x),q_3(x))$ is a holonomic section
solution to equations \eqref{eqn:Example1_DiffEquations}, which,
bearing in mind the local expression \eqref{eqn:Example1_ExtendedLegendreMap}
of the extended Legendre-Ostrogradsky map, can be written locally as
\begin{align}
\label{eqn:Example1_EulerLagrangeSections}
&\rho + \dot{q}_2\derpar{\mu}{x} + q_2\derpars{\mu}{x}{x} + \dot{q}_3\mu + q_3\derpar{\mu}{x} = 0 \\
\label{eqn:Example1_FullHolonomySections}
&(\dot{q}_2 - q_3)\mu = 0
\end{align}
Equation \eqref{eqn:Example1_FullHolonomySections} gives the condition
for the section $\psi_\Lag$ to be holonomic, and it is redundant since
we required this condition to be fulfilled at the beginning. Now,
if $\phi(x) = (x,y(x))$ is a section of $\pi$
such that $j^{3}\phi = \psi_\Lag$, then the Euler-Lagrange equation
can be written locally
$$
\frac{\d^2}{\d x^2}(\mu\ddot{y}) + \rho = 0 \ .
$$
In the case of an homogeneous beam, the Euler-Lagrange equation
reduces to $\mu y^{(iv)} + \rho = 0$.

For the Lagrangian vector field,
from Lemma \ref{lemma:CorrepondenceXoXL} and Theorem
\ref{thm:Unified-LagrangianSolutions}, we can recover,
from the semispray of type $1$ $X_o \in \vf(\W_o)$
a semispray of type $1$, $X_\Lag \in \vf(J^{3}\pi)$,
which is a solution to equations \eqref{eqn:LagrangianDynEqVectorFields},
and is locally given by
$$
X_\Lag = \derpar{}{x} + q_1\derpar{}{q_0} + q_2\derpar{}{q_1} + q_3\derpar{}{q_2}
- \frac{1}{\mu}\left( \rho + q_2 \derpars{\mu}{x}{x} + 2q_3\derpar{\mu}{x} \right) \derpar{}{q_3} \, .
$$

Now, as $\Lag$ is a regular Lagrangian
density, for the Hamiltonian solutions
we can use the results stated in Section
\ref{subsection:HamiltonianRegularCase} and recover the Hamiltonian
solutions directly from the unified formalism. For the Hamiltonian
sections, using Proposition \ref{prop:HamiltonianSectionsRegular},
from a section $\psi_o \in \Gamma(\rho_\R^o)$ fulfilling equation
\eqref{eqn:DynEquationSections} we can recover a section
$\psi_h = \rho_2^o \circ \psi_o \in \Gamma(\bar{\tau})$ solution
to equation \eqref{eqn:HamiltonianDynEqSections}.
In particular, if
$\psi_o(x) = (x,q_0(x),q_1(x),q_2(x),q_3(x),p^0(x),p^1(x))$,
then $\psi_h(x) = (x,q_0(x),q_1(x),p^0(x),p^1(x))$ is
a section solution to equations \eqref{eqn:Example1_Holonomy}
and \eqref{eqn:Example1_DiffEquations}, which can be written
locally as
$$
\dot{q}_0 = \restric{\derpar{H}{p^0}}{\psi_h} \quad ; \quad
\dot{q}_1 = \restric{\derpar{H}{p^1}}{\psi_h} \quad ; \quad
\dot{p}^0 = -\restric{\derpar{H}{q_0}}{\psi_h} \quad ; \quad
\dot{p}^1 = -\restric{\derpar{H}{q_1}}{\psi_h} \, .
$$
where $H \in \Cinfty(J^{1}\pi^*)$ is the local Hamiltonian
function with local expression
$$
H = p^0q_1 + \frac{(p^1)^2}{2\mu} - \rho q_0 \, .
$$

For the Hamiltonian vector field, from Lemma \ref{lemma:CorrepondenceXoXh}
and Theorem \ref{thm:Unified-HamiltonianSolutions}, the vector field
$X_o \in \vf(\W_o)$ gives a vector field $X_h \in \vf(J^1\pi^*)$
solution to equations \eqref{eqn:HamiltonianDynEqVectorFields},
which is locally given by
$$
X_h = \derpar{}{x} + q_1\derpar{}{q_0} + \frac{p^1}{\mu}\derpar{}{q_1}
+ \rho\derpar{}{p^0} - p^0\derpar{}{p^1} \ .
$$


\subsection{The second-order relativistic particle subjected to a potential}

Consider a relativistic particle whose action is proportional to its extrinsic curvature
\cite{art:Plyushchay88,art:Pisarski86,art:Batlle_Gomis_Pons_Roman88,art:Nesterenko89,art:Prieto_Roman11}.
Now assume this system is subjected to the action of a generic potential depending on
the time and the position of the particle, thus obtaining a time-dependent
dynamical system.

The configuration bundle for this system is $E \stackrel{\pi}{\to} \R$,
where $E$ is a $(n+1)$-dimensional smooth manifold. Let $t$ be the
global coordinate in $\R$, and $\eta \in \df^{1}(\R)$ the volume form
in $\R$ with local expression $\eta = \d t$. Natural coordinates in $E$
adapted to the bundle structure are denoted by $(t,q_0^i)$, $1 \leqslant i \leqslant n$.
Now, bearing in mind the natural coordinates in the higher-order jet bundle of $\pi$,
the second-order Lagrangian density for this system, $\Lag \in \df^{1}(J^{2}\pi)$,
is locally given by
$$
\Lag(t,q_0^i,q_1^i,q_2^i) = \left( \frac{\alpha}{(q_1^i)^2} \left[ (q_1^i)^2(q_2^i)^2 - (q_1^iq_2^i)^2 \right]^{1/2} + V(t,q_0^i) \right)\d t \equiv \left(\frac{\alpha}{(q_1^i)^2} \sqrt{g} + V(t,q_0^i)\right) \d t  \ ,
$$
where $\alpha$ is some nonzero constant and $V \in \Cinfty(J^{2}\pi)$
is a function depending only on $t$ and $q_0^i$. This is a singular Lagrangian density, as we
can see by computing the Hessian matrix of the Lagrangian function
$L \in \Cinfty(J^{2}\pi)$ associated with $\Lag$ with respect to $q_2^A$,
which is
$$
\frac{\partial^2 L}{\partial q_2^B\partial q_2^A} = \begin{cases}
\displaystyle \frac{\alpha}{2(q_1^i)^2\sqrt{g^3}} \left[ \left((q_1^iq_2^i)^2 - 2(q_1^i)^2(q_2^i)^2 \right)q_1^Bq_1^A \right.   & 
 \\
\displaystyle \qquad\qquad\qquad\quad \left. + (q_1^i)^2(q_1^iq_2^i)(q_2^Bq_1^A-q_1^Bq_2^A) - (q_1^i)^2(q_2^i)^2q_2^Bq_2^A \right]
& \mbox{ if } B \neq A  
\\[10pt]
\displaystyle \frac{\alpha}{\sqrt{g^3}}\left[ g - 
(q_2^i)^2q_1^Aq_1^A + 2(q_1^iq_2^i)q_1^Aq_2^A - (q_1^i)^2q_2^Aq_2^A \right] & \mbox{ if } B = A \ ,
\end{cases}
$$
and a long calculation shows that
$\ds \det\left( \frac{\partial^2 L}{\partial q_2^B\partial q_2^A} \right) = 0$.

Consider the bundles
$\W = J^3\pi \times_{J^{1}\pi} \Tan^*(J^{1}\pi)$ and
$\W_r = J^{3}\pi \times_{J^{1}\pi} J^{1}\pi^*$, with natural
coordinates $(t,q_0^i,q_1^i,q_2^i,q_3^i,p,p_i^0,p_i^1)$ and
$(t,q_0^i,q_1^i,q_2^i,q_3^i,p_i^0,p_i^1)$, respectively.
Now, if $\Theta_1 \in \df^{1}(\Tan^*(J^{1}\pi))$ and
$\Omega_{1} \in \df^2(\Tan^*(J^{1}\pi))$ are the canonical forms
of the cotangent bundle of $J^{1}\pi$, we define
$$
\Theta = \rho_2^*\Theta_1=p_i^0 \d q_0^i + p_i^1 \d q_1^i + p\d t\in \df^{1}(\W)\quad ; \quad
\Omega = \rho_2^*\Omega_1=dq_0^i \wedge dp_i^0 + dq_1^i \wedge dp_i^1 - \d p \wedge \d t
\in \df^{2}(\W) \ .
$$
The coupling $1$-form $\hat{\C} \in \df^{1}(\W)$ has the local expression
$\hat{\C} = \hat{C}\cdot \rho_\R^*\eta = (p + p_i^0q_1^i + p_i^1q_2^i)\d t$,
and from this we can introduce the Hamiltonian submanifold
$\W_o\stackrel{j_o}{\hookrightarrow} \W$,
which is locally defined by the constraint function $\hat{C} - \hat{L} = 0$,
whose coordinate expression is
$$
\hat{C} - \hat{L} = p + p_i^0q_1^i + p_i^1q_2^i - \frac{\alpha}{(q_1^i)^2} \sqrt{g} - V(t,q_0^i) \, .
$$
This allows us to construct the Hamiltonian $\mu_\W$-section $\hat{h} \in \Gamma(\mu_\W)$,
which is specified by giving the local Hamiltonian function $\hat{H}$,
whose local expression is
$$
\hat{H}(t,q_0^i,q_1^i,q_2^i,q_3^i,p_i^0,p_i^1) = p_i^0q_1^i + p_i^1q_2^i - \frac{\alpha}{(q_1^i)^2} \sqrt{g} - V(t,q_0^i) \, ,
$$
that is, we have $\hat{h}(t,q_0^i,q_1^i,q_2^i,q_3^i,p_i^0,p_i^1) = (t,q_0^i,q_1^i,q_2^i,q_3^i,-\hat{H},p_i^0,p_i^1)$.
Using this Hamiltonian section, we define the forms $\Theta_o = j_o^*\Theta \in \df^{1}(\W_o)$
and $\Omega_o = j_o^*\Omega \in \df^{2}(\W_o)$, with local expressions
\begin{align*}
&\Theta_o = p_i^0 \d q_0^i + p_i^1 \d q_1^i
+ \left( \frac{\alpha}{(q_1^i)^2} \sqrt{g} + V(t,q_0^i) - p_i^0q_1^i - p_i^1q_2^i \right) \\
&\Omega_o = \d q_0^i \wedge \d p_i^0 + \d q_1^i \wedge \d p_i^1 + \left( q_1^A\d p_A^0 + q_2^A\d p_A^1 - \derpar{V}{q_0^i}\d q_0^i \right. \\
&\qquad \qquad + \left[ p^0_A + \frac{\alpha}{((q_1^i)^2)^2\sqrt{g}}\left[ \left((q_1^i)^2(q_2^i)^2 - 2(q_1^iq_2^i)^2\right)q_1^A + (q_1^iq_2^i)(q_1^i)^2q_2^A \right] \right]\d q_1^A \\
&\qquad \qquad + \left. \left[ p_A^1 - \frac{\alpha}{(q_1^i)^2\sqrt{g}}\left( (q^i_1)^2 q_2^A - (q_1^iq_2^i)q_1^A \right) \right]\d q_2^A \right) \wedge \d t \, .
\end{align*}

In order to state the Lagrangian-Hamiltonian problem for sections
for this second-order system, let $Y \in \vf(\W_o)$ be a generic vector
field locally given by
$$
Y = f\derpar{}{t} + f_0^A \derpar{}{q_0^A} + f_1^A \derpar{}{q_1^A} + F_2^A \derpar{}{q_2^A} + 
F_3^A \derpar{}{q_3^A} + G_A^0 \derpar{}{p_A^0} + G_A^1 \derpar{}{p_A^1} \ .
$$
Now, if $\psi_o(t) = (t,q_0^i(t),q_1^i(t),q_2^i(t),q_3^i(t),p_i^0(t),p_i^1(t))$
is a holonomic section of the projection $\rho_\R^o$, equation
\eqref{eqn:DynEquationSections} leads to the following $5n$ equations
(the redundant equation \eqref{eqn:LocalCoordRedundantEqSections} is omitted):
\begin{align}
\label{eqn:Example2_Holonomy}
&\dot{q}_0^A = q_1^A \quad , \quad \dot{q}_1^A = q_2^A \\
\label{eqn:Example2_DiffEquations}
&\dot{p}_A^0 = \derpar{V}{q_0^A} \quad , \quad \dot{p}_A^1 = - p^0_A - 
\frac{\alpha}{((q_1^i)^2)^2\sqrt{g}}\left[ \left((q_1^i)^2(q_2^i)^2 - 2(q_1^iq_2^i)^2\right)q_1^A + 
(q_1^iq_2^i)(q_1^i)^2q_2^A \right] \\
\label{eqn:Example2_LastMomentumCoord1}
&p_A^1 = \frac{\alpha}{(q_1^i)^2\sqrt{g}}\left( (q^i_1)^2 q_2^A - (q_1^iq_2^i)q_1^A \right)
\end{align}
Equations \eqref{eqn:Example2_Holonomy} give the condition of holonomy
of type $2$ for the section $\psi_o$, which are also redundant since the
holonomy of $\psi_o$ is already assumed. Equations \eqref{eqn:Example2_LastMomentumCoord1}
are an algebraic condition, from which we conclude that the section $\psi_o$
must lie in a submanifold $\W_1$ that can be identified with the graph of
the extended Legendre-Ostrogradsky map, $\widetilde{\Leg}$.
The expression in natural coordinates
of this map $\widetilde{\Leg} \colon J^{3}\pi \to \Tan^*(J^{1}\pi)$
is obtained from Corollary \ref{corol:W1GraphExtendedFLSections}
and is
\begin{align}
\nonumber
&\Leg^*p^0_A = \frac{\alpha}{(q_1^i)^2\sqrt{g^3}} \left[ \left( (q_2^i)^2g + (q_1^i)^2(q_2^i)^2(q_1^iq_3^i) -
(q_1^i)^2(q_1^iq_2^i)(q_ 2^iq_3^i) \right)q_1^A\right] \\
\nonumber
&\qquad\qquad\quad + \frac{\alpha}{(q_1^i)^2\sqrt{g^3}} \left[ \left( ((q_1^i)^2)^2(q_2^iq_3^i) - (q_1^i)^2(q_1^iq_2^i)(q_1^iq_3^i) - 
(q_1^iq_2^i)g \right)q_2^A - (q_1^i)^2gq_3^A\right] \\ 
\label{eqn:Example2_ExtendedLegendreMap}
&\Leg^*p^1_A = \frac{\alpha}{(q^i_1)^2\sqrt{g}} \left[ (q^i_1)^2q_2^A - (q^i_1q^i_2)q^A_1 \right] \\
\nonumber
&\Leg^*p = \frac{\alpha}{(q_1^i)^2} \sqrt{g} + V(t,q_0^i)
- \left(\frac{\alpha}{(q_1^i)^2\sqrt{g^3}} \left[ \left( (q_2^i)^2g + (q_1^i)^2(q_2^i)^2(q_1^iq_3^i) -
(q_1^i)^2(q_1^iq_2^i)(q_ 2^iq_3^i) \right)q_1^A\right] \right. \\
\nonumber
&\qquad\qquad\quad \left .+ \frac{\alpha}{(q_1^i)^2\sqrt{g^3}} \left[ \left( ((q_1^i)^2)^2(q_2^iq_3^i) - (q_1^i)^2(q_1^iq_2^i)(q_1^iq_3^i) - (q_1^iq_2^i)g \right)q_2^A - (q_1^i)^2gq_3^A\right] \right)q_1^A \\
\nonumber
&\qquad \qquad \quad - \frac{\alpha}{(q^i_1)^2\sqrt{g}} \left[ (q^i_1)^2q_2^A - (q^i_1q^i_2)q^A_1 \right]q_2^A
\end{align}

Hence, the section $\psi_o \in \Gamma(\rho_\R^o)$ is holonomic and
lies in the submanifold $\W_1 \hookrightarrow \W_o$ defined by the
constraint functions given above, and its last components satisfy the
$2n$ differential equations
$$
\dot{p}_A^0 = \derpar{V}{q_0^A} \quad , \quad \dot{p}_A^1 = - p^0_A - 
\frac{\alpha}{((q_1^i)^2)^2\sqrt{g}}\left[ \left((q_1^i)^2(q_2^i)^2 - 2(q_1^iq_2^i)^2\right)q_1^A + 
(q_1^iq_2^i)(q_1^i)^2q_2^A \right] \, .
$$

Now we state the Lagrangian-Hamiltonian problem for vector fields, that is,
we wish to find a vector field $X_o \in \vf(\W_o)$ solution
to equations \eqref{eqn:DynEquationVectorFields}. If the vector field $X_o$
is locally given by
$$
X_o = f\derpar{}{t} + f_0^A \derpar{}{q_0^A} + f_1^A \derpar{}{q_1^A} + F_2^A \derpar{}{q_2^A} + 
F_3^A \derpar{}{q_3^A} + G_A^0 \derpar{}{p_A^0} + G_A^1 \derpar{}{p_A^1} \ .
$$
The equations \eqref{eqn:DynEquationVectorFields} lead
to the following $5n+1$ equations (the redundant equation
\eqref{eqn:LocalCoordRedundantEqVectorFields} is omitted):
\begin{align}
\label{eqn:Example2_Semispray1}
&f_0^A = f\cdot q_1^A \quad ; \quad f_1^A = f\cdot q_2^A \\
\label{eqn:Example2_DynamicalEquations}
&G_A^0 = \derpar{V}{q_0^A} \quad ; \quad G_A^1 = - p^0_A - 
\frac{\alpha}{((q_1^i)^2)^2\sqrt{g}}\left[ \left((q_1^i)^2(q_2^i)^2 - 2(q_1^iq_2^i)^2\right)q_1^A + 
(q_1^iq_2^i)(q_1^i)^2q_2^A \right] \\
\label{eqn:Example2_FixingGauge}
&f = 1 \\
\label{eqn:Example2_LastMomentumCoord2}
&f\cdot \left( p_A^1 - \frac{\alpha}{(q_1^i)^2\sqrt{g}}\left( (q^i_1)^2 q_2^A - (q_1^iq_2^i)q_1^A \right) \right) = 0
\end{align}
From equations \eqref{eqn:Example2_Semispray1} we obtain the condition
of semispray of type $2$ for the vector field $X_o$. In addition, equations
\eqref{eqn:Example2_LastMomentumCoord2} are algebraic relations
between the coordinates in $\W_o$ which give, in coordinates, the result
stated in Propositions \ref{prop:ExistSolDynEq} and \ref{prop:W1GraphFLVectorFields},
that is, the vector field $X_o$ is defined along a submanifold
$\W_1$ which we identify with the graph of the extended Legendre-Ostrogradsky.
Thus, using \eqref{eqn:Example2_Semispray1}, \eqref{eqn:Example2_DynamicalEquations}
and \eqref{eqn:Example2_FixingGauge}, the vector field $X_o$ is given locally by
\begin{equation}
\label{eqn:Example2_VectorFieldBeforeTangency}
X_o = \derpar{}{t} + q_1^A \derpar{}{q_0^A} + q_2^A \derpar{}{q_1^A} + F_2^A \derpar{}{q_2^A} + 
F_3^A \derpar{}{q_3^A} + \derpar{V}{q_0^A} \derpar{}{p_A^0} + G_A^1 \derpar{}{p_A^1} \ ,
\end{equation}
where the functions $G_A^1$ are determined by \eqref{eqn:Example2_DynamicalEquations}.
Since we wish to recover the solutions in the Lagrangian formalism from the vector
field $X_o$, we must require it to be a semispray of type $1$. This condition
reduces the set of vector fields $X_o \in \vf(\W_o)$ given by \eqref{eqn:Example2_VectorFieldBeforeTangency}
to the following ones
\begin{equation}
\label{eqn:Example2_SemisprayType1dBeforeTangency}
X_o = \derpar{}{t} + q_1^A \derpar{}{q_0^A} + q_2^A \derpar{}{q_1^A} + q_3^A \derpar{}{q_2^A} + 
F_3^A \derpar{}{q_3^A} + \derpar{V}{q_0^A} \derpar{}{p_A^0} + G_A^1 \derpar{}{p_A^1} \ .
\end{equation}
Notice that the functions $F_3^A$ are not determinated until
the tangency of the vector field $X_o$ on $\W_1$ is required.
From the expression in local coordinates \eqref{eqn:Example2_ExtendedLegendreMap}
of the map $\widetilde{\Leg}$, we obtain the primary constraints defining the
closed submanifold $\widetilde{\P} = {\rm Im}(\widetilde{\Leg}) \hookrightarrow \Tan^*(J^{1}\pi)$,
which are
\begin{equation}
\label{eqn:Example2_Constraints0}
\phi^{(0)}_1 \equiv p^1_iq_1^i = 0 \quad ; \quad
\phi^{(0)}_2 \equiv (p_i^1)^2 - \frac{\alpha^2}{(q^i_1)^2} = 0 \ ;
\end{equation}
Let $\Leg_o \colon J^{3}\pi \to \widetilde{\P}$. Then, the submanifold
$\W_1 = {\rm graph}{\Leg_o} = {\rm graph}{\widetilde{\Leg}}$ is defined by
$$
\W_1 = \left\{ w \in \W_o \colon \xi(w) = \xi_0^A(w) = \xi_1^A(w) = \phi_1^{(0)}(w) = \phi_2^{(1)}(w) = 0 \right\}
$$
where $\xi_r^A = p_A^r - \widetilde{\Leg}^*p_A^r$, $\xi = p - \widetilde{\Leg}^*p$.

Next, we compute the tangency condition for the vector field $X_o \in \vf(\W_o)$,
given locally by \eqref{eqn:Example2_SemisprayType1dBeforeTangency} on
the submanifold $\W_1 \hookrightarrow \W_o \hookrightarrow \W$,
by checking if the following identities hold
\begin{align}
\label{eqn:Example2_LagEquations1}
\restric{\Lie(X)\xi^A_0}{\W_1} = 0 \quad &; \quad \restric{\Lie(X)\xi^A_1}{\W_1} = 0 \\
\label{eqn:Example2_LieDerConstraints0}
\restric{\Lie(X)\phi^{(0)}_1}{\W_1} = 0  \quad &; \quad \restric{\Lie(X)\phi^{(0)}_2}{\W_1} = 0 \ .
\end{align}
As we have seen in Section \ref{subsubsection:DynEqVectFields}, equations
\eqref{eqn:Example2_LagEquations1} give us the Lagrangian equations for
the vector field $X_o$. However, equations \eqref{eqn:Example2_LieDerConstraints0}
do not hold, since
$$
\Lie(X_o)\phi^{(0)}_1 = \Lie(X_o)(p^1_iq_1^i) = - p_i^0q_1^i \quad , \quad
\Lie(X_o)\phi^{(0)}_2 = \Lie(X_o)((p^1_i)^2 - \alpha^2 / (q_1^i)^2) = - 2p_i^0p_i^1 \ ,
$$
and hence we obtain two first-generation secondary constraints
\begin{equation}
\label{eqn:Example2_Constraints1}
\phi^{(1)}_1 \equiv p_i^0 q_1^i = 0 \quad , \quad \phi^{(1)}_2 \equiv p_i^0p_i^1 = 0 
\end{equation}
that define a new submanifold $\W_2 \hookrightarrow \W_1$.
Now, by checking the tangency of the vector field $X_o$ to this
new submanifold, we obtain
$$
\Lie(X_o)\phi^{(1)}_1 = \Lie(X_o)(p_i^0q_1^i) = 0 \quad , \quad \Lie(X_o)\phi^{(1)}_2 = \Lie(X_o)(p_i^0p_i^1) = -(p^0_i)^2 \ ,
$$
and a second-generation secondary constraint appears,
\begin{equation}
\label{eqn:Example2_Constraints2}
\phi^{(2)} \equiv (p^0_i)^2 = 0 \ ,
\end{equation}
which defines a new submanifold $\W_3 \hookrightarrow \W_2$.
Finally, the tangency of the vector field $X_o$ on this submanifold
gives no new constraints, since
$$
\Lie(X_o)\phi^{(2)} = \Lie(X_o)((p^0_i)^2) = 0 \ .
$$
So we have two primary constraints \eqref{eqn:Example2_Constraints0},
two first-generation secondary constraints \eqref{eqn:Example2_Constraints1},
and a single second-generation secondary constraint \eqref{eqn:Example2_Constraints2}.
Notice that these five constraints only depend on $q_1^A$, $p^0_A$ and $p^1_A$, 
and so they are $\hat{\rho}_2^o$-projectable.

Notice that we still have to check \eqref{eqn:Example2_LagEquations1}.
As we have seen in Section \ref{subsubsection:DynEqVectFields}, we obtain the following equations
\begin{align}
\label{eqn:Example2_EulerLagrangeInitial}
&\left(F_3^B - d_T\left(q_3^B\right)\right)\derpars{\hat{L}}{q_2^B}{q_2^A} +  \derpar{\hat{L}}{q_0^A} - d_T\left(\derpar{\hat{L}}{q_1^A}\right) +  d_T^2\left(\derpar{\hat{L}}{q_2^A}\right) +  \left(F_2^B - q_3^B\right)d_T\left(\derpars{\hat{L}}{q_2^B}{q_2^A}\right) = 0 \\
\label{eqn:Example2_SemisprayType1Condition}
&\left(F_2^B - q_3^B\right)\derpars{\hat{L}}{q_2^B}{q_2^A} = 0 
\end{align}
Since we have already required the vector field $X_o$ to be a semispray of type $1$ in $\W_o$,
equations \eqref{eqn:Example2_SemisprayType1Condition} are satisfied identically
and equations \eqref{eqn:Example2_EulerLagrangeInitial} become
\begin{equation}
\label{eqn:Example2_EulerLagrangeFinal}
\left(F_3^B - d_T\left(q_3^B\right)\right) \derpars{\hat{L}}{q_2^B}{q_2^A} +
\derpar{\hat{L}}{q_0^A} - d_T\left(\derpar{\hat{L}}{q_1^A}\right) + d_T^2\left(\derpar{\hat{L}}{q_2^A}\right) = 0 \ .
\end{equation}
A long calculation shows that this equation is compatible if, and only if,
$\derpar{V}{q_0^A} = 0$, for
$1 \leqslant A \leqslant n$. That is, we have $n$ first-generation secondary
constraints arising from the tangency
condition of $X_o$ along $\W_1$, thus defining a new submanifold
$\W_4 \hookrightarrow \W_3$ with constraint functions
$$
\phi^{(1)}_{3,A} \equiv \derpar{V}{q_0^A} = 0 \quad \mbox{for } 1 \leqslant A \leqslant n \, .
$$
Observe that, since $V$ is a function that depends only on $t$ and $q_0^A$,
these new constraints also depend only on the coordinates $t$ and $q_0^A$,
and thus they are $\hat{\rho}_2^o$-projectable.
From a physical viewpoint, these constraints
mean that the dynamics of the particle can take place on every level set of
the potential with respect to the position coordinates.

Finally, we recover the Lagrangian and Hamiltonian dynamics from
the unified formalism. For the Lagrangian solutions,
using Proposition \ref{prop:LagrangianSections}, we know that
from the holonomic section $\psi_o \in \Gamma(\rho_\R^o)$ we can recover
a holonomic section $\psi_\Lag = \rho_1^o \circ \psi_o \in \Gamma(\bar{\pi}^3)$
solution to equation \eqref{eqn:LagrangianDynEqSections}. In particular, if
$\psi_o(t) = (t,q_0^i(t),q_1^i(t),q_2^i(t),q_3^i(t),p_i^0(t),p_i^1(t))$, then
$\psi_\Lag(t) = (t,q_0^i(t),q_1^i(t),q_2^i(t),q_3^i(t))$ is a holonomic section
solution to equations \eqref{eqn:Example2_DiffEquations}. Now, bearing in mind the
local expression \eqref{eqn:Example2_ExtendedLegendreMap} of the
extendend Legendre-Ostrogradsky map, equations \eqref{eqn:Example2_DiffEquations}
give the last $n$ equations of the holonomy condition for $\psi_\Lag$, which
are identically satisfied since the holonomy condition has been already required,
and the classical higher-order Euler-Lagrange equations
$$
\restric{\derpar{L}{q_0^A}}{\psi_\Lag} - \restric{\frac{\d}{\d t}\derpar{L}{q_1^A}}{\psi_\Lag}
+ \restric{\frac{\d^2}{\d t^2}\derpar{L}{q_2^A}}{\psi_\Lag} = 0 \, .
$$

For the Lagrangian vector field, from Lemma \ref{lemma:CorrepondenceXoXL} and
Theorem \ref{thm:Unified-LagrangianSolutions}, we can recover from the semispray
of type $1$ $X_o \in \vf(\W_o)$ a semispray of type $1$, $X_\Lag \in \vf(J^3\pi)$,
solution to equations \eqref{eqn:LagrangianDynEqVectorFieldsSupportMf}
(with $M_f = \rho_1^o(\W_4)$), and it is locally given by
$$
X_o = \derpar{}{t} + q_1^A \derpar{}{q_0^A} + q_2^A \derpar{}{q_1^A} + q_3^A \derpar{}{q_2^A} + 
F_3^A \derpar{}{q_3^A} \ ,
$$
where $F_3^A$ are the solutions of equations \eqref{eqn:Example2_EulerLagrangeFinal}.

One can check that, if the semispray condition is not required at the beginning
and we perform all this procedure with the vector field given by \eqref{eqn:Example2_VectorFieldBeforeTangency},
the final result is the same. This means that, in this case,
the semispray condition does not give any additional constraint.

Now, since $\Lag$ is an almost-regular Lagrangian density, for the Hamiltonian
dynamics we must use the results stated in Section \ref{subsection:HamiltonianSingularCase}
and recover the Hamiltonian solutions passing through the Lagrangian formalism.
For the Hamiltonian sections, by Proposition \ref{prop:HamiltonianSectionsSingular},
from a section $\psi_o \in \Gamma(\rho_\R^o)$ solution to equation \eqref{eqn:DynEquationSections},
we can recover a section $\psi_h = \Leg \circ \rho_1^o \circ \psi_o \in \Gamma(\bar{\tau}_o)$
solution to the equation \eqref{eqn:HamiltonianSingularDynEqSections}.

For the Hamiltonian vector fields, we know that there are semisprays of type $1$
$X_\Lag \in \vf(J^{3}\pi)$, solutions to equations \eqref{eqn:LagrangianDynEqVectorFieldsSupportMf},
which are $\Leg_o$-projectable on $P_4 = \hat{\rho}_2^o(\W_4)$, tangent to $P_4$ and solutions to the Hamilton
equation.

\section{Conclusions and outlook}
\label{section:outlook}

The objective of this work is to develop a complete and
detailed geometric framework 
for describing the Lagrangian and Hamiltonian formalisms of
higher-order non-autonomous mechanical systems,
and to give some applications of it.

Our approach to the problem consists in extending the
Lagrangian-Hamiltonian unified formalism of Skinner and Rusk
to this case, starting from the generalization of this formalism
previously made for first-order non-autonomous dynamical systems
\cite{art:Barbero_Echeverria_Martin_Munoz_Roman08} and
higher-order autonomous mechanical systems
\cite{art:Prieto_Roman11}.
This enables us to derive
the Lagrangian and the Hamiltonian formalisms for these kinds of systems (in a natural way).
We pay special attention to describing the equations 
of motion in several equivalent ways, using sections and vector fields
in the bundles that constitute the phase spaces
of these systems, and showing how the equivalence between
the Lagrangian and the Hamiltonian formalisms is stated through the
Legendre-Ostrogradsky map, which is also obtained in a natural
way from the unified formalism.
Our analysis is performed both for regular and singular systems.

As applications of our formalism, we study
two physical examples:
a regular system describing the shape of a deformed elastic cylindrical
beam with fixed ends, and a singular system 
describing a relativistic particle
subjected to a generic potential depending on time and positions.

The background geometrical tools that we use are
higher-order jet bundles, in general, rather than the simpler
and more usual trivial bundles
(this particular case is also analyzed in the work),
since our aim is for this geometric framework to serve as
a guideline towards a geometric model for higher-order field theories,
which is free of the ambiguities present in their standard 
geometrical descriptions
(concerning the definition of the Poincar\'e-Cartan forms
and the Legendre transformation).
Some advances on this subject have been already obtained
\cite{art:Campos_DeLeon_Martin_Vankerschaver09,art:Vitagliano10},
and we trust that our future work will contribute to completing them.


\appendix
\section{A particular situation: trivial bundles}
\label{section:particular_situation}

Assume that the bundle $E \stackrel{\pi}{\longrightarrow}\R$
is trivial; that is, $E = \R \times Q$,
where $Q$ is a $n$-dimensional manifold.
In this case, we have that $J^{k}\pi \cong \R \times \Tan^{k}Q$,
where $\Tan^{k}Q$ is the $k$th order tangent bundle of $Q$
(see \cite{book:DeLeon_Rodrigues85} for details). The natural coordinates
in this case are defined in the same way as in the general case,
and are denoted by $(t,q_0^A,q_1^A,\ldots,q_k^A)$.
In this case, the bundles involved in the construction are
$$
J^{2k-1}\pi \cong \R \times \Tan^{2k-1} Q \quad , \quad
\Tan^*(J^{k-1}\pi) \cong \R \times \R^* \times \Tan^*(\Tan^{k-1}Q) \quad , \quad
J^{k-1}\pi^* \cong \R \times \Tan^*(\Tan^{k-1}Q)
$$
Thus, the higher-order restricted jet-momentum bundle is
\begin{align*}
\W_r = J^{2k-1}\pi \times_{J^{k-1}\pi} J^{k-1}\pi^* &\cong
(\R \times \Tan^{2k-1} Q) \times_{\R \times \Tan^{k-1}Q} (\R \times \Tan^*(\Tan^{k-1}Q)) \\
& \cong \R \times \Tan^{2k-1}Q \times_{\Tan^{k-1}Q} \Tan^*(\Tan^{k-1}Q) \cong \R \times \W_a \, ,
\end{align*}
where $\W_a = \Tan^{2k-1}Q \times_{\Tan^{k-1}Q} \Tan^*(\Tan^{k-1}Q)$
denotes the unified phase space in the autonomous formalism.
Natural coordinates in this bundle are the same as in the
non-autonomous case, that is,
$(t,q_0^A,q_1^A,\ldots,q_{2k-1}^A,p_A^0,p_A^1,\ldots,p_A^{k-1})$.

\textbf{Remark}: As we will see, in this particular situation the
extended jet-momentum bundle is not needed.
Thus, we denote $\W_r$ simply by $\W$ in this section.
The differential forms $\Theta_r$ and $\Omega_r$ (or, equivalently,
$\Theta_o$ and $\Omega_o$) are also denoted by $\Theta$ and $\Omega$,
respectively.

We have the following diagram
$$
\xymatrix{
\W \ar[rr]^-{\rho_{\W_a}} \ar[dd]_-{\rho_\R} & \ & \W_a \ar[dl]_{\pr_1} \ar[dr]^{\pr_2} & \ \\
\ & \Tan^{2k-1}Q \ar[dr]_{\rho^{2k-1}_{k-1}} & \ & \Tan^*(\Tan^{k-1}Q) \ar[dl]^{\pi_{\Tan^{k-1}Q}} \\
\R & \ & \Tan^{k-1}Q \ar[d]^{\beta^{k-1}} & \ \\
\ & \ & Q & \ \\
}
$$
where all the maps are the natural projections
(see \cite{art:Prieto_Roman11} for details). In coordinates, we have
\begin{align*}
&\rho_\R(t,q_i^A,q_j^A,p_A^i) = t \quad , \quad \rho_{\W_a}(t,q_i^A,q_j^A,p_A^i) = (q_i^A,q_j^A,p_A^i) \\
&\pr_1(q_i^A,q_j^A,p_A^i) = (q_i^A,q_j^A) \quad , \quad \pr_2(q_i^A,q_j^A,p_A^i) = (q_i^A,p_A^i)
\end{align*}

Now we see how to construct the canonical structures in $\W$,
described previously, from the canonical structures in $\W_a$.
Let $\theta_a \in \df^{1}(\W_a)$ and $\Omega_a \in\df^{2}(\W_a)$ be
the canonical forms on $\W_a$ defined as
$$
\theta_a = \pr_2^*\theta_{k-1} \quad , \quad
\Omega_a = \pr_2^*\omega_{k-1} = -\d\theta_a \, ,
$$
where $\theta_{k-1}$ and $\omega_{k-1}$ are the canonical
$1$ and $2$ forms on the cotangent bundle $\Tan^*(\Tan^{k-1}Q)$.

As stated before, the dynamics of the system is described by a
Lagrangian density $\Lag \in \df^{1}(\R \times \Tan^kQ)$, with
associated Lagrangian function $L \in \Cinfty(\R \times \Tan^{k}Q)$.
Then, if $C \in \Cinfty(\W_a)$ is the coupling function in the
autonomous formalism \cite{art:Prieto_Roman11}, we can construct a
globally defined Hamiltonian function
in the following way:
$$
H = \rho_{\W_a}^*C - L \, .
$$
Then, the forms $\Theta \in \df^{1}(\W)$ and $\Omega \in \df^{2}(\W)$
can be constructed as follows
$$
\Theta = \rho_{\W_a}^*\theta_a - H\rho_\R^*\eta \quad , \quad
\Omega = -\d\Theta = \rho_{\W_a}^*\Omega_a + \d H \wedge \rho_\R^*\eta \ .
$$

In local coordinates, bearing in mind the local expressions of $\theta_{k-1}$,
$\omega_{k-1}$ and $\C$:
$$
\theta_{k-1} = p_A^i\d q_i^A \quad , \quad
\omega_{k-1} = \d q_i^A \wedge \d p_A^i \quad , \quad
\C = p_A^iq_{i+1}^A \, ,
$$
we have that the local expression for the forms $\Theta$ and $\Omega$ are
$$
\Theta = p_A^i\d q_i^A - (p_A^iq_{i+1}^A - L)\d t \quad , \quad
\Omega = \d q_i^A \wedge \d p_A^i + \d(p_A^iq_{i+1}^A - L) \wedge \d t \, ;
$$
that is, we obtain the local expressions given in \eqref{eqn:LocalCoordFormsWo}.

The dynamical equations for sections and vector fields are now stated
as in Section \ref{section:SkinnerRuskform}, and the local expressions are
the same. There is only one difference: in Proposition \ref{prop:ExistSolDynEq}
a connection in $\W$ is needed in order to split the presymplectic form $\Omega$
into the sum of a $2$-form with
the wedge product of two $1$-forms (which are the differential of the
local Hamiltonian function, and the volume form in $\R$).
In this case, we do not need to use such a connection, since the bundles
are trivial and this splitting is natural.


\section*{Acknowledgments}

We acknowledge the financial support of the  {\sl
Ministerio de Ciencia e Innovaci\'on} (Spain), projects
MTM2008-00689, MTM2010-12116-E, MTM2011-22585,
and AGAUR, project 2009 SGR:1338.
One of us (PDPM) wants to thank the UPC for a Ph.D grant.
Special thanks to Prof. X. Gr\`acia for drawing our attention on
the example studied in Section \ref{subsection:CylindricalBeam},
and to Elisa Guzm\'an for her helpful comments.
We also thank Mr. Jeff Palmer for his assistance in preparing the English
version of the manuscript.


{\small
\bibliography{Bibliografia}

\providecommand{\bysame}{\leavevmode\hbox to3em{\hrulefill}\thinspace}
\providecommand{\MR}{\relax\ifhmode\unskip\space\fi MR }
\providecommand{\MRhref}[2]{%
  \href{http://www.ams.org/mathscinet-getitem?mr=#1}{#2}
}
\providecommand{\href}[2]{#2}
\begin{thebibliography}{10}

\bibitem{art:Aldaya_Azcarraga78_2}
V.~{Aldaya} and J.A. {de Azcarraga}, ``Variational {Principles} on r-th order
  jets of fibre bundles in {Field} {Theory}'', \textit{J. Math. Phys.}
  \textbf{19}(9)  (1978) 1869.

\bibitem{art:Aldaya_Azcarraga80}
V.~{Aldaya} and J.A. {de Azcarraga}, ``Higher order {Hamiltonian} formalism in
  {Field} {Theory}'', \textit{J. Phys. A} \textbf{13}(8)  (1980) 2545--2551.

\bibitem{art:Banerjee_Mukherjee_Paul10}
R.~{Banerjee}, P.~{Mukherjee}, and B.~{Paul}, ``Gauge symmetry and {W}-algebra
  in higher derivative systems'', \textit{JHEP} \textbf{08} (2011) 085.

\bibitem{art:Barbero_Echeverria_Martin_Munoz_Roman07}
M.~{Barbero Li\~{n}\'{a}n}, A.~{Echeverr\'{\i}a-Enr\'{\i}quez}, D.~{Mart\'{\i}n
  de Diego}, M.C. {Mu\~{n}oz-Lecanda}, and N.~{Rom\'an Roy}, ``Skinner-rusk
  unified formalism for optimal control systems and applications'', \textit{J.
  Math. Phys. A: Math. Theor.} \textbf{40}(40)  (2007) 12071--12093.

\bibitem{art:Barbero_Echeverria_Martin_Munoz_Roman08}
M.~{Barbero Li\~{n}\'{a}n}, A.~{Echeverr\'{\i}a-Enr\'{\i}quez}, D.~{Mart\'{\i}n
  de Diego}, M.C. {Mu\~{n}oz-Lecanda}, and N.~{Rom\'an-Roy}, ``Unified
  formalism for non-autonomous mechanical systems'', \textit{J. Math. Phys.}
  \textbf{49}(6)  (2008) 062902.

\bibitem{art:Barcelos_Natividade91_2}
J.~{Barcelos-Neto} and C.P. {Natividade}, ``Hamiltonian {Path} {Integral}
  {Formalism} with {Higher} {Derivatives}'', \textit{Zeits. Phys. C: Parts.
  Fields} \textbf{51}(2)  (1991) 313--319.

\bibitem{art:Batlle_Gomis_Pons_Roman88}
C.~{Batlle}, J.~{Gomis}, J.M. {Pons}, and N.~{Rom\'{a}n-Roy}, ``Lagrangian and
  {Hamiltonian} constraints for second-order singular {Lagrangians}'',
  \textit{J. Phys. A: Math. Gen.} \textbf{21}(12)  (1988) 2693--2703.

\bibitem{art:Belvedere_Amaral_Lemos_Carvalhaes00}
L.V. {Belvedere}, R.L.P.L. {Amaral}, N.A. {Lemos}, and C.G. {Carvalhaes},
  ``Higher-{Derivative} 2 {Dimensional} {Massive} {Fermion} {Theories}'',
  \textit{Int. J. Mod. Phys A} \textbf{15}(15)  (2000) 2237--2254.

\bibitem{book:Benson06}
D.J. {Benson}, \emph{Music: {A} {Mathematical} {Offering}}, , Cambridge
  University Press,  2006.

\bibitem{art:Campos_DeLeon_Martin_Vankerschaver09}
C.M. {Campos}, M.~{de Le\'on}, D.~{Mart\'{\i}n de Diego}, and K.~Vankerschaver,
  ``Unambigous formalism for higher-order lagrangian field theories'',
  \textit{J. Phys A: Math Theor.} \textbf{42} (2009) 475207.

\bibitem{proc:Cantrijn_Crampin_Sarlet86}
F.~{Cantrijn}, M.~{Crampin}, and W.~{Sarlet}, ``Higher-order differential
  equations and higher-order {Lagrangian} mechanics'', \textit{Math. Proc.
  Cambridge Philos. Soc.} \textbf{99}(3)  (1986) 565--587.

\bibitem{art:Carinena_Lopez92}
J.F. {Cari\~{n}ena} and C.~{L\'{o}pez}, ``The time-evolution operator for
  higher-order singular {Lagrangians}'', \textit{Internat. J. Modern Phys. A}
  \textbf{7}(11)  (1992) 2447--2468.

\bibitem{art:Chinea_deLeon_Marrero94}
D.~{Chinea}, M.~{de Le\'{o}n}, and J.C. {Marrero}, ``The constraint algorithm
  for time-dependent {Lagrangians}'', \textit{J. Math. Phys.} \textbf{35}(7)
  (1994) 3410--3447.

\bibitem{art:Cortes_Martinez_Cantrijn02}
J.~{Cort\'{e}s}, S.~{Mart\'{\i}nez}, and F.~{Cantrijn}, ``Skinner-{Rusk}
  approach to time-dependent mechanics'', \textit{Physics Letters A}
  \textbf{300} (2002) 250--258.

\bibitem{art:Crasmareanu00}
M.~{Cr\^{a}\c{s}m\v{a}reanu}, ``Noether theorem for time-dependent higher order
  {Lagrangians}'', \textit{Proc. Rom. Acad. Ser. A Math. Phys. Tech. Sci. Inf.
  Sci.} \textbf{1}(2)  (2000) 77--79.

\bibitem{unpub:Crampin_Saunders11}
M.~{Crampin} and D.J. {Saunders}, ``Homogeneity and projective equivalence of
  differential equation fields'', {arXiv}:1109.3640 [math.DG],  2011.

\bibitem{art:DeLeon_Marin_Marrero96}
M.~{de Le\'on}, J.~{Mar\'{\i}n Solano}, and J.C. {Marrero}, ``The constraint
  algorithm in the jet formalism'', \textit{Diff. Geom. Appl.} \textbf{6}(3)
  (1996) 275--300.

\bibitem{art:deLeon_Marin_Marrero_Munoz_Roman02}
M.~{de Le\'on}, J.~{Mar\'{\i}n-Solano}, J.C. {Marrero}, M.C.
  {Mu\~{n}oz-Lecanda}, and N.~{Rom\'{a}n-Roy}, ``Singular {Lagrangian} systems
  on jet bundles'', \textit{Fortschr. Phys.} \textbf{50} (2002) 105--169.

\bibitem{proc:DeLeon_Marrero92}
M.~{de Le\'on} and J.C. {Marrero}, ``Degenerate time-dependent {Lagrangians} of
  second order: the fourth order differential equation problem'', Proc. Conf.
  Opava, Silesian Univ. Opava\textit{Math. Publ.} \textbf{1} (1992) 497--508.

\bibitem{art:deLeon_Martin94}
M.~{de Le\'{o}n} and D.~{Mart\'{\i}n de Diego}, ``Classification of symmetries
  for higher-order {Lagrangian} systems {II}: the non-autonomous case'',
  \textit{Extracta Mathematicae} \textbf{9}(2)  (1994) 111--114.

\bibitem{art:DeLeon_Martin_Santamaria04}
M.~{de Le\'on}, D.~{Mart\'{\i}n de Diego}, and A.~{Santamar\'{\i}a Merino},
  ``Geometric numerical integration of nonholonomic systems and optimal control
  problems'', \textit{Eur. J. Control} \textbf{10}(5)  (2004) 515--524.

\bibitem{book:DeLeon_Rodrigues85}
M.~{de Le\'on} and P.R. {Rodrigues}, \emph{Generalized classical mechanics and
  field theory}, North-Holland Math. Studies, vol. 112, Elsevier Science
  Publishers B.V., Amsterdam 1985.

\bibitem{proc:deLeon_Rodrigues87}
M.~{de Le\'{o}n} and P.R. {Rodrigues}, ``Higher-order almost tangent geometry
  and non-autonomous {Lagrangian} dynamics'', Proc. Winter School on Geometry
  and Physics (Srn\'{\i},1987),\textit{Rend. Circ. Mat. Palermo} \textbf{2}(16)
   (1987) 157--171.

\bibitem{art:Echeverria_Lopez_Marin_Munoz_Roman04}
A.~{Echeverr\'{\i}a-Enr\'{\i}quez}, C.~{L\'{o}pez}, J.~{Mar\'{\i}n-Solano},
  M.C. {Mu\~{n}oz-Lecanda}, and N.~{Rom\'{a}n-Roy}, ``Lagrangian-{Hamiltonian}
  unified formalism for field theory'', \textit{J. Math. Phys.} \textbf{45}(1)
  (2004) 360--380.

\bibitem{book:Elsgoltz83}
L.~{Elsgoltz}, \emph{Differential equations and the calculus of variations},
  3rd ed., Mir Publishers,  1983.

\bibitem{proc:Garcia_Munoz83}
P.L. {Garc\'{\i}a} and J.~{Mu\~{n}oz}, ``On the geometrical structure of higher
  order variational calculus'', Procs. IUTAM-ISIMM Symposium on Modern
  Developments in Analytical Mechanics, Vol. I (Torino, 1982). \,\textit{Atti.
  Accad. Sci. Torino Cl. Sci. Fis. Math. Natur.} \textbf{117} (1983) suppl. 1,
  127--147.

\bibitem{art:Govaerts92}
J.~{Govaerts}, ``A {Quantum} {Anomaly} for {Rigid} {Particles}'', \textit{Phys.
  Lett. B} \textbf{293}(3-4)  (1992) 327--334.

\bibitem{art:Gracia_Pons_Roman91}
X.~{Gr\`{a}cia}, J.M. {Pons}, and N.~{Rom\'{a}n-Roy}, ``Higher-order
  {Lagrangian} systems: {Geometric} structures, dynamics and constraints'',
  \textit{J. Math. Phys.} \textbf{32}(10)  (1991) 2744--2763.

\bibitem{art:Gracia_Pons_Roman92}
X.~{Gr\`{a}cia}, J.M. {Pons}, and N.~{Rom\'{a}n-Roy}, ``Higher-order conditions
  for singular {Lagrangian} systems'', \textit{J. Phys. A: Math. Gen.}
  \textbf{25} (1992) 1981--2004.

\bibitem{art:Krupkova96}
O.~{Krupkova}, ``Symmetries and first integrals of time-dependent higher-order
  constrained systems'', \textit{J. Geom. Phys.} \textbf{18}(1)  (1996) 38--58.

\bibitem{art:Krupkova00}
O.~{Krupkova}, ``Higher-order mechanical systems with constraints'', \textit{J.
  Math. Phys.} \textbf{41}(8)  (2000) 5304.5324.

\bibitem{art:Kuznetsov_Plyushchay94}
Y.A. {Kuznetsov} and M.S. {Plyushchay}, ``(2+1)-dimensional {Models} of
  {Relativistic}-{Particles} with {Curvature} and {Torsion}'', \textit{J. Math.
  Phys.} \textbf{35}(6)  (1994) 2772--2784.

\bibitem{unpub:Mukherjee_Paul11}
P.~{Mukherjee} and B.~{Paul}, ``Gauge invariances of higher derivative
  {Maxwell}-{Chern}-{Simons} field theories - a new {Hamiltonian} approach'',
  {arXiv}:1111.0153v1 [hep-th],  2011.

\bibitem{art:Nesterenko89}
V.V. {Nesterenko}, ``Singular {Lagrangians} with higher-order derivatives'',
  \textit{J. Phys. A: Math. Gen.} \textbf{22}(10)  (1989) 1673--1687.

\bibitem{art:Pisarski86}
R.D. {Pisarski}, ``Field theory of paths with a curvature-dependent term'',
  \textit{Phys. Rev. D} \textbf{34}(2)  (1986) 670--673.

\bibitem{art:Plyushchay88}
M.S. {Plyushchay}, ``Canonical quantization and mass spectrum of relativistic
  particle: analog of relativistic string with rigidity'', \textit{Mod. Phys.
  Lett. A} \textbf{3}(13)  (1988) 1299--1308.

\bibitem{art:Plyushchay91}
M.S. {Plyushchay}, ``The model of relativistic particle with torsion'',
  \textit{Nucl. Phys. B} \textbf{362} (1991) 54--72.

\bibitem{art:Popescu_Popescu07}
P.~{Popescu} and M.~{Popescu}, ``Affine {Hamiltonians} in {Higher} {Order}
  {Geometry}'', \textit{Int. J. Theor. Phys.} \textbf{46} (2007) 2531--2549.

\bibitem{art:Prieto_Roman11}
P.D. {Prieto-Mart\'{\i}nez} and N.~{Rom\'{a}n-Roy},
  ``{Lagrangian}-{Hamiltonian} unified formalism for autonomous higher-order
  dynamical systems'', \textit{J. Phys. A: Math. Teor.} \textbf{44}(38)  (2011)
  385203.

\bibitem{art:Ramos_Roca95}
E.~{Ramos} and J.~{Roca}, ``W-symmetry and the {Rigid} {Particle}'',
  \textit{Nuc. Phys. B} \textbf{436}(3)  (1995) 529--541.

\bibitem{art:Roman09}
N.~{Rom\'{a}n-Roy}, ``Multisymplectic {Lagrangian} and {Hamiltonian} formalisms
  of classical field theories'', \textit{Symmetry Integrability Geom. Methods
  Appl. (SIGMA)} \textbf{5} (2009) Paper 100, 25 pp.

\bibitem{art:Saunders87}
D.J. {Saunders}, ``An alternative approach to the {Cartan} form in {Lagrangian}
  field theories'', \textit{J. Phys. A: Math. Gen.} \textbf{20} (1987)
  339--349.

\bibitem{book:Saunders89}
D.J. {Saunders}, \emph{The geometry of jet bundles}, London Mathematical
  Society, Lecture notes series, vol. 142, Cambridge University Press,
  Cambridge, New York 1989.

\bibitem{art:Saunders_Crampin90}
D.J. {Saunders} and M.~{Crampin}, ``On the {Legendre} map in higher-order field
  theories'', \textit{J. Phys. A: Math. Gen.} \textbf{23} (1990) 3169--3182.

\bibitem{art:Skinner_Rusk83}
R.~{Skinner} and R.~{Rusk}, ``Generalized {Hamiltonian} dynamics. {I}.
  {Formulation} on ${T^*Q} \oplus {TQ}$'', \textit{J. Math. Phys.}
  \textbf{24}(11)  (1983) 2589--2594.

\bibitem{art:Vitagliano10}
L.~{Vitagliano}, ``The {Lagrangian}-{Hamiltonian} {Formalism} for {Higher}
  {Order} {Field} {Theories}'', \textit{J. Geom. Phys.} \textbf{60} (2010)
  857--873.

\bibitem{art:Zloshchastiev00}
K.G. {Zloshchastiev}, ``Field-to-{Particle} {Transition} based on the
  zero-brane approach to quantization of multiscalar field theories and its
  applications for {Jackiw}-{Teitelboim} gravity'', \textit{Phys. Rev. D}
  \textbf{61}(12)  (2000) 5017.

\end{thebibliography}
\bibliographystyle{AMS_Mod}
}

\end{document}